 \declaretheorem[name=Theorem, numberwithin=section]{theorem}
 \declaretheorem[name=Lemma, sibling=theorem]{lemma}
 \declaretheorem[name=Corollary, sibling=theorem]{corollary}
 \declaretheorem[name=Claim, sibling=theorem]{claim}
\newcommand{\Pref}[1]{\hyperref[#1]{P\ref*{#1}}}
\newcommand{\N}{\mathbb{N}}
\newcommand{\R}{\mathbb{R}}
\newcommand{\cB}{\mathcal{B}}
\newcommand{\cD}{\mathcal{D}}
\newcommand{\cK}{\mathcal{K}}
\newcommand{\cP}{\mathcal{P}}
\newcommand{\cY}{\mathcal{Y}}
\newcommand{\cZ}{\mathcal{Z}}
\def\cqedsymbol{\ifmmode$\lrcorner$\else{\unskip\nobreak\hfil
\penalty50\hskip1em\null\nobreak\hfil$\lrcorner$
\parfillskip=0pt\finalhyphendemerits=0\endgraf}\fi}
\newcommand{\ceil}[1]{\left\lceil#1\right\rceil}
\newcommand{\intv}[2]{\left \{ #1, \dots, #2 \right \}}
\newcommand{\cov}{\textsc{Cover}}
\newcommand{\trh}{\textsc{Triangle Hitting}}
\newcommand{\ann}{\textsc{Ann.}}
\newcommand{\eps}{\varepsilon}
\DeclareMathOperator{\ext}{\sf Petals}
\DeclareMathOperator{\tw}{\bf tw}
\title{Kick the cliques}
\author[G.~Berthe]{Gaétan Berthe}
\author[M.~Bougeret]{Marin Bougeret}
\author[D.~Gonçalves]{Daniel Gonçalves}
\address[G.~Berthe, M.~Bougeret, D.~Gonçalves]{LIRMM, Université de Montpellier, CNRS, Montpellier, France.}
\author[J.-F.~Raymond]{Jean-Florent Raymond}
\address[J.-F.~Raymond]{CNRS, LIP, ENS de Lyon, France.}
\email{jean-florent.raymond@cnrs.fr}
\date{\today}
\begin{document}
\maketitle

\begin{abstract}
In the $K_r$-\textsc{Cover} problem, given a graph $G$ and an integer $k$ one has to decide if there exists a set of at most $k$ vertices
whose removal destroys all $r$-cliques of $G$.

In this paper we give an algorithm for $K_r$-\textsc{Cover} that runs in subexponential FPT time on graph classes satisfying two simple conditions related to cliques and treewidth.
As an application we show that our algorithm solves $K_r$-\textsc{Cover} in time
\begin{itemize}
    \item $2^{O_r\left (k^{(r+1)/(r+2)}\log k \right)} \cdot n^{O_r(1)}$ in pseudo-disk graphs and map-graphs;
    \item $2^{O_{t,r}(k^{2/3}\log k)} \cdot n^{O_r(1)}$ in $K_{t,t}$-subgraph-free string graphs; and
    \item $2^{O_{H,r}(k^{2/3}\log k)} \cdot n^{O_r(1)}$ in $H$-minor-free graphs.
\end{itemize}
\end{abstract}

\section{Introduction}

In the $K_r$-\textsc{Cover} problem, given a graph $G$ and an integer $k$ one has to decide if there are $k$ vertices in $G$ whose deletion yields a $K_r$-free graph. This problem falls within the general family of (implicit) covering problems (also called hitting problems) and encompasses several extensively studied problems 
such as the case $r=2$ better known under the name \textsc{Vertex Cover} and the case $r=3$ that we usually refer to as \trh{}.
Already for these small values the problem is NP-complete.

In this paper we are interested in \emph{subexponential parameterized algorithms} for $K_r$-\cov{}, i.e., algorithms that run in \emph{Fixed parameter Tractable (FPT)} time (that is, time $f(k)\cdot n^{O(1)}$ for some computable function $f$) and where additionally the contribution of the parameter $k$ is subexponential, in other words $f(k) \in 2^{o(k)}$.
Under the Exponential Time Hypothesis of Impagliazzo and Paturi \cite{impagliazzo2001complexity}, such algorithms do not exist in general and so we have to focus on particular graph classes. 

Historically, subexponential graph algorithms were first obtained for specific problems in sparse graph classes such as planar graphs.
The techniques used have then been unified and extended by Demaine,  Hajiaghayi, and Thilikos in the meta-algorithmic theory of \emph{bidimensionality}~\cite{demaine2005subexponential}, which provides a generic machinery to solve a wide range of problems in subexponential FPT time on $H$-minor free graphs.
Initially bidimensionality was defined for graph classes with some ``flatness'' property similar to planar graphs, typically graphs of bounded genus and $H$-minor-free graphs. Over the years, the theory saw several improvements and extensions in order to deal with different settings like map graphs and other classes of intersection graphs, which are initially not sparse as they contain large cliques for example, but where we can branch in subexponential-time to reduce to sparse instances (see for instance the bibliography cited in~\cite{berthe24}). 
However, despite its generality, bidimensionality can only handle the so-called \emph{bidimensional problems} where, informally, as soon as the instance $(G,k)$ 
 contains a large $t\times t$ grid as a minor (for $t \in o(k)$, typically $t=\sqrt{k}$), we know that $(G,k)$ is necessarily a no-instance (or yes-instance depending on the problem).
 This is the case of \textsc{Vertex Cover} but unfortunately not of \trh{} (as grids are triangle-free) and more generally not of $K_r$-\cov{} for $r\geq 3$.

The focus of this paper is on this blind spot: subexponential FPT algorithms for a problem that is not bidimensional, namely $K_r$-\cov{}.
About this problem, we note that using arguments developed in the context of approximation~\cite{fomin2011bidimensionality}, the following subexponential FPT algorithm can be obtained for apex-minor free graphs (which are sparse).

\begin{theorem}[from \cite{fomin2011bidimensionality}]\label{th:apexfree}
For every apex\footnote{A graph is \emph{apex} if the deletion of some vertex yields a planar graph.} graph $H$ and every $r\in \N$ there is an algorithm solving $K_r$-\cov{}\footnote{Actually the statement also applies to $F$-\cov{} for any $F$.} in $H$-minor-free graphs in time 
$
2^{O_{H,r}\left (\sqrt{k} \right )} \cdot n^{O_{r}(1)}.
$
\end{theorem}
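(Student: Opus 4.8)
The plan is to run a win/win on the treewidth of $G$, along the lines of the bidimensionality-style arguments of \cite{fomin2011bidimensionality}. Fix a constant $c=c(H,r)$, to be tuned, and first compute in polynomial time (using that $G$ is $H$-minor-free) a tree decomposition of $G$ of width $w=O(\tw(G))$.

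\emph{Small treewidth.} If $w\le c\sqrt k$ we solve $K_r$-\cov{} exactly by dynamic programming along the tree decomposition. The only observation needed is that every $K_r$ of $G$ is a clique and hence entirely contained in a single bag, so ``being $K_r$-free'' is a local condition; the table at a node therefore only needs to record which of the at most $w+1$ bag vertices are taken into the deletion set, and the dynamic program runs in time $2^{O(w)}\cdot n^{O_r(1)}=2^{O_{H,r}(\sqrt k)}\cdot n^{O_r(1)}$.

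\emph{Large treewidth.} If $w>c\sqrt k$ we reduce the instance. Here the hypothesis that $H$ is apex is crucial: an $H$-minor-free graph of treewidth $\Omega_H(t)$ contains a $t\times t$ partially triangulated grid as a contraction, equivalently (via the flat-wall theorem and its apex-free strengthening) a flat wall of height $\Omega_H(t)$ with small, well-controlled flaps. Take $t=\Theta(r\sqrt k)$ so that this flat region, being two-dimensional, contains $k+1$ pairwise disjoint subregions, each still wide enough (width $\gg r$) to ``enclose'' a $K_r$. Let $v$ be the centre of one such subregion. Either $v$ lies in some $K_r$ of $G$ --- which is checkable in time $n^{O_r(1)}$ --- in which case, by the near-homogeneity of the flat region, one can translate this $K_r$ across the $k+1$ disjoint subregions to obtain $k+1$ pairwise disjoint $K_r$'s, so $(G,k)$ is a \textsc{no}-instance; or $v$ lies in no $K_r$ of $G$, in which case $(G,k)$ and $(G-v,k)$ are equivalent and we may delete $v$. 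Iterating this step strictly decreases $n$ and eventually brings $\tw(G)$ below $c\sqrt k$, after which we fall back to the small-treewidth case; the number of iterations is polynomial, so the total running time stays $2^{O_{H,r}(\sqrt k)}\cdot n^{O_r(1)}$.

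The main obstacle is the large-treewidth step, and specifically the implication ``$v$ lies in a $K_r$ $\Rightarrow$ there are $k+1$ disjoint $K_r$'s''. Because $K_r$-\cov{} is \emph{not} bidimensional (grids, and even partially triangulated grids, are $K_4$-free), we cannot simply declare large-treewidth instances to be \textsc{no}-instances, and must instead argue that the flat region is homogeneous enough for a single $K_r$ meeting its centre to be genuinely copyable into many disjoint locations. This is precisely where one needs the fine structure of apex-minor-free graphs --- cleaning the flat wall of apex vertices and homogenising its flaps --- which is the content supplied by the techniques of \cite{fomin2011bidimensionality}; everything else (the approximate tree decomposition, the grid/flat-wall extraction, the dynamic program) is standard. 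Note also that the choice $t=\Theta(r\sqrt k)$, rather than $t=\Theta(rk)$, is exactly the point where two-dimensionality is used to turn the $k$ vertices of a hypothetical solution into a missed region of radius $\gg r$, yielding the $\sqrt k$ (and not $k$) in the exponent.
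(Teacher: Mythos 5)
The paper states \autoref{th:apexfree} as a consequence of \cite{fomin2011bidimensionality} without providing a proof, so there is no authors' proof to compare against; I will assess your argument on its own. Your win/win on treewidth is reasonable in spirit, and the small-treewidth branch is correct (cliques sit entirely inside bags, so a $2^{O(w)}n^{O_r(1)}$ dynamic program works). The gap is in the large-treewidth branch, and it is a genuine one, not a missing citation.

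Your key implication --- ``$v$ lies in a $K_r$ $\Rightarrow$ there are $k+1$ disjoint $K_r$'s'' --- is false as stated. Take an $N\times N$ grid (planar, hence apex-minor-free) and glue a single $K_r$ onto one central vertex $v_0$: the treewidth is $\Theta(N)\gg\sqrt k$, the optimum $K_r$-cover has size $1$, but if the extracted flat region is centred near $v_0$ your algorithm reports \textsc{no} for every $k\geq 1$. The flat-wall theorem and its apex-minor-free strengthening give you a planarly-drawn wall whose flaps are structurally controlled; they do not and cannot guarantee that a $K_r$ local to one brick replicates in other bricks. ``Homogeneity'' of flat walls in the Robertson--Seymour sense concerns the folios of the flaps, is obtained only after Ramsey-type pruning that blows the required wall size far beyond $\sqrt k$, and in any case does not produce disjoint copies of a fixed clique. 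A possible repair is to test all $k+1$ subregion centres, delete any centre in no $K_r$, and declare \textsc{no} only if every centre lies in a $K_r$; but then you must prove those $K_r$'s are pairwise disjoint, which needs both (i) absence of external apex vertices in the wall (this is exactly where apex-minor-freeness, as opposed to $H$-minor-freeness, matters) and (ii) confinement of any $K_r$ through a centre to its nearby flaps. That is non-trivial and is not, to my knowledge, what \cite{fomin2011bidimensionality} supplies: that paper develops protrusion/Baker-style decompositions for EPTAS, not flat-wall homogenisation. The intended route to \autoref{th:apexfree} is almost certainly a linear protrusion-based kernel for $F$-\cov{} on apex-minor-free graphs, after which the $O_H(\sqrt n)$-treewidth bound for $H$-minor-free graphs and a bag-local DP finish --- a different and more elementary pipeline than the irrelevant-vertex scheme you sketch.
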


Regarding classes that are not sparse, \trh{} received significant attention in the last years in classes of intersection graphs such as (unit) disk graphs\footnote{\emph{(Unit) disk graphs} are intersection graphs of (unit) disks in $\R^2$.}, pseudo-disk graphs, and subclasses of segment graphs\footnote{\emph{Segment graphs} are intersection graphs of segments in $\R^2$.} \cite{lokSODA22, Faster2023An, berthe24, berthe24bFVS}. We only recall below the most general results and do not mention those that require a geometric representation.

\begin{theorem}\label{th:pos}
There are algorithms that given a parameter $k$ and a $n$-vertex graph (without a geometric representation) solve \trh{} in time
\begin{enumerate}
    \item \label{e:oh} $2^{O(k^{3/4} \log k)}n^{O(1)}$ in disk graphs~\cite{Faster2023An};\footnote{The published version of the paper gives a bound of $2^{O(k^{4/5} \log k)}n^{O(1)}$ but it can easily be improved to $2^{O(k^{3/4} \log k)}n^{O(1)}$, as confirmed to us by the authors of \cite{Faster2023An} (private communication).}
    \item \label{e:contact} $2^{O(k^{3/4}\log k)}n^{O(1)}$ in contact-segment\footnote{\emph{Contact-segment graphs} are the intersection graphs of non-crossing segments in $\R^2$.} graphs~\cite{berthe24};
    \item \label{e:ddir} $2^{O_{t,d}(k^{2/3})\log k} n^{O(1)}$in $K_{t,t}$-subgraph-free $d$-DIR\footnote{A graph is \emph{$d$-DIR} if it is the intersection graph of segments of $\R^2$ with at most $d$ different slopes.} graphs~\cite{berthe24}.
\end{enumerate}
\end{theorem}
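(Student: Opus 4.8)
Since Theorem~\ref{th:pos} collects results proved in \cite{Faster2023An} and \cite{berthe24}, my proof would re-derive the \emph{branch-and-decompose} scheme shared by those works (and, in abstracted form, reused later in this paper). The plan is, given an instance $(G,k)$: first to run a bounded-depth branching procedure that simplifies $G$ by repeatedly deleting vertices into the putative solution, so that each leaf of the branching tree carries an equivalent instance $(G',k')$ in which $G'$ has no large clique; then to show that such a $G'$, if it is a yes-instance, has treewidth $O(k^{3/4}\log k)$ (respectively $O_{t,d}(k^{2/3}\log k)$ for $d$-DIR graphs); and finally to solve \trh{} on a tree decomposition of $G'$ by dynamic programming.

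For the branching step I would use that in each of these intersection-graph classes a large clique can be located efficiently whenever the instance is not already simple (in a disk graph, for example, by an angular-sector argument around a high-degree vertex, with class-dependent analogues for contact-segment and $d$-DIR graphs). Whenever a clique $Q$ with $|Q|\ge 3$ is found, any triangle-hitting set keeps at most two vertices of $Q$, so I would branch over the $O(|Q|^2)$ choices of survivors, delete the rest of $Q$, and decrease $k$ accordingly; interleaving this with polynomial-time clean-up (discarding already triangle-free components, and so on) ensures that every root-to-leaf path performs at most $k$ successful branchings. Arranging the rule so that one successful branching removes about $k^{\beta}$ vertices at the cost of $k^{O(1)}$ guesses gives a branching tree with $2^{O(k^{1-\beta}\log k)}$ leaves, each carrying an instance of bounded local density.

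The second step bounds the treewidth of a simplified yes-instance $(G',k')$ directly in terms of $k$; note that the usual excluded-grid-minor route is unavailable, precisely because \trh{} is not bidimensional. Instead I would exploit the structure created by the branching — bounded local density, or the fact that contracting the cliques it produced leaves a sparse, planar-like auxiliary graph — to show that a size-$\le k$ solution $X$ forces $\tw(G') = O(k^{3/4}\log k)$ (respectively $O_{t,d}(k^{2/3}\log k)$), via a separator argument that charges both the number of branching steps already performed and the size of $X$ carefully. Balancing the $2^{O(k^{1-\beta}\log k)}$ branching cost against the resulting treewidth exponent $\alpha$ is what fixes $\beta$ and produces $\alpha = 3/4$ for disk and contact-segment graphs and $\alpha = 2/3$ for $K_{t,t}$-subgraph-free $d$-DIR graphs, the last gain coming from the stronger sparsity available in $d$-DIR graphs.

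The last step is routine: \trh{} can be solved in time $2^{O(w)} n^{O(1)}$ on a width-$w$ tree decomposition by the standard dynamic program that records, per bag, the trace of the partial deletion set, so multiplying the $2^{O(k^{\alpha}\log k)}$ leaves by this factor gives the claimed running times for items~\ref{e:oh}--\ref{e:ddir}. I expect the main obstacle to be the treewidth estimate of the second step: the branching bookkeeping and the tree-decomposition dynamic program are essentially standard, whereas proving that a simplified yes-instance has treewidth as small as $k^{3/4}$ (or $k^{2/3}$) is genuinely class-specific and hinges on combining the right geometry — angular covers for disks, planarity of the contact graph, slopes together with $K_{t,t}$-freeness for $d$-DIR graphs — with the branching depth in exactly the right proportion, a cruder argument yielding only a weaker exponent.
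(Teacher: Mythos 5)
This theorem is not proved in the paper at all: it is a summary of results from \cite{Faster2023An} and \cite{berthe24}, cited as background. Still, measured against the actual arguments in those works (and against the abstraction of them developed in Sections~\ref{sec:covers}--\ref{sec:kick} of this paper), your sketch has a genuine gap at its central step. After the clique branching you claim that a simplified yes-instance $(G',k')$ has treewidth $O(k^{3/4}\log k)$ (resp.\ $O_{t,d}(k^{2/3}\log k)$). That is false as stated: take a large triangle-free disk graph of unbounded treewidth (e.g.\ a huge grid, which is a unit disk graph) together with $k$ disjoint triangles; this is a yes-instance with clique number $3$ and solution size $k$, yet its treewidth grows with $n$ independently of $k$. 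No separator argument ``charging the branching depth and the solution'' can rescue this, because the treewidth of the whole simplified graph simply is not a function of $k$.

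The missing idea is a second, different branching step that shrinks the instance to a small core before any treewidth bound is invoked: starting from a greedy $3$-approximate (in general $r$-approximate) cover $M$ of size $O(k)$, one repeatedly looks at a vertex (or small clique) $X\subseteq M$ having a large packing of disjoint petals outside $M$, and branches between ``the solution hits $X$'' and ``the solution covers the whole packing'' (absorbing the packing into $M$ and virtually decreasing the budget); small packings are absorbed without branching. This is the ``virtual branching'' of \cite[Lemma~6.5]{lokSODA22}, generalized here in \autoref{lem:flower-picking} and \autoref{lem:flowerS-picking}. It yields $2^{O(k^{1-\eps}\log k)}$ annotated instances in which all relevant vertices lie in a set $M'$ of size $O(k^{1+\eps})$ (\autoref{lem:kernel}), with annotations recording triangles that already have two vertices in $M'$. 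Only then does one apply the class-specific bound $\tw = O(\sqrt{\omega\, |M'|})$ (not a bound on $G'$ itself), giving $\tw = O(k^{1/2+\eps})$ for disk and contact-segment graphs with $\omega\le k^{\eps}$, balanced at $\eps=1/4$ to get the exponent $3/4$, and $\tw = O_t(k^{(1+\eps)/2})$ in the $K_{t,t}$-free case (bounded clique number, so no clique branching), balanced at $\eps=1/3$ to get $2/3$. Your first and last steps (clique branching and the $2^{O(\tw)}$ dynamic program, adapted to the annotations) are fine, but without the core-extraction step the claimed running times cannot be derived.
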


\begin{theorem}[\cite{berthe24}]\label{th:neg}
Assuming the Exponential Time Hypothesis, there is no algorithm solving \trh{} in time
\begin{enumerate}
    \item $2^{o(n)}$ in 2-DIR graphs;
    \item $2^{o(\sqrt{\Delta n})}$ in 2-DIR graphs with maximum degree $\Delta$; and
    \item $2^{o(\sqrt{n})}$ in $K_{2,2}$-free contact-2-DIR graphs of maximum degree~6.
\end{enumerate}
\end{theorem}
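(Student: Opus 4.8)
The three bounds read like three specializations of one reduction from $3$-SAT, so I would prove a single parametrized statement. Start from a $3$-CNF formula $\phi$ on $N$ variables; by the Sparsification Lemma we may assume it has $O(N)$ clauses, so under ETH it has no $2^{o(N)}$-time algorithm. The goal is a polynomial-time reduction producing a \trh{} instance $(G,k)$ with $G$ a $2$-DIR graph, with $\phi$ satisfiable iff $G$ has a triangle hitting set of size $k$, and with $n \cdot \Delta = \Theta(N^2)$ where $n = |V(G)|$ and $\Delta$ is the maximum degree of $G$. Then $\sqrt{\Delta n} = \Theta(N)$, so a $2^{o(\sqrt{\Delta n})}$-time algorithm for \trh{} would solve $3$-SAT in $2^{o(N)}$ time, contradicting ETH; item~1 is the case $\Delta = \Theta(N)$, $n = \Theta(N)$, item~3 the case $\Delta = \Theta(1)$, $n = \Theta(N^2)$, and item~2 the whole tradeoff.

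The content is in the gadgetry, constrained by the fact that in a $2$-DIR graph every triangle consists of two collinear, mutually overlapping segments together with a transversal segment of the other slope through their overlap (two of the three segments of a triangle must be parallel). So \emph{all} triangles of $G$ have to be created on purpose via collinear intervals, and the reduction uses the classical \textsc{Vertex Cover}-style pattern re-expressed in this language: each variable $x$ gets a \emph{variable gadget} --- a cyclic chain of triangles built from collinear intervals on one horizontal line, whose minimum triangle hitting sets come in exactly two types, a ``true-phase'' set and a ``false-phase'' set, with a ``tap'' interval for each occurrence of $x$; each clause gets a \emph{clause gadget} --- a small fixed-size sub-arrangement wired by three transversals to the taps of its three literals, arranged so that it can be destroyed within the global budget iff at least one literal is satisfied (a satisfied literal makes one deletion inside the clause gadget ``free''); negation is a phase shift on the relevant tap. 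Laying the variable gadgets out as $N$ parallel horizontal lines and routing the $O(N)$ clause transversals vertically gives a grid-like arrangement, and $k$ is the total cost of the variable chains. Two points need care: a transversal crossing an unrelated variable line must be routed through a ``gap'' of that chain (a point covered by a single interval), never through a junction, lest it create a spurious triangle there; and one must prove the hard direction of the equivalence --- that any triangle hitting set of size $\le k$ can be normalized without increasing its size to one that deletes exactly the true- or false-phase set of each variable gadget (hence reads off an assignment) and that this assignment satisfies $\phi$ --- the subtlety being to rule out hitting sets that underspend in one gadget and overspend in a neighbour.

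For item~3, instantiating this with $O(1)$ segments per gadget gives $n = \Theta(N^2)$, degree $O(1)$, and $2^{o(\sqrt n)} = 2^{o(N)}$ contradicts ETH. The three further demands cost only a constant factor of extra surgery: for a \emph{contact} representation, reroute every segment so that segments only touch, each intended transversal-through-overlap being replaced by a small planar ``T''-junction gadget carrying the same triangle; for \emph{$K_{2,2}$-freeness}, subdivide (each subdivision is an extra collinear interval, which keeps the triangle structure and kills any offending copy of $K_{2,2}$); for maximum degree $6$, split every segment meeting too many others into a short collinear chain. For item~2, I would \emph{bundle}: partition the variables into $N/\Delta$ blocks of $\Delta$ and the clauses likewise, so that one ``fat'' variable gadget carries $\Theta(\Delta)$ chains and a ``fat'' grid cell routes $\Theta(\Delta)$ transversals with $\Theta(\Delta)$ segments and degree $\Theta(\Delta)$; the grid shrinks to $(N/\Delta)\times(N/\Delta)$, so $n = \Theta(N^2/\Delta)$, $\sqrt{\Delta n} = \Theta(N)$, and item~1 is the extreme $\Delta = \Theta(N)$.

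The main obstacle is precisely this gadget engineering in the two-slope model: producing all the wanted triangles --- and no others --- purely from collinear overlapping configurations; proving that minimum triangle hitting sets correspond to satisfying assignments with no cheaper alternative that trades savings between neighbouring gadgets; and, for item~2, making the fat cells efficient enough that $\Theta(\Delta)$ wires cost $\Theta(\Delta)$ and not $\Theta(\Delta^2)$ segments --- all while keeping the representation a contact one, $K_{2,2}$-free, and of maximum degree $6$ in item~3. Once the construction and its correctness are settled, counting vertices and invoking ETH is routine.
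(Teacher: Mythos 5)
This statement is not proved in the paper at all: it is quoted from \cite{berthe24}, so there is no internal argument to compare yours against. Judged on its own, your proposal is a plausible plan but not a proof: every load-bearing component is named as an ``obstacle'' and then deferred. Concretely, you never construct the variable or clause gadgets, never prove that the only triangles in the arrangement are the intended ones, and never carry out the normalization argument ruling out hitting sets that underspend in one gadget and overspend in a neighbour --- and for this kind of ETH lower bound that gadget engineering and its correctness proof \emph{are} the theorem; the final ``count vertices and invoke ETH'' step is indeed routine, but it is the only step you actually perform. Two specific points make the gap more than cosmetic. First, your structural claim about triangles in $2$-DIR graphs is incomplete: besides two collinear segments plus a transversal, three pairwise-overlapping collinear segments also form a triangle, and since your ``no spurious triangles'' analysis rests on this classification, it has to be stated and used correctly. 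Second, item~2 is a full tradeoff that must hold for every relationship between $\Delta$ and $n$, and items~3's extra constraints (contact representation, $K_{2,2}$-subgraph-freeness, maximum degree $6$) interact nontrivially with the construction: asserting that a ``T-junction gadget'' preserves the triangles in a contact model, that subdivision ``kills any offending $K_{2,2}$'' (subdivision changes adjacencies and two parallel segments crossing two transversals is exactly a $K_{2,2}$, so routing must be designed around this, not patched afterwards), and that ``fat'' cells route $\Theta(\Delta)$ wires with $\Theta(\Delta)$ segments, are each claims that need constructions and verification, not one-line surgery. So while the overall architecture (sparsified $3$-SAT, grid-like layout of variable lines and clause transversals, size/degree bookkeeping giving $\sqrt{\Delta n}=\Theta(N)$) is a reasonable reconstruction of how such a result is typically obtained, the proposal as written does not establish the statement.
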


\subsection*{Our contribution}
Our main result is the following subexponential parameterized algorithm for $K_r$-\cov{} in graph classes satisfying two conditions related to cliques and treewidth. Notice that the statement of the following theorem is a simplified version of the actual \autoref{th:kick} that we prove in \autoref{sec:kick}.

\begin{theorem}\label{th:kick_fast}
    Let $r\in \N$, $\alpha\in (0,1)$, $\mu\in \R_{>0}$ and let $\mathcal{G}$ be a hereditary graph class where every $G\in \mathcal{G}$ with $n$ vertices and clique number $\omega$ has $O_r(\omega^\mu n)$ cliques of order less than $r$ and treewidth $O_r(\omega^\mu n^\alpha)$.
    There exists $\eps < 1$ and an algorithm that solves $K_r$-\cov{} on $\mathcal{G}$ in time
    $2^{k^\eps} \cdot n^{O_r(1)}.$
\end{theorem}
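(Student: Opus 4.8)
The plan is to design a branching algorithm that, on any instance $(G,k)$ with $G\in\mathcal G$, makes a subexponential number of recursive calls and reduces to instances whose treewidth is bounded by $k^{\eps'}$ for some $\eps'<1$, at which point standard dynamic programming over a tree decomposition solves $K_r$-\cov{} in time $2^{O_r(\tw)}\cdot n^{O_r(1)}$. The two hypotheses on $\mathcal G$ are exactly what is needed to control both the branching and the treewidth: the bound ``$O_r(\omega^\mu n)$ cliques of order less than $r$'' lets us enumerate obstructions cheaply and, crucially, lets us argue that a large clique number $\omega$ forces a large clique-cover obligation, so $\omega$ can be taken bounded in terms of $k$; and the bound $\tw(G)=O_r(\omega^\mu n^\alpha)$ then converts a bound on the number of remaining vertices into a bound on treewidth.

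First I would preprocess to bound $\omega$: if $G$ contains a clique $K$ of order $r$ (findable because there are only $O_r(\omega^\mu n)$ small cliques to scan, and one can bootstrap from $r$-cliques), then any solution must delete at least one vertex of $K$, so we branch on the (at most $|K|$) choices of which vertex of $K$ to delete, decreasing $k$ each time; iterating, after $O_r(k)$ branching steps along any root-to-leaf path we reach an instance with no $r$-clique (a trivial yes-instance) or we must have made $k$ deletions, i.e.\ we can cap $\omega$. More precisely, the key combinatorial step is a \emph{win/win}: either $\omega \le f(k)$ for a suitable $f(k)\in k^{O(1)}$, or $G$ contains so many vertex-disjoint $r$-cliques (by a sunflower/packing argument using that small cliques are few) that $k$ deletions cannot hit them all, so $(G,k)$ is a no-instance. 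Assuming $\omega\le f(k)$, the treewidth of the whole graph is $O_r(f(k)^\mu n^\alpha)$; this is still not bounded, so the second phase must shrink $n$.

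Second, to shrink $n$ I would use a recursive/balanced-separator scheme exploiting the sublinear treewidth: since $\tw(G)=O_r(\omega^\mu n^\alpha)$ with $\alpha<1$, $G$ has a balanced separator $S$ of size $O_r(\omega^\mu n^\alpha)$. Guess $S\cap X$ (the part of the separator in the deletion set $X$) by branching over its subsets — there are at most $2^{|S|}=2^{O_r(f(k)^\mu n^\alpha)}$ choices, which is subexponential in $n$ but, more importantly, after removing $S\cap X$ and contracting the picture, the remaining budget and instance size both drop multiplicatively, so a careful amortization across the recursion tree keeps the total at $2^{k^\eps}$. Alternatively, and probably cleaner, I would combine the clique-branching of phase one with an argument that once $\omega$ is bounded the number of ``relevant'' vertices (those in some small clique, i.e.\ those that can ever be forced) is $O_r(f(k)^\mu\cdot(\text{number of small cliques}))$ and that the rest of the graph is $K_r$-free and can be handled by treewidth DP directly; then set $\eps$ from the arithmetic relating $f(k)$, $\mu$, $\alpha$ and the exponent in the clique-count bound.

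The main obstacle I expect is the amortization: a naive bound on the branching tree gives $f(k)^{O(k)}=2^{O(k\log k)}$, which is \emph{not} subexponential, so the crux is to show that the branching depth times the log of the branching degree is genuinely $k^\eps$ with $\eps<1$ — this requires that each branching step either makes substantial progress on $k$ \emph{or} substantial progress on the structure (e.g.\ on $n$ via separators), and that these cannot both stall. Concretely, the delicate point is choosing the threshold $f(k)$ (balancing ``$\omega$ large $\Rightarrow$ many disjoint cliques $\Rightarrow$ no-instance'' against ``$\omega$ small $\Rightarrow$ treewidth $\approx f(k)^\mu n^\alpha$ manageable'') so that the recursion $T(k)\le g(k)\cdot T(k - h(k))$ with the right $g,h$ solves to $2^{k^\eps}$; getting $h(k)$ to grow polynomially in $k$ (rather than staying constant) is what ultimately forces the gain, and verifying it is where the real work lies. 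This is exactly the content of the full \autoref{th:kick} in \autoref{sec:kick}, and I would structure the proof so that the present simplified statement follows by plugging admissible constants into that more technical result.
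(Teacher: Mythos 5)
Your second phase is where the proof genuinely breaks, and neither of your two proposed routes closes it. The separator scheme is not FPT: branching over all subsets of a balanced separator of size $O_r(\omega^\mu n^\alpha)$ costs $2^{\Theta(n^\alpha)}$, and in the branch where the solution avoids the separator the budget $k$ does not decrease at all while $n$ only shrinks by a constant factor, so the recursion is governed by $n$, not $k$; since $n$ may be arbitrarily large compared to $k$, no amortization can bring this to $2^{k^\eps}\cdot n^{O_r(1)}$. The ``cleaner'' alternative is equally unfounded: the hypothesis gives $O_r(\omega^\mu n)$ cliques of order less than $r$, which is linear in $n$ and not bounded in $k$ (every single vertex is a $(<r)$-clique), so your notion of ``relevant vertices'' is not controlled by $k$, and your sketch contains no mechanism that shrinks the instance to a function of $k$ before the treewidth DP. That reduction is precisely the core of the actual proof: one first computes a greedy $r$-approximate $K_r$-cover $M$ (rejecting if $|M|>rk$), notes that the clique-count hypothesis applied \emph{inside} $G[M]$ bounds the number of $(<r)$-cliques of $G[M]$ by $O_r(\omega^\mu k)$, and then runs the petal-picking virtual branching of \autoref{lem:flower-picking} and \autoref{lem:flowerS-picking} on the annotated problem \ann{}-$K_r$-\cov{}: for a lush clique $X\subseteq M$ with a large matching of petals outside $M$, either the solution hits $X$ (record $X$ as a hyperedge to cover) or it covers the whole matching (absorb it into $M$, which virtually decreases the budget); small matchings are absorbed directly. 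Since each small clique of $M$ is treated at most once, this yields $2^{O((k/\lambda)\log\zeta)}$ contexts in which all vertices outside a set $M'$ of size $k^{1+O(\eps)}$ are irrelevant (\autoref{lem:kernel}), and only then is the treewidth bound applied, to $G[M']$.

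Your first phase is also too weak for the exponent arithmetic. Branching one vertex at a time on $r$-cliques is the trivial $r^k$ algorithm, and your win/win only caps $\omega$ at roughly $rk$; but $\mu$ is an arbitrary positive constant here, so with $\omega=\Theta(k)$ the factors $\omega^\mu$ in both the clique count and the treewidth of the reduced instance can push the final exponent to $1$ or beyond. The paper instead branches on $p$-cliques with $p\approx k^{\eps}$ (\autoref{lem:cliques}): a solution must contain all but at most $r-1$ vertices of such a clique, so each branching step has only $p^{r-1}$ outcomes while consuming about $k^{\eps}$ of the budget, producing $2^{O(k^{1-\eps}\log k)}$ instances with clique number below $k^{\eps}$ --- this stronger bound is what the optimization at the end of \autoref{th:kick} relies on. Deferring to \autoref{th:kick} in your last sentence does not repair this: the approximate cover, the count of small cliques inside the cover, and the petal-picking branching are the ideas your proposal is missing.
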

One additional motivation for this work was to generalize to $K_r$-\cov{} the techniques used in previous work to solve \trh{} (in specific graph classes) and to extract the minimal requirements for such an approach to work in more general settings. We believe we met this goal as we actually describe a single generic approach that solves $K_r$-\cov{} on any input graph, for any~$r$. The properties of the class in which the inputs are taken is only used to bound its running time. Such a generalization effort can be fruitful and indeed it allowed us afterwards to identify natural graph classes where subexponential algorithms exist as a consequence of our general result.

In \autoref{sec:apps} we derive from \autoref{th:kick} the following applications.

\begin{restatable}{theorem}{thpseudo}\label{th:pseudo}
There is an algorithm solving $K_r$-\cov{} in pseudo-disk graphs in time 
$
2^{O_r\left (k^{(r+1)/(r+2)}\log k \right )} \cdot n^{O_r(1)}.
$
\end{restatable}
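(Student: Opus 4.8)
The plan is to obtain this as a direct application of \autoref{th:kick}: it suffices to check that the class $\cG$ of pseudo-disk graphs meets the hypotheses of that theorem for suitable parameters and then to read off the resulting running time. Hereditariness is immediate, since an induced subgraph of the intersection graph of a family of pseudo-disks is the intersection graph of the corresponding subfamily. So the work is to establish the two quantitative conditions: a bound on the number of small cliques, and a sublinear bound on the treewidth, both in terms of the number of vertices $n$ and the clique number $\omega$.

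For the first condition I would use the known fact that an $n$-vertex pseudo-disk graph with clique number $\omega$ has $O(\omega n)$ edges. Applied to every induced subgraph, this shows such graphs are $O(\omega)$-degenerate. Along a degeneracy ordering of the vertex set, a clique of order $j$ is specified by its last vertex (at most $n$ choices) together with $j - 1$ of its $O(\omega)$ earlier neighbours, so there are $O(\omega^{j-1} n)$ cliques of order $j$; summing over $1 \le j \le r - 1$ gives $O_r(\omega^{r-2} n)$ cliques of order less than $r$.

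For the second condition I would invoke a balanced-separator theorem for pseudo-disk graphs --- for instance through separators for region-intersection graphs over planar graphs together with the edge bound above --- giving separators of size $O(\sqrt{\omega n})$, and hence, by the usual recursion, treewidth $O(\sqrt{\omega n})$; in particular this is $O_r(n^{1/2})$ whenever $\omega = O_r(1)$, which is the only regime in which \autoref{th:kick} uses the estimate, since the graph left after deleting a candidate solution is $K_r$-free and hence has clique number at most $r - 1$. Feeding the treewidth exponent $\alpha = 1/2$ and the clique count $O_r(\omega^{r-2} n)$ into \autoref{th:kick}, and using that in a yes-instance the clique number is $O_r(k)$ because a $K_\omega$ forces deleting all but at most $r - 1$ of its vertices, a routine computation turns the bound of \autoref{th:kick} into $2^{O_r(k^{(r+1)/(r+2)} \log k)} \cdot n^{O_r(1)}$. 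No geometric representation of the input is needed, as \autoref{th:kick} runs on an arbitrary graph and only its running-time analysis uses membership in $\cG$.

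The routine parts are hereditariness and the final substitution; the content is in the two structural facts. The main obstacle is the separator bound: the radius-based packing arguments that work for disk graphs (bounding the degree of the smallest disk, or packing disks inside a grid cell) have no counterpart for pseudo-disks and must be replaced by arrangement-complexity arguments. The precise dependence on $\omega$ in the treewidth bound is not critical, since it is used only when $\omega$ is bounded, whereas the $\omega^{r-2}$ factor in the clique count matters, because there $\omega$ may be as large as $\Theta(k)$; the last point requiring care is exactly how the exponents $\alpha = 1/2$ and $r - 2$ combine through \autoref{th:kick} to give $(r+1)/(r+2)$.
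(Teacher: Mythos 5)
Your high-level plan coincides with the paper's: establish that pseudo-disk graphs satisfy property $P_r(\phi,\gamma,\alpha)$ with $\phi = r-2$, $\alpha = 1/2$, $\gamma = 1/2$, and then read the exponent $\eps' = \frac{\gamma+\alpha(\phi+2)}{\gamma+\alpha(\phi+1)+1} = \frac{r+1}{r+2}$ off \autoref{th:kick}. The clique count via degeneracy is also what the paper does (the paper cites Chiba--Nishizeki while you re-derive it from a degeneracy ordering, which is fine), and the edge/treewidth bounds for pseudo-disk graphs are exactly \autoref{th:vomi}.

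However, your discussion of the role of the clique number in the treewidth bound is wrong, and the error would change the exponent if you actually carried it through. You claim that the treewidth estimate $O(\sqrt{\omega n})$ is only invoked with $\omega$ bounded by a constant, ``since the graph left after deleting a candidate solution is $K_r$-free and hence has clique number at most $r-1$.'' This misreads how \autoref{th:kick} works: the dynamic program over a tree decomposition runs on the reduced graph $G[M']$ (via \autoref{lem:kernel}), not on $G$ minus a solution. $G[M']$ still contains every $r$-clique that needs to be covered, and after the initial branching of \autoref{lem:cliques} its clique number is only bounded by $p \approx k^{\eps}$, not by $r-1$. Consequently, the factor $g(\omega) = O(\omega^{1/2})$ is applied with $\omega \approx k^{\eps}$ and contributes a $k^{\eps/2}$ term to the exponent. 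If one really could set $\gamma = 0$ as you suggest, the formula would give $\eps' = r/(r+1)$, not $(r+1)/(r+2)$; so the $\gamma = 1/2$ dependence is precisely what produces the claimed exponent, contradicting your assertion that it is ``not critical.'' The $\omega$-dependence of the clique-count bound ($\phi = r-2$) is then handled symmetrically, and the remark that $\omega = O_r(k)$ in a yes-instance is also off: the algorithm does not rely on that estimate but on the sharper clique-number bound $\omega < k^{\eps}$ enforced by the first branching step.
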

\emph{Pseudo-disk graphs} are a classical generalization of disk graphs where to each vertex is associated a \emph{pseudo-disk} (a subset of the plane that is homeomorphic to a disk), two vertices are adjacent if the corresponding pseudo-disks intersect and additionally we require that for any two intersecting pseudo-disks, their boundaries intersect on at most two points. Disk graphs and contact segment graphs are pseudo-disk graphs, so \autoref{th:pseudo} applies to the two settings handled by the algorithms of \cite{Faster2023An} and \cite{berthe24} mentioned at items \ref{e:oh} and~\ref{e:contact} of \autoref{th:pos}. Another application is the following:

\begin{restatable}{theorem}{thmap}\label{th:map}
There is an algorithm solving $K_r$-\cov{} in map graphs\footnote{\emph{Map graphs} are intersection graphs of interior-disjoint regions of $\R^2$ homeomorphic to disks.} in time 
$
2^{O_r\left (k^{(r+1)/(r+2)}\log k \right )} \cdot n^{O_r(1)}.
$
\end{restatable}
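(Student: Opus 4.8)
The plan is to obtain Theorem~\ref{th:map} as a direct instantiation of (the full version of) Theorem~\ref{th:kick}, in exactly the way Theorem~\ref{th:pseudo} is obtained: it suffices to show that the class $\mathcal G$ of map graphs is hereditary and that, for suitable $\alpha\in(0,1)$ and $\mu$ depending on $r$, every $G\in\mathcal G$ on $n$ vertices with clique number $\omega$ has $O_r(\omega^\mu n)$ cliques of order less than $r$ and treewidth $O_r(\omega^\mu n^\alpha)$; the running time then drops out of Theorem~\ref{th:kick} by substitution. (A separate verification is genuinely needed here: unlike disk graphs and contact-segment graphs, map graphs are not known to be pseudo-disk graphs, so Theorem~\ref{th:pseudo} does not apply. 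Note also that, since the algorithm of Theorem~\ref{th:kick} never uses a geometric representation, none has to be supplied on input.) Heredity is immediate, since deleting a vertex just amounts to deleting its region from a map representation.

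For the clique-count condition I would start from the fact that a map graph with clique number $\omega$ has $O(\omega n)$ edges --- which can be read off the planar bipartite witness graph of the map, whose ``point'' vertices induce cliques of size at most $\omega$ that together cover all edges of the map graph and whose total size, after a standard cleanup, is linear in $n$. Since induced subgraphs of map graphs are again map graphs, this makes map graphs $O(\omega)$-degenerate; taking a degeneracy ordering, every clique of order $s$ is determined by its top vertex together with $s-1$ of that vertex's at most $O(\omega)$ back-neighbours, so $G$ has $O_r(\omega^{r-2}n)$ cliques of order less than $r$, i.e.\ the first condition holds with $\mu=r-2$ (and $\mu=0$ trivially suffices when $r=2$). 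For the treewidth condition I would invoke clique-based separators for map graphs --- a balanced separator that is the union of $O(\sqrt n)$ cliques --- which through the usual separator recursion yields a tree decomposition of width $O(\omega\sqrt n)$; alternatively this can be re-derived from the witness graph, taking some care with the connectivity condition when each point vertex is inflated into its clique. Thus both conditions hold with $\alpha=1/2$ and $\mu=\max(r-2,1)$, and feeding these into Theorem~\ref{th:kick} gives the claimed bound $2^{O_r(k^{(r+1)/(r+2)}\log k)}\cdot n^{O_r(1)}$, just as for pseudo-disk graphs.

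I expect the treewidth estimate to be where the real work lies: one needs a clean inequality $\tw(G)=O_r(\omega^\mu n^\alpha)$ with $\alpha<1$ strictly and with no stray logarithmic factor multiplying $n^\alpha$ (a $\log n$ is harmless but must be absorbed honestly, e.g.\ into a marginally larger $\alpha$), and one should check that the treewidth exponent actually used is, for $r\ge 3$, dominated by the clique-count exponent $r-2$, so that the final exponent of $k$ comes out as exactly $(r+1)/(r+2)$ and no larger. Locating or adapting a clique-based-separator statement that applies to map graphs (or pushing through the witness-graph route with the connectivity condition handled correctly) is the step I would be most careful about; everything after the two structural conditions are matched up is a direct appeal to Theorem~\ref{th:kick} together with routine tracking of the constants hidden in the $O_r(\cdot)$.
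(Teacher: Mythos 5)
Your clique-count route matches the paper's: map graphs with clique number $\omega$ have $O(\omega n)$ edges (Chen--Grigni--Papadimitriou, cited as Theorem~\ref{th:mapedges}), hence degeneracy $O(\omega)$ and $O_r(\omega^{r-2}n)$ cliques of order less than $r$, giving $\phi=r-2$ in the notation of Theorem~\ref{th:kick}. The gap is in the treewidth step, which you correctly flag as the crux but then resolve the wrong way. Clique-based separators (the union of $O(\sqrt n)$ cliques) or the ``inflate the planar witness'' argument both give $\tw(G)=O(\omega\sqrt n)$, i.e.\ $\gamma=1$ and $\alpha=1/2$ in the full parametrization $P_r(\phi,\gamma,\alpha)$. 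The paper instead needs, and proves, $\tw(G)=O(\sqrt{\omega n})$, i.e.\ $\gamma=1/2$: it combines $m\le 7\omega n$ (map graphs are string graphs) with the Lee / Dvo\v r\'ak--Norin theorem (Theorem~\ref{th:sep-m-tw}) that $m$-edge string graphs have treewidth $O(\sqrt m)$. This distinction between $\omega^{1/2}\sqrt n$ and $\omega\sqrt n$ is exactly what Corollary~\ref{cor:maptw} exists to record.

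The reason you cannot wave this away by noting that ``the treewidth exponent is dominated by the clique-count exponent $r-2$'' is that Theorem~\ref{th:kick} tracks $\phi$ and $\gamma$ as \emph{separate} parameters, and $\gamma$ enters the final exponent on its own: $\eps=\frac{\gamma+\alpha(\phi+2)}{\gamma+\alpha(\phi+1)+1}$. With $\phi=r-2$, $\alpha=1/2$ and $\gamma=1/2$ this evaluates to $(r+1)/(r+2)$ as claimed, whereas $\gamma=1$ gives $(r+2)/(r+3)$, which is strictly larger for every $r$. The simplified Theorem~\ref{th:kick_fast} with a single $\mu$ only promises \emph{some} $\eps<1$ and cannot be used to certify the specific exponent. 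So to reach the stated bound you must replace the clique-separator estimate by the edge-count-plus-string-graph-separator route, and appeal to the full Theorem~\ref{th:kick} with $P_r(r-2,1/2,1/2)$; the rest of your plan then goes through.
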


We cannot expect a similar consequence for the more general class of string graphs.\footnote{\emph{String graphs} are intersectiong graphs of Jordan arcs in $\R^2$. They generalize many of the most studied classes of intersection graphs of geometric objects in the plane such as disk graphs, pseudo-disk graphs, segment graphs, chordal graphs, etc.} Indeed, there are $n$-vertex string graphs that are triangle-free and have treewidth $\Omega(n)$, for instance the balanced bicliques.\footnote{$K_{n,n}$ can be drawn as a 2-DIR graph with $n$ horizontal disjoint segments that are all crossed by $n$ vertical disjoint segments.} Note that such graphs prevent string graphs to satisfy the requirement of \autoref{th:kick_fast}. Also, and more importantly, by \autoref{th:neg} under ETH there is no $2^{o(n)}$-time algorithm for $K_3$-\cov{} in 2-DIR graphs, a restricted subclass of string graphs.
As we will show, large bicliques are the only obstructions in the sense that forbidding them in string graphs allows us to solve the problem in subexponential FPT time.
For this we use the following light version of \autoref{th:kick_fast} (also consequence of \autoref{th:kick}) suited for classes where the clique number is already bounded.
\begin{theorem}\label{th:kick_weakly}
    Let $r\in \N$, $\alpha\in (0,1)$ and let $\mathcal{G}$ be a hereditary graph class where every $G\in \mathcal{G}$ with $n$ vertices has $O(n)$ cliques of order less than $r$ and treewidth $O(n^\alpha)$.
    There exists an algorithm that solves $K_r$-\cov{} on $\mathcal{G}$ in time
    $
    2^{O_r(k^{2/(1+1/\alpha)} \log k)} \cdot n^{O_r(1)}.
    $
\end{theorem}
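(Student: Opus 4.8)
The plan is to obtain \autoref{th:kick_weakly} as the bounded-clique-number instance of the general \autoref{th:kick} of \autoref{sec:kick} --- the regime where the factor $\omega^\mu$ appearing in \autoref{th:kick_fast} is an absolute constant --- and to read off the running time when the treewidth bound is $O(n^\alpha)$. It is useful to anticipate the shape of the algorithm behind such a statement, since here it streamlines considerably. First I would preprocess by forced vertices (the petals / $\ext$ reduction): if a vertex $v$ lies in more than $k$ pairwise $\{v\}$-intersecting $r$-cliques, then $v$ is in every solution of size at most $k$, since distinct petals require distinct hitters other than $v$; so delete $v$ and decrement $k$. A greedy maximal packing of the $(r-1)$-cliques inside $G[N(v)]$ detects this in polynomial time, and this is exactly where the hypothesis that $G$ has only $O(n)$ cliques of order less than $r$ is used, to keep clique enumeration polynomial; after exhaustion every vertex lies in $O_r(k)$ petals. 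Then I would greedily pick a maximal family $\mathcal{Q}$ of pairwise vertex-disjoint $r$-cliques: if $|\mathcal{Q}| > k$ we reject (disjoint cliques need disjoint hitters), otherwise $Y := V(\mathcal{Q})$ has $O_r(k)$ vertices, $G - Y$ is $K_r$-free, and for each $y \in Y$ a maximal petal family at $y$ yields a set $Z_y \subseteq N[y]$ of $O_r(k)$ vertices meeting all $r$-cliques through $y$; so $Z := \bigcup_{y \in Y} Z_y$ has $O_r(k^2)$ vertices and $G - Z$ is $K_r$-free. This $Z$ is essentially the only relevant part of the instance.

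The subexponential saving then comes from a trade-off between branching and the tree-decomposition dynamic program. I would guess the trace of a solution on a well-chosen ``hub'' subset of the core of size $k^\beta$, for a parameter $\beta$ to be optimized, at cost $2^{O(k^\beta \log k)}$ --- the $\log k$ because the core has $\mathrm{poly}(k)$ vertices --- chosen so that in each branch the remaining relevant part has only $O_r(k^{2-\beta})$ vertices. As $\mathcal{G}$ is hereditary, that residual subgraph lies in $\mathcal{G}$, hence has treewidth $O_r(k^{\alpha(2-\beta)})$, and the folklore dynamic program for $K_r$-\cov{} over a tree decomposition --- using that every $r$-clique sits inside a single bag, so a width-$w$ bag contributes only $2^{O(w)}$ states --- solves that branch in time $2^{O_r(k^{\alpha(2-\beta)})} \cdot n^{O_r(1)}$. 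Balancing the two exponents, $\beta = \alpha(2-\beta)$, gives $\beta = \tfrac{2\alpha}{1+\alpha} = \tfrac{2}{1+1/\alpha}$, exactly the claimed exponent, and the product of the two contributions is $2^{O_r(k^{2/(1+1/\alpha)}\log k)} \cdot n^{O_r(1)}$.

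The main obstacle is precisely this last step: identifying the right hub set and proving that fixing a solution's trace on it really does cut the number of relevant vertices (equivalently, the treewidth of what remains) down to $O_r(k^{2-\beta})$ uniformly over all branches --- the balancing arithmetic above only tells us what bound to aim for, not that it is attainable. The reductions, the disjoint-clique packing, and the width-$w$ dynamic program are all routine; so in the final write-up I would simply invoke \autoref{th:kick} with $\mu$ chosen so that $\omega^\mu = O(1)$ and trace the exponents through.
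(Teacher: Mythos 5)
Your operative proof is correct and is exactly the paper's route: the hypotheses of the statement are property $P_r(0,0,\alpha)$ (take $f$ and $g$ constant in \Pref{p:cliques} and \Pref{p:tw}), the class is hereditary, so \autoref{th:kick} applies and gives the exponent $\eps=\frac{\gamma+\alpha(\phi+2)}{\gamma+\alpha(\phi+1)+1}=\frac{2\alpha}{\alpha+1}=\frac{2}{1+1/\alpha}$, with the $O_{r,\phi}$ constant collapsing to $O_r$ since $\phi=0$; the paper itself states \autoref{th:kick_weakly} as a direct consequence of \autoref{th:kick} with no further argument. (Minor slip: $\mu$ is the parameter of \autoref{th:kick_fast}; when invoking \autoref{th:kick} you should set $\phi=\gamma=0$.)

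One caution about the self-contained sketch you interleave: it would not stand as an independent proof. The $O_r(k^2)$-vertex set $Z$ you build only controls $r$-cliques meeting it in a single vertex; an $r$-clique with $2\le i\le r-1$ vertices in $Z$ and the rest outside is not a clique of $G[Z]$ and need not be hit by a solution of $G[Z]$, so the claim that ``$Z$ is essentially the only relevant part'' is unjustified --- this is precisely the type-$i$ obstruction that the lush-clique stripping and the annotated problem (\autoref{lem:flower-picking}, \autoref{lem:flowerS-picking}, \autoref{lem:kernel}) are built to handle, and it is also where the $O(n)$ bound on small cliques is really needed (to bound the number $\zeta$ of $(<r)$-cliques inside the cover, which controls the recursion depth and the growth of $M'$, not merely to enumerate petals in polynomial time). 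Likewise the ``hub set'' trace-guessing step is, as you yourself note, not established. Since you ultimately fall back on citing \autoref{th:kick}, the proposal is fine; just treat the sketch as motivation rather than proof.
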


As a consequence we obtain a subexponential FPT algorithm for string graphs excluding large bicliques.
\begin{restatable}{theorem}{thstring}\label{th:string}
There is an algorithm solving $K_r$-\cov{} in $K_{t,t}$-subgraph-free string graphs in time 
$2^{O_{t,r}(k^{2/3}\log k)} \cdot n^{O_r(1)}.$
\end{restatable}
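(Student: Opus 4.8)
The plan is to obtain \autoref{th:string} as an instance of \autoref{th:kick_weakly}, applied to the class $\mathcal{G}_t$ of $K_{t,t}$-subgraph-free string graphs with $\alpha = 1/2$; almost all of the work is checking the two hypotheses. First I would observe that $\mathcal{G}_t$ is hereditary: an induced subgraph of a string graph is again a string graph, and ``containing no $K_{t,t}$ as a subgraph'' is inherited by induced subgraphs. It then remains to verify the bound on the number of small cliques and the bound on the treewidth.

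For the clique count I would invoke the known fact that a $K_{t,t}$-subgraph-free string graph on $n$ vertices has $O_t(n)$ edges. By hereditarity this gives that every $G \in \mathcal{G}_t$ is $d$-degenerate for some $d = O_t(1)$ (each nonempty subgraph has average degree $O_t(1)$, hence a vertex of degree $O_t(1)$). Fixing a degeneracy ordering, every clique is determined by its last vertex together with a subset of that vertex's at most $d$ forward neighbours, so the number of cliques of order exactly $s$ is at most $\binom{d}{s-1} n$; summing over $s < r$ bounds the number of cliques of order less than $r$ by $O_{t,r}(n)$, which is the linear bound required by \autoref{th:kick_weakly}. (Note that $K_{2t} \supseteq K_{t,t}$ forces $\omega(G) \le 2t-1$ on $\mathcal{G}_t$, so this is precisely the ``bounded clique number'' regime for which \autoref{th:kick_weakly} is tailored.)

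For the treewidth bound I would use the separator theorem for string graphs: a string graph with $m$ edges admits a balanced separator of order $O(\sqrt m)$. Combined with the linear edge bound, every $n$-vertex member of $\mathcal{G}_t$ has a balanced separator of order $O_t(\sqrt n)$; since $\mathcal{G}_t$ is hereditary and the bound has the (sufficiently sub-additive) shape $c\sqrt{\cdot}$, the standard recursion converts this into a tree decomposition of width $O_t(\sqrt n)$. Hence the treewidth hypothesis holds with $\alpha = 1/2$, and \autoref{th:kick_weakly} yields running time $2^{O_r(k^{2/(1+1/\alpha)}\log k)}\, n^{O_r(1)} = 2^{O_r(k^{2/3}\log k)}\, n^{O_r(1)}$; since the constants hidden in the two structural bounds depend on $t$, the $O_r$ in the exponent becomes $O_{t,r}$, giving the claimed $2^{O_{t,r}(k^{2/3}\log k)}\, n^{O_r(1)}$.

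The substance is carried by the two imported facts — the $O_t(n)$ edge bound for $K_{t,t}$-subgraph-free string graphs and the $O(\sqrt m)$ separator theorem for string graphs — which is exactly what makes $\alpha = 1/2$ attainable. The only delicate point internal to the argument is the passage from ``$O(\sqrt n)$ balanced separators in a hereditary class'' to ``treewidth $O(\sqrt n)$'': it is routine, but it genuinely needs both hereditarity and the square-root shape of the separator bound, and this is the step where a weaker input (say an $O(\sqrt m\,\log m)$ separator) would push $\alpha$ above $1/2$ and spoil the clean $k^{2/3}$ exponent.
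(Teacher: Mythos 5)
Your proposal is correct and follows essentially the same route as the paper: verify that $K_{t,t}$-subgraph-free string graphs form a hereditary class with linearly many small cliques and treewidth $O_t(\sqrt{n})$, then plug $\alpha=1/2$ into \autoref{th:kick_weakly} to get the $k^{2/3}$ exponent. The paper cites Lee's theorem directly for the $O(t\log t)$ degeneracy and $O(\sqrt{n\,t\log t})$ treewidth and uses Chiba--Nishizeki for the clique count; your rederivation from the linear edge bound plus the $O(\sqrt m)$ separator theorem is just a slightly more hands-on version of the same ingredients.
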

\autoref{th:string} is a generalization in two directions (the objects to cover and the graph to consider) of item~\ref{e:ddir} of \autoref{th:pos}. We note that under ETH the contribution of $k$ cannot be improved to $2^{o(\sqrt{k})}$, according to \autoref{th:neg}.
Finally we observe that \autoref{th:kick_fast} can also be applied to certain classes of sparse graphs.

\begin{restatable}{theorem}{thminorfree}\label{th:minorfree}
For every graph $H$, there is an algorithm solving $K_r$-\cov{} in $H$-minor-free graphs in time 
$
2^{O_{H,r}(k^{2/3}\log k)} \cdot n^{O_r(1)}.
$
\end{restatable}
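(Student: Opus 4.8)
The plan is to derive \autoref{th:minorfree} from \autoref{th:kick_weakly} applied to the (hereditary) class of $H$-minor-free graphs, with the parameter $\alpha = 1/2$. Since $2/(1+1/\alpha)=2/3$ when $\alpha=1/2$, the running time provided by \autoref{th:kick_weakly} is exactly $2^{O_r(k^{2/3}\log k)}\cdot n^{O_r(1)}$; the dependence on $H$ enters only through the implicit constants in the two hypotheses checked below, and carrying it through gives the stated $2^{O_{H,r}(k^{2/3}\log k)}\cdot n^{O_r(1)}$ bound. One could instead invoke \autoref{th:kick_fast}, but that only yields an unspecified $\eps<1$ in the exponent, whereas \autoref{th:kick_weakly} gives the precise value and is applicable here because $H$-minor-free graphs already have linearly many small cliques with no need to track the clique number. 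So the whole argument reduces to verifying the two structural conditions of \autoref{th:kick_weakly}.

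First I would bound the number of cliques of order less than $r$. By the Kostochka--Thomason bound on the average degree of minor-free graphs, an $H$-minor-free graph has average degree $O(|V(H)|\sqrt{\log|V(H)|})$ and is therefore $d$-degenerate for some $d=O_H(1)$. Fixing a degeneracy ordering of the $n$ vertices and observing that every clique is determined by its last vertex $v$ together with a subset of the at most $d$ neighbours of $v$ preceding it, the number of (nonempty) cliques of order less than $r$ is at most $n\sum_{i=0}^{r-2}\binom{d}{i}=O_{H,r}(n)$, which is $O(n)$ for fixed $H$ and $r$.

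Next I would bound the treewidth. By the Alon--Seymour--Thomas separator theorem, every $n$-vertex $H$-minor-free graph has a balanced separator of size $O_H(\sqrt n)$; since every subgraph of an $H$-minor-free graph is again $H$-minor-free, the standard recursive construction of a tree decomposition from balanced separators then gives treewidth $O_H(\sqrt n)=O(n^{1/2})$. With both conditions verified and $\alpha=1/2\in(0,1)$, \autoref{th:kick_weakly} applies and yields the theorem.

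As this is a direct reduction, I do not expect a genuine obstacle: the only external inputs are the two classical facts quoted above (linear degeneracy and $O(\sqrt n)$-size balanced separators of minor-free graphs), and the only point requiring a little care is to make sure the $H$-dependence buried in those constants is reported correctly in the final $O_{H,r}(\cdot)$ in the exponent.
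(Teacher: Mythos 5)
Your proof is correct and follows essentially the same route as the paper, which verifies that $H$-minor-free graphs have property $P_r(0,0,1/2)$ (linearly many small cliques via Kostochka--Thomason degeneracy, treewidth $O_H(\sqrt n)$ via Alon--Seymour--Thomas) and then invokes the main theorem with these parameters to get the exponent $2/3$. The only cosmetic differences are that you route through \autoref{th:kick_weakly} instead of \autoref{th:kick} with $\phi=\gamma=0$, $\alpha=1/2$, and you count the $(<r)$-cliques by a direct degeneracy-ordering argument where the paper cites Wood's bound on the number of cliques in $d$-degenerate graphs.
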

This is a more general statement than \autoref{th:apexfree} in the sense that we are not limited to apex-minor free graphs, to the price of a slightly larger time complexity.

\subsection*{Our techniques}
Our subexponential algorithm for $K_r$-\cov{} of \autoref{th:kick_fast} is obtained as follows. Given $(G,k)$, we first perform in \autoref{sec:covers} a preliminary branching step  whose objective is to get rid of large cliques (i.e., cliques of order at least $k^\eps$ for some  $\eps  \in (0,1)$ that we will fix later).
This step is a folklore technique which is frequently used for any problem where a solution has to contain almost all vertices of a large clique, like \trh{} (or in general $K_r$-\cov{}), \textsc{Feedback Vertex Set}, or \textsc{Odd Cycle Transversal}. After this preprocessing has been performed we can assume that the instances to solve have no clique on more than $k^\eps$ vertices.
Then, we greedily compute an $r$-approximate $K_r$-cover $M$. If $|M| >kr$ we can already answer negatively, so in the following we may assume $|M| \le kr$ and will use the fact that there is no $K_r$ in $G - M$. 

Now, the crucial part of the algorithm is \autoref{sec:petals}.
Informally, the goal of the algorithms described in this section is to extend $M$ into a superset $M'$ together with a new parameter $k'\leq k$, such that $|M'|=O\left (k^{1+c\eps} \right )$ (for some $c>0$) and that vertices of $V(G)\setminus M'$ are irrelevant for the problem of covering $r$-cliques.
In that way, we can remove them, and it remains only to solve $(G[M'],k')$, whose treewidth can be typically bounded by $\sqrt{\omega(G)|M'|}$ in the graph classes we consider.
As $\sqrt{\omega(G)|M'|}=O\left (k^{1/2+\frac{c+1}{2}\eps} \right )$, this leads to a subexponential algorithm. We stress that the high-level description above is not a kernelization because first we actually do not produce a single reduced instance but instead we have to branch and obtain a subexponential (in $k$) number of sub-instances and second because the reduction steps are not carried out in polynomial time, but in subexponential (in $k$) time.

To obtain this set $M'$, we use lemmatas~\ref{lem:flower-picking} and~\ref{lem:flowerS-picking} that are inspired from the following ``virtual branching'' procedure of \cite[Lemma~6.5]{lokSODA22}.
This routine was introduced for \trh{} and works as follows.
It starts with a triangle hitting set $M$ (obtained by greedily packing disjoint triangles), and outputs a slightly larger superset $M'$ 
such that vertices in $G-M'$ are \emph{almost} useless, in the sense that every triangle has at least two vertices in $M'$ (we do not detail here how are handled the triangles with exactly two vertices in $M'$ and refer to \cite{lokSODA22}). This is done as follows.
For a vertex $v \in M$, consider a maximum matching $M(v) \subseteq N(v) \cap (V(G) \setminus M)$. If for every $v\in M$ such matching is small, meaning $|M(v)| \le k^\eps$, then we can define $M' = M \cup \bigcup_{v \in M}M(v)$. We are done as $|M'|=O(k^{1+\eps})$ remains small, and there is no longer a $v \in M'$ with an edge in $N(v) \setminus M'$ (as this would form a triangle outside $M$).
Otherwise, if for some $v\in M$, $|M(v)| > k^\eps$, a solution of \trh{} should either take $v$, or otherwise covers all edges of $M(v)$. In the second case, it would be too costly to guess which vertex is taken in each $e \in M(v)$, so instead the procedure ``absorbs'' $M(v)$ by defining $M' = M \cup M(v)$. This absorption increases the size of $M$, but ``virtually'' decreases the parameter $k$, as it increases by $|M(v)|$ the size of a matching that the solution will have to cover. This leads to a running time typically dominated by the recurrence $f(k)=f(k-1)+f(k-k^\eps)$, which is subexponential in $k$.

Now, coming back to $K_r$-\cov{}, given a set $M$, let us say\footnote{This notion of type-$i$ clique will not be used later and is just introduced for this sketch.} that a \emph{type-$i$ clique} is an $r$-clique $X$ such that $|X \cap M|=i$.
We could remove type-1 cliques by using the previous virtual branching procedure, defining now $M(v)$ as a packing of $r-1$ cliques instead of a packing of edges, but the problem is that, if we want to obtain a set $M'$ as promised (where vertices of $V(G)\setminus M'$ are useless), we also have to remove type-$i$ cliques for $i \in \intv{2}{r-1}$.
However, there is a first obstacle to remove such type-$i$ cliques: as the part common with $M$ (which was before a single vertex in $M$) is now an $i$-clique, we cannot afford to enumerate all possible choices $X'$ of such an $i$-clique in $M$. Indeed, already for $i=2$ we would possibly consider a quadratic number of sets $X'$, so absorbing 
every packing (of $(r-i)$-cliques) $M(X')$  (in the unfortunate case where these are all small) would result in a set $M'=M \cup \bigcup_{X' \subseteq M, |X'|=2}M(X')$ with $|M'| =  \Omega (k^2)$, which is too large for our purpose.
To circumvent this issue, we identified a key property that holds in many geometric graph classes like pseudo-disk or $K_{t,t}$-subgraph-free string graphs: in such graphs, there is only a linear (in the number of vertices) number of $i$-cliques for fixed $i$, and for fixed clique number~$\omega$. In our case, as $\omega$ is small, and $i \le r$ is fixed, this implies that there are $O(|M|)$ such $i$-cliques in $M$. Hence, it allows us to control the size of $M'$.
Of course, dealing with $r$-cliques instead of triangles also raises other problems, in particular related to the way we cover their intersection with $M$ which is no longer a single vertex but a clique. To deal with this issue we had to introduce an annotated variant of the problem where additional sets of vertices have to be covered besides $r$-cliques.

\subsection*{Organization of the paper}
In \autoref{sec:prelim} we give the necessary definitions. We describe the first branching in \autoref{sec:covers} and the second in \autoref{sec:petals}. The algorithm is given in \autoref{sec:kick}. We give applications to selected graph classes in \autoref{sec:apps}.
We conclude with open questions in \autoref{sec:open}.

\section{Preliminaries}
\label{sec:prelim}

\subsection*{Running times}
When stating results related to algorithms, the variable $n$ in the running time always refers to the number of vertices of the graph that is part of the input.
To simplify the presentation we will assume that $r$ is a fixed constant instead of explicitly give it as a parameter in all our algorithms and lemmas.

\subsection*{Graphs}
Unless otherwise stated we use standard graph theory terminology. A \emph{clique} in a graph $G$ is a complete subgraph and when there is no ambiguity we also use \emph{clique} to denote a subset of $V(G)$ inducing a complete subgraph.
The \emph{clique number} of $G$ is the maximum number of vertices of a clique it contains and we denote it by $\omega(G)$.
For any $i\in\N$, an \emph{$i$-clique} is a clique on $i$ vertices and a $(< i)$-clique is a clique on less than $i$ vertices.
For any graph $G$ and subset of vertices $X \subseteq V(G)$, we denote $G-X$ the graph whose vertex set is $V(G) \setminus X$ and edge set is $\{e \in E(G) : e \cap X = \emptyset \}$.
Let $H$ be a graph. We say that $G$ is \emph{$H$-free} if $G$
does not contain $H$ as induced subgraph.

\subsection*{Hypergraphs and covers} A \emph{hypergraph} is simply a collection of sets. So $|\cD|$ refers to the number of sets in the hypergraph $\cD$ and we define $V(\cD) = \bigcup_{D\in \cD} D$.
By $\cK_r(G)$ (resp.\ $\cK_{<r}(G)$) we denote the hypergraph of $r$-cliques (resp.\ $(<r)$-cliques) of $G$, i.e.\ $\cK_r(G) = \{X\subseteq V(G),\ X\ \text{is an $r$-clique}\}$.

A \emph{cover} of $\cD$ is a subset $X\subseteq V(\cD)$ that intersects every hyperedge of $\cD$. A \emph{matching} of $\cD$ is a collection of disjoint hyperedges.
The maximum size of a matching in $\cD$ is denoted by $\nu(\cD)$. Note that a cover of $\cD$ has always size at least $\nu(\cD)$ as it needs to intersect each of the elements of a maximum matching, which are disjoint.

Special cases of covers of hypergraphs are the covers of subgraphs of a graph.
For $G,H$ two graphs, an \emph{$H$-cover} in $G$ is a subset
$X\subseteq V(G)$ such that $G-X$ is $H$-free. In other words it is a cover of the hypergraph of the (induced) subgraphs of $G$ isomorphic to $H$.
In the $H$-\cov{} problem, given a graph $G$ and an integer $k \in \N$, one has to decide whether $G$ has a $H$-cover of size $k$.

\section{Dealing with large cliques}
\label{sec:covers}

A $K_r$-\cov{} can be useful to detect large cliques, as we explain now. This will allow us to identify cliques on which to branch.
\begin{lemma}\label{lem:bkeps}
  Given a graph $G$, a $K_r$-cover $M$, and an integer $p> r$, one can find a $p$-clique of $G$, or correctly conclude none exists, in $O(p^2|M|^p n^{r-1})$ steps.
\end{lemma}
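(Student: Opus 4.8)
The plan is to exploit that $M$ being a $K_r$-cover forces $G-M$ to be $K_r$-free, so that \emph{every} clique $Q$ of $G$ has at most $r-1$ vertices outside $M$: the set $Q\setminus M$ induces a clique of $G-M$, hence has fewer than $r$ vertices. Consequently a $p$-clique $Q$ splits as a disjoint union $Q=A\cup B$ with $B=Q\setminus M$ of size at most $r-1$ and $A=Q\cap M$ of size $p-|B|\ge p-r+1$. The whole search thus reduces to guessing the small part $B$ inside $V(G)\setminus M$ and then guessing the (possibly large) part $A$ inside $M$.

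First I would dispose of the degenerate case: if $M=\emptyset$ then $G$ is $K_r$-free, hence has no $p$-clique (recall $p>r$) and we answer accordingly; from now on $|M|\ge 1$. The algorithm is then: for every $B\subseteq V(G)\setminus M$ with $|B|\le r-1$ and every $A\subseteq M$ with $|A|=p-|B|$, test whether $A\cup B$ induces a clique of $G$; return $A\cup B$ at the first success, and return ``no $p$-clique'' if every test fails. Note that $A$ and $B$ are disjoint and $|A\cup B|=p$, so a successful test does exhibit a $p$-clique. Correctness is immediate in both directions: anything returned is a $p$-clique by construction, and conversely, if $G$ has a $p$-clique $Q$, then putting $B=Q\setminus M$ and $A=Q\cap M$ we have $|B|\le r-1$ by the observation above (so this $B$ is enumerated) and $A\subseteq M$ with $|A|=p-|B|$ (so this $A$ is enumerated for that $B$), whence the test on $A\cup B=Q$ succeeds.

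For the running time there are $\sum_{j=0}^{r-1}\binom{n}{j}=O(n^{r-1})$ choices of $B$ (here $r$ is a fixed constant), and for each $B$ at most $\binom{|M|}{p-|B|}\le |M|^{p-|B|}\le |M|^{p}$ choices of $A$ (using $|M|\ge 1$); checking whether a fixed set of $p$ vertices induces a clique costs $O(p^2)$. Multiplying the three factors yields the announced bound $O(p^2|M|^p n^{r-1})$. One could even shave this to $O(p^2|M|^{p-r+1}n^{r-1})$ by using $|M|\le n$ to see that the $|B|=r-1$ term dominates, but this is not needed.

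I do not expect a genuine obstacle here: the only real idea is that the part of any clique lying outside the $K_r$-cover $M$ has at most $r-1$ vertices, which turns the task into a plain two-level brute-force search. The only points requiring a little care are organising the enumeration so that the time bound comes out exactly as stated, and handling the case $M=\emptyset$ before writing $|M|^p$.
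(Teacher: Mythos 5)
Your proof is correct and follows essentially the same approach as the paper: both rest on the observation that, since $G-M$ is $K_r$-free, every $p$-clique has at least $p-r+1$ vertices in $M$, and then brute-force over the at most $|M|^p$ choices inside $M$ and $O(n^{r-1})$ choices outside, testing each candidate set in $O(p^2)$ time. The only (immaterial) difference is that you enumerate the exact partition $Q\cap M$, $Q\setminus M$ and treat $M=\emptyset$ separately, whereas the paper enumerates $p-r+1$ vertices of $M$ together with $r-1$ arbitrary further vertices of $G$.
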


\begin{proof}
  For each choice of $p - r + 1$ vertices of $M$ and $r-1$ other vertices of $G$ we check whether they form a clique (which takes $O(p^2)$ time), in which case return it and stop. If no clique is found we return that $G$ is $K_p$-free.
  The correctness follows from the following observation: as $G-M$ is
  $K_r$-free, every $p$-clique of $G$ has at least $p - r + 1$ of its
  vertices in~$M$.
\end{proof}

As noted in the proof of \autoref{lem:bkeps}, every large clique of the input graph will have most of its vertices in a $K_r$-cover so we can branch on which these are.

\begin{lemma}\label{lem:cliques}
There is an algorithm that, given two integers $r\geq 1$ and $p>r$, an instance $(G, k)$ of $K_r$-\cov{} and a $K_r$-cover $M$ of $G$, runs in $2^{kr(\log p)/p} |M|^p n^{O(r)}$ steps and returns a collection $\cY$ of $2^{kr(\log p)/p} \cdot |M|^p$ instances of the same problem such that:
\begin{enumerate}
     \item \label{e:equiv} $(G, k)$ is a positive instance if and only if $\mathcal{Y}$ contains a positive instance;
    \item \label{e:noclique} for every $(G', k') \in \cY$, $G'$ has no $p$-clique.
\end{enumerate}
\end{lemma}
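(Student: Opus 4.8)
The plan is to run a recursive branching procedure that, as long as a $p$-clique exists, locates one with \autoref{lem:bkeps} and branches on how a hypothetical $K_r$-cover intersects it. Starting from the given instance $(G,k)$ together with its cover $M$, I would first call \autoref{lem:bkeps}: if it certifies that $G$ is $K_p$-free, this branch is finished and we output $(G,k)$. Otherwise we obtain a $p$-clique $C$. The crucial observation is that every $K_r$-cover $S$ of $G$ contains all but at most $r-1$ vertices of $C$, because $C\setminus S$ induces a clique in the $K_r$-free graph $G-S$ and hence has fewer than $r$ vertices. Since $p>r$ we can enlarge $C\setminus S$ to a set $D\subseteq C$ with $|D|=r-1$; then $S\setminus(C\setminus D)$ is a $K_r$-cover of $G-(C\setminus D)$ of size $|S|-(p-r+1)$, and conversely re-inserting the $p-r+1$ vertices of $C\setminus D$ into any $K_r$-cover of $G-(C\setminus D)$ gives one of $G$. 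Therefore $(G,k)$ is positive if and only if $(G-(C\setminus D),\,k-(p-r+1))$ is positive for at least one of the $\binom{p}{r-1}$ choices of $D$, and the procedure branches on these choices, each time deleting $C\setminus D$, decreasing the budget by $p-r+1$, and replacing $M$ by $M\setminus(C\setminus D)$ --- which remains a $K_r$-cover (induced subgraphs of $K_r$-free graphs are $K_r$-free) and does not grow. Branches in which the budget drops below $0$ are discarded, and $\cY$ collects the instances produced at the leaves.

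With this in place, item~\ref{e:equiv} follows by induction on $|V(G)|$ from the equivalence established in the branching step, and item~\ref{e:noclique} is immediate, since an instance is emitted only at a leaf where \autoref{lem:bkeps} has certified the absence of a $p$-clique. For the size of $\cY$ and the running time, I would observe that the branching tree has out-degree $\binom{p}{r-1}$ and that the budget decreases by exactly $p-r+1\ge 1$ at each step, so every root-to-leaf path has at most $k/(p-r+1)+1$ internal nodes and the tree has at most $\binom{p}{r-1}^{\,k/(p-r+1)+1}$ leaves; each node costs one call to \autoref{lem:bkeps} plus polynomial-time bookkeeping, and since every internal node has at least two children (as $\binom{p}{r-1}\ge\binom{r+1}{2}\ge 2$ for $r\ge 2$) the number of nodes is linear in $|\cY|$, which gives the claimed running time. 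The remaining point is then purely arithmetical: that $\binom{p}{r-1}^{\,k/(p-r+1)+1}\le 2^{kr(\log p)/p}\cdot|M|^p$ for every $p>r$. Here one uses that whenever a branching actually happens we have $|M|\ge p-r+1\ge 2$ (by the proof of \autoref{lem:bkeps}, each $p$-clique has at least $p-r+1$ of its vertices in $M$), so the trailing factor $\binom{p}{r-1}$ and the discrepancy between $p-r+1$ and $p$ in the exponent can both be charged to $|M|^p$, together with the estimate $\frac{1}{p-r+1}\log\binom{p}{r-1}\le\frac{r}{p}\log p$.

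The part I expect to require care is precisely that last inequality, and in particular making it hold uniformly for small $p$: the crude bound $\binom{p}{r-1}\le p^{r-1}$ yields $\frac{1}{p-r+1}\log\binom{p}{r-1}\le\frac{r}{p}\log p$ cleanly only once $p\ge r(r-1)$, and for the range $r<p<r(r-1)$ one must use the exact size of $\binom{p}{r-1}$ --- which is exactly why the statement carries the extra slack $|M|^p$. Everything else --- both directions of the branching equivalence, the invariant that $M$ stays a valid $K_r$-cover, and the tree-size accounting --- is routine.
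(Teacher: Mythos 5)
Your proposal is correct and takes essentially the same route as the paper: locate a $p$-clique with \autoref{lem:bkeps}, branch on how the solution meets it, recurse on the reduced instance with cover $M$ minus the deleted vertices, and bound the recursion by depth roughly $k/(p-r+1)$ with branching factor $p^{O(r)}$; the only cosmetic difference is that you branch on the $\binom{p}{r-1}$ choices of the kept $(r-1)$-set (always deleting exactly $p-r+1$ vertices), while the paper branches over all subsets $X\subseteq K$ with $p-r+1\le |X|\le k$. The final arithmetic step you flag as delicate for small $p$ is treated no more carefully in the paper (it simply asserts $p^{k(r-1)/(p-r+1)}\in 2^{kr(\log p)/p}$ up to the stated $|M|^p n^{O(r)}$ factors), and in the intended use $p=\lceil k^{\eps}\rceil$ is large, so this does not constitute a substantive divergence.
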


\begin{proof}
    The algorithm is the following:
    \begin{enumerate}
      \item \label{e:base} Using the algorithm of \autoref{lem:bkeps} on $G$ and $M$, we find a $p$-clique $K$ (if none is found return $\cY = \{(G, k)\}$).
      \item We initialize $\cY = \emptyset$.
      \item Observe that any solution contains at least $p - r + 1$ vertices from $K$. For every subset $X$ of $K$ with $p - r + 1\leq |X| \leq k$ vertices:
      \begin{enumerate}
          \item \label{e:rec} We look for solutions $S$ that contain $X$ with a recursive call on $(G-X, k-|X|)$ and $M\setminus X$;
          \item We then add the resulting collection of instances to $\cY$.
      \end{enumerate}
      \item Return $\cY$.
    \end{enumerate}
    
    Regarding correctness, Item \eqref{e:noclique} follows from the base case of the recursion (step \ref{e:base}) and Item \eqref{e:equiv} can be proved by induction on the number of recursive calls, using the observation that $(G,k)$ has a solution containing a set $X$ if and only if $(G-X, k-|X|)$ has a solution.
    
    We denote by $T_{r,p}(n,k)$ the time complexity of the above algorithm with parameters $r,p, (G, k)$ where $|G| = n$.
    The time taken by each computation step is the following:
    \begin{enumerate}
        \item Finding a $p$-clique takes time $O(p^2
        |M|^pn^{r-1})$ by \autoref{lem:bkeps}.
        \item When a $p$-clique $K$ is found we consider at most $p^{r-1}$ subsets $X$ on which we perform a recursive call of cost at most $T_{r,p}(n-p+r-1, k-p+r-1)$.
    \end{enumerate}
    
    So
    \[
    T_{r,p}(n,k) \in |M|^p n^{O(r)} + p^{r-1} \cdot T_{r,p}(n-p+r-1, k-p+r-1).
    \]
    We deduce 
    \begin{align*}
        % ici j'ai laissé la formule détaillée au cas où le epsilon final en dépende (à voir si on garde)
        T_{r,p}(n,k) &\in p^{k(r-1)/(p-r+1)} |M|^pn^{O(r)}\\
        &\in 2^{kr(\log p)/p} |M|^p n^{O(r)},
    \end{align*}
    as desired. Note that a similar recurrence can be used to bound the number of output instances.
\end{proof}

\section{Picking petals}
\label{sec:petals}

Let $i\in \intv{1}{r-1}$ and let $X$ be an $i$-clique in a graph $G$.
An \emph{$r$-petal} of $X$ is a subset of vertices of $G-X$ that together with $X$ forms an $r$-clique. We denote by $\ext_r(G,X)$ the hypergraph of $r$-petals of $X$, i.e., $\ext_r(G,X) = \{Y \subseteq V(G)\setminus X,\ X\cup Y \in \cK_r(G)\}$. See \autoref{fig:petals} for an illustration.

\begin{figure}
    \centering
    \includegraphics{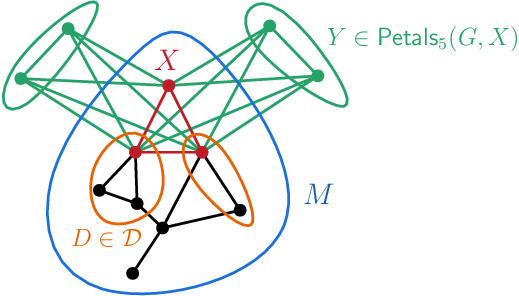}
    \caption{A $3$-clique $X$ with two $5$-petals that live in $G-M$. Here $\mathcal D$ contains two hyperedges, represented in orange. Observe that no hyperedge of $\mathcal D$ is contained in $X$, so $X$ is a lush $3$-clique.}
    \label{fig:petals}
\end{figure}

In order to deal more easily with the recursive steps in our algorithms we introduce \ann{}-$K_r$-\cov{}, an annotated version of $K_r$-\cov{} where a number of choices have already been made, which is recorded by extra vertex subsets that the solution is required to cover.

In the \ann{}-$K_r$-\cov{} problem, one is given a triple $(G, \cD, k)$ where $G$ is a graph, $\cD\subseteq \cK_{<r}(G)$ and $k$ is an integer. A \emph{solution} to this instance is a set of vertices that covers $\cK_r(G)$ and $\cD$ and has at most $k$ vertices. The question is whether the input instance admits a solution.

In the forthcoming algorithms it will also be more convenient to consider, together with an instance, a non-optimal solution $M$. This motivates the following definition.
A \emph{context} is a pair $((G,\cD,k), M)$ where $(G, \cD, k)$ is an instance of \ann{}-$K_r$-\cov{}, $M$ is a $K_r$-cover (possibly larger than $k$), and $V(\cD)\subseteq M$. We say that a context is \emph{positive} if the instance of \ann{}-$K_r$-\cov{} it contains is, and \emph{negative} otherwise.

Given a context $((G,\cD,k),M)$ and $i\in \intv{1}{r-1}$, a \emph{lush $i$-clique} is an $i$-clique $X$ of $G[M]$ that has an $r$-petal in $G-M$ and such that no $D\in \cD$ is subset of $V(X)$. See \autoref{fig:petals} for an example.

Informally, if $X$ is a lush clique then we are not guaranteed that covering $\cD$ alone does always also cover the cliques induced by $X$ and its petals, so we have to take care of them separately. Ideally we would like to get rid of lush cliques so that we can focus on $\cD$ to solve the problem.
We say that the context $((G,\cD,k), M)$ is \emph{$i$-stripped} if for every $i'<i$ it has no lush $i'$-clique. These notions are motivated by the following easy lemma.

\begin{lemma}\label{lem:kernel}
Let $((G, \cD)$ be an $r$-stripped context.
The instances $(G, \cD, k)$ and $(G[M], \cD, k)$ of \ann{}-$K_r$-\cov{} are equivalent.
\end{lemma}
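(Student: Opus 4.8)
The plan is to show both directions of the equivalence, where one direction is immediate and the other uses the $r$-stripped hypothesis crucially. First I would observe that since $G[M]$ is an induced subgraph of $G$, every $r$-clique of $G[M]$ is an $r$-clique of $G$, and every $D\in\cD$ lives inside $M$ (because $V(\cD)\subseteq M$ in a context), so $\cD$ is unchanged when we restrict to $G[M]$. Hence any solution of $(G,\cD,k)$ is in particular a set of at most $k$ vertices that covers $\cK_r(G[M])$ and $\cD$ — but wait, a solution of $(G,\cD,k)$ need not be a subset of $M$, so I should first argue we may intersect it with $M$. Actually the cleaner route: a solution $S$ of $(G[M],\cD,k)$ trivially gives a solution of $(G,\cD,k)$ once we know $S$ also covers every $r$-clique of $G$ — so the real content is the direction ``$(G[M],\cD,k)$ positive $\Rightarrow$ $(G,\cD,k)$ positive'', and for the other direction one takes a solution $S$ of $(G,\cD,k)$, replaces it by $S\cap M$, and checks this still works.

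For the forward direction, let $S$ be a solution of $(G,\cD,k)$ and set $S' = S\cap M \subseteq M = V(G[M])$, so $|S'|\le k$. I need $S'$ to cover $\cK_r(G[M])$ and $\cD$. Since $V(\cD)\subseteq M$, every $D\in\cD$ satisfies $D\cap S = D\cap S'$, so $S'$ covers $\cD$. Now let $K$ be an $r$-clique of $G[M]$; since $S$ covers $\cK_r(G)$ and $K\in\cK_r(G)$, there is some $v\in K\cap S$, and as $K\subseteq M$ this $v\in M$, so $v\in S'$. Thus $S'$ covers $\cK_r(G[M])$ and $\cD$, so $(G[M],\cD,k)$ is positive. (This direction does not use the $r$-stripped hypothesis at all.)

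For the converse, let $S$ be a solution of $(G[M],\cD,k)$; I claim $S$ is already a solution of $(G,\cD,k)$. It covers $\cD$ by the same argument as above, and $|S|\le k$, so the only thing to verify is that $S$ covers every $r$-clique $K$ of $G$. Write $i = |K\cap M|$. If $i = r$ then $K\subseteq M$, so $K\in\cK_r(G[M])$ and $S\cap K\ne\emptyset$ by assumption. If $i < r$, set $X = K\cap M$, which is an $i$-clique of $G[M]$ with $i\in\intv{1}{r-1}$ (it is nonempty because $M$ is a $K_r$-cover of $G$, so $K\not\subseteq V(G)\setminus M$), and $K\setminus X$ is an $r$-petal of $X$ that lives in $G-M$; moreover since the context is $r$-stripped, $X$ cannot be a lush $i'$-clique for the relevant $i' = i < r$ — but $X$ does have an $r$-petal in $G-M$, so the only way out is that some $D\in\cD$ is a subset of $V(X) = X \subseteq K$. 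Then $S\cap D\ne\emptyset$ gives $S\cap K\ne\emptyset$. So in all cases $S$ meets $K$, and $S$ covers $\cK_r(G)$.

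The main obstacle — really the only subtlety — is being careful with the case analysis on $i=|K\cap M|$ in the converse direction and making sure the ``$r$-stripped'' definition is invoked with the correct index: a lush $i'$-clique is forbidden for every $i' < r$, which is exactly what lets us conclude that the $i$-clique $X = K\cap M$ (with $i\le r-1$) must have some $D\in\cD$ inside it, since it manifestly has a petal in $G-M$. I should also double-check the degenerate possibility $K\cap M = \emptyset$ is excluded, which it is because $G-M$ is $K_r$-free. Everything else is bookkeeping with $V(\cD)\subseteq M$.
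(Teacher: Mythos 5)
Your proposal is correct and follows essentially the same argument as the paper: the forward direction is immediate (you just make explicit the intersection with $M$, which the paper leaves implicit), and the converse uses exactly the paper's reasoning that for an $r$-clique $K$ not contained in $M$, the nonempty $i$-clique $X = K\cap M$ has the petal $K\setminus M$ in $G-M$, hence by $r$-strippedness some $D\in\cD$ lies inside $X\subseteq K$ and is covered.
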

\begin{proof}
First, recall that as $((G, \cD, k), M)$ is a context, $V(\cD) \subseteq M$ so $(G[M], \cD, k)$ is indeed a valid instance of \ann{}-$K_r$-\cov{}. Also as $\cK_r(G[M]) \subseteq \cK_r(G)$, if $(G, \cD, k)$ has a solution then $(G[M], \cD, k)$ does. So we only need to show the other direction.
Let $v \in V(G)\setminus M$ and suppose that $G$ contains some $r$-clique $X$ with $v\in X$.
The set $M\cap X$ is not empty because $M$ is a $K_r$-cover. Note that $X \cap M$ has a petal $X\setminus M$ disjoint from $M$. It cannot form a lush $i$-clique (for $i = |M\cap X|<r$) as this would contradict the assumption that the considered context is $r$-stripped, so there is some $D\in \cD$ that is subset of $X\cap M$.
Therefore every solution of $(G[M], \cD, k)$ does cover $X$ in $G$, as it covers $D$. As this holds for every $v$ and $X$ as above, every solution of $(G[M], \cD, k)$ is a solution of $(G, \cD, k)$, as desired.
\end{proof}

By the above lemma, if we manage to get rid of lush cliques, we can obtain an equivalent instance whose graph is not larger than $G[M]$.
As we will see, we are able to handle lush cliques to the price of slightly increasing the size of $M$ and producing several instances to represent the solutions of the original instance.

\begin{lemma}\label{lem:lushcliques}
  There is an algorithm that, given an integer $i<r$ and an $i$-stripped context $((G,\cD,k), M)$, runs in time $n^{O(r)}$ and either correctly concludes that the context is $(i+1)$-stripped, or returns a lush $i$-clique~$X$.
\end{lemma}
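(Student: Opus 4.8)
The plan is to enumerate all candidate cliques $X$ and, for each of them, test the three defining conditions of lushness by brute force. Since $i<r$ is a constant and $M\subseteq V(G)$ has at most $n$ vertices, there are at most $n^i\le n^{r}$ subsets $X\subseteq M$ of size $i$ to inspect. For each such $X$ I would first check in $O(i^2)$ time whether $X$ induces a clique of $G[M]$, discarding it otherwise. Next I would check whether some $D\in\cD$ satisfies $D\subseteq X$: it suffices to test, for each of the at most $2^i$ subsets $S\subseteq X$, whether $S\in\cD$; since $\cD\subseteq\cK_{<r}(G)$ and hence $|\cD|\le|\cK_{<r}(G)|=n^{O(r)}$, this costs $n^{O(r)}$ time overall (even faster with a suitable lookup structure). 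If such a $D$ exists, $X$ is not lush and is discarded.

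The only remaining point is to decide, for a clique $X$ that passed the two tests above, whether it has an $r$-petal inside $G-M$. The key observation is that such a petal, if it exists, has exactly $r-i\le r-1$ vertices, each of which must lie in the common neighbourhood $W_X:=\left(\bigcap_{x\in X}N_G(x)\right)\setminus M$ of $X$ outside $M$; conversely, since $X\subseteq M$ is disjoint from $W_X$, any $(r-i)$-clique of $G[W_X]$ together with $X$ forms an $r$-clique, i.e.\ is an $r$-petal of $X$ in $G-M$. Hence $X$ has an $r$-petal in $G-M$ if and only if $G[W_X]$ contains a $(r-i)$-clique, and because $r-i$ is a constant this is decidable in time $O\!\left((r-i)^2 n^{\,r-i}\right)=n^{O(r)}$ by trying all $(r-i)$-subsets of $W_X$ (computing $W_X$ itself costs $O(in)$). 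If this last test succeeds, $X$ is a lush $i$-clique and we return it (a witnessing petal comes for free as a by-product).

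If no $X$ passes all three tests, then the context has no lush $i$-clique; combined with the hypothesis that it is $i$-stripped — i.e.\ has no lush $i'$-clique for any $i'<i$ — this yields that it has no lush $i'$-clique for any $i'\le i$, which is precisely the statement that it is $(i+1)$-stripped, so we report this. The overall running time is the number of candidates, $n^{O(r)}$, times $n^{O(r)}$ work per candidate, hence $n^{O(r)}$ as claimed.

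As for difficulties: there is essentially no deep obstacle here, the whole point being that lushness is a \emph{local} property that can be tested by brute force. The two things to be careful about are (i) not attempting to enumerate the (possibly super-polynomially many) $r$-petals of $X$, but only testing their existence through the small common-neighbourhood set $W_X$ and the fact that petals have constantly many vertices; and (ii) correctly invoking the $i$-stripped hypothesis to upgrade ``no lush $i$-clique found'' into ``$(i+1)$-stripped''.
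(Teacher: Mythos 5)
Your proposal is correct and follows essentially the same approach as the paper: iterate over the $i$-cliques of $G[M]$, discard those containing some $D\in\cD$, and for the rest test whether the common neighbourhood outside $M$ contains an $(r-i)$-clique, returning such an $X$ if found and otherwise concluding $(i+1)$-strippedness from the $i$-stripped hypothesis. Your write-up just spells out a few details (the two-way correspondence between $(r-i)$-cliques of $G[W_X]$ and $r$-petals, and the brute-force cost estimates) that the paper leaves implicit.
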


\begin{proof}
The input context is $i$-stripped so we only have to check whether it contains a lush $i$-clique.
We iterate over the $i$-cliques of $G[M]$. For every such clique $X$, we first check if $D\subseteq V(X)$ for some $D\in \cD$. If so $X$ is not a lush $i$-clique so we can move to the next choice of $X$. Otherwise we check if the common neighborhood of the vertices of $X$ in $G-M$ has an $(r-i)$-clique. If so this is an $r$-petal so $X$ is lush and we can return it. Otherwise we continue to the next choice of $X$. If the iteration terminates without detecting a lush $i$-clique, we can safely return that the input context is $(i+1)$-stripped.
The complexity bound follows from the fact that $|\cD| = n^{O(r)}$ and that $G[M]$ has $n^{O(r)}$ $i$-cliques.
\end{proof}

The following lemma is the key branching step in our subexponential algorithms for $K_r$-\cov{}.

\begin{lemma}\label{lem:flower-picking}
  There is an algorithm that, given $i \in \intv{1}{r-1}$ and $\lambda\in \intv{1}{k}$ and an $i$-stripped context $((G,\cD,k), M)$ where we denote by $\zeta$ the number of $i$-cliques in $G[M]$, runs in time $ 2^{O((k/\lambda) \cdot \log \zeta)} \cdot n^{O(r)}$
  and returns a collection $\mathcal{Z}$ of size $2^{O((k/\lambda) \cdot \log \zeta)}$ of 
  $(i+1)$-stripped contexts such that:
  \begin{enumerate}
      \item \label{e:posepos} $((G, \cD, k), M)$ is a positive context if and only $\cZ$ contains one; and
      \item \label{e:M} for every $((G', \cD', k'), M') \in \cZ$, $M\subseteq M'$, $|M'|\leq |M| + r(\lambda \zeta +k)$, and $k'\leq k$.
  \end{enumerate}
\end{lemma}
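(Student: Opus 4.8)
The plan is to iterate the ``virtual branching'' idea described in the introduction, now applied to $i$-cliques of $G[M]$ instead of single vertices. First I would enumerate the $\zeta$ many $i$-cliques of $G[M]$; there are $\zeta = n^{O(r)}$ of them, computable in polynomial time. For each such $i$-clique $X$ that is lush (detected as in \autoref{lem:lushcliques}, or rather by the same common-neighborhood test), I would compute greedily a maximal matching $\mathcal{M}(X)$ in the hypergraph $\ext_r(G,X)$ restricted to $G-M$, i.e.\ a maximal collection of pairwise disjoint $(r-i)$-cliques in the common neighborhood of $X$ in $G-M$. Then I split into two regimes depending on a threshold $\lambda$:

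\emph{Small petals.} If $|\mathcal{M}(X)| < \lambda$ for every lush $i$-clique $X$, then I do not branch; instead I absorb all these matchings into $M$ at once, setting $M' = M \cup \bigcup_{X}V(\mathcal{M}(X))$ where the union is over lush $i$-cliques $X$. Since each $\mathcal{M}(X)$ has fewer than $\lambda$ hyperedges, each of size $r-i < r$, we add at most $r\lambda$ vertices per lush clique, hence at most $r\lambda\zeta$ vertices in total, giving $|M'| \le |M| + r\lambda\zeta$, well within the bound of item~\eqref{e:M}. I keep $\cD' = \cD$ and $k' = k$. After this absorption, no lush $i$-clique remains: any $i$-clique $X$ of $G[M]$ that still had an $r$-petal $Y \subseteq V(G)\setminus M'$ would give an $(r-i)$-clique $Y$ in the common neighborhood of $X$ that is disjoint from $V(\mathcal{M}(X))$, contradicting maximality of $\mathcal{M}(X)$ — unless some $D \in \cD$ is subset of $V(X)$, in which case $X$ is not lush by definition. (I must be slightly careful that absorbing one matching could create new $i$-cliques in $G[M']$, but the absorbed vertices lie in $G-M$ which is $K_r$-free, and an $i$-clique using absorbed vertices would itself be part of an $r$-petal structure; I would verify that the maximality argument still closes, possibly by re-running the whole lemma once more — but since the set of $i$-cliques of $G[M']$ is a superset, the cleanest route is to phrase the absorption step and then invoke \autoref{lem:lushcliques} to certify $(i+1)$-strippedness, re-absorbing if needed, which terminates since $M$ only grows and is bounded.) So the single output context is $((G,\cD,k), M')$, and $\cZ$ has size $1$.

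\emph{Large petals.} If some lush $i$-clique $X$ has $|\mathcal{M}(X)| \ge \lambda$, then I branch on $X$. Any solution $S$ of the instance either (a) covers the $i$-clique $X$, i.e.\ $S \cap V(X) \ne \emptyset$ — there are at most $i \le r$ choices of which vertex of $X$ lies in $S$, and for each I remove that vertex: recurse on $(G - v, \cD', k-1)$ where $\cD'$ is $\cD$ with $v$ deleted from all sets and any set that becomes empty removed — wait, more carefully, I add $V(X)$ as a new hyperedge to $\cD$, since a set covering $X$ is exactly a set hitting that hyperedge, and recurse without deleting, keeping $k$; or (b) $S$ does not meet $X$, so $S$ must cover all of $\mathcal{M}(X)$, a matching of size $\ge \lambda$ in the hypergraph of $(<r)$-cliques; I absorb by adding $V(X) \cup V(\mathcal{M}(X))$ to $M$ and adding $\{X \cup Y : Y \in \mathcal{M}(X)\}$... no — I add the $r$-cliques' petals appropriately. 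The clean accounting: in case (b) I record that $S$ must hit each of the $\ge\lambda$ disjoint petals, which I encode by adding $\mathcal{M}(X)$'s hyperedges — actually the petals $Y$ are not subsets of the updated $M$ a priori, so I put $V(\mathcal{M}(X)) \cup V(X)$ into $M'$, add to $\cD$ the hyperedges $\{X\cup Y \setminus(\text{one vertex of }X)\}$... The precise bookkeeping here is the main obstacle (see below); the key quantitative point is that in case (b) the parameter $k$ does not drop but a packing of $\ge \lambda$ disjoint sets that $S$ must cover has been exposed, so in the recursion tree the ``budget'' of lush cliques to process decreases. Formally I would track the recurrence on a potential $\Phi = $ (number of disjoint covered-subsets forced so far) with each case-(b) branch increasing $\Phi$ by $\lambda$ up to a cap of $k$ (beyond which the context is negative and we prune), and each case-(a) branch decreasing $k$ by $1$. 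This gives at most $2r$ children per node and depth $O(k/\lambda + k) = O(k/\lambda)$ counting both kinds of steps along any root-leaf path, hence $\le (2r)^{O(k/\lambda)} \cdot \zeta = 2^{O((k/\lambda)\log\zeta)}$ (folding in the choice of which of the $\zeta$ cliques we branched on, amortized) leaves. The size bound $|M'| \le |M| + r(\lambda\zeta + k)$ follows: absorptions of type (b) contribute at most $r\lambda$ new vertices each and there are at most $k/\lambda$ of them on any path giving $rk$, while the one-shot small-petal absorption contributes $r\lambda\zeta$; plus $V(X)$'s, at most $rk$ more — I would tune the constants so it fits $r(\lambda\zeta+k)$. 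Each leaf is then post-processed by \autoref{lem:lushcliques} to confirm $(i+1)$-strippedness, and we keep only leaves that are not trivially negative; item~\eqref{e:posepos} holds because the case split (a)/(b) on each branched clique is exhaustive and each transformation preserves solvability (adding $V(X)$ to $\cD$ when we assume $S$ meets $X$; removing irrelevant vertices is sound since the absorbed petals are $K_r$-free outside and their cliques are recorded).

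\textbf{Main obstacle.} The genuinely delicate part is the \emph{bookkeeping of $\cD$ and $M$} in the large-petal case: I must ensure that after absorbing $V(\mathcal{M}(X))$ into $M$, the invariant $V(\cD) \subseteq M$ of a context is maintained, that the newly added hyperedges are genuine $(<r)$-cliques of $G$ (so that a cover of $\cK_r(G) \cup \cD$ remains the right object), and that ``$S$ misses $X$ $\Rightarrow$ $S$ covers all petals of $\mathcal{M}(X)$'' is faithfully encoded — the subtlety being that a petal $Y$ together with $X$ forms an $r$-clique, so if $S$ misses $X$ it must hit $Y$, and $Y \subseteq V(G)\setminus M$ initially but lands in $M'$ after absorption, so adding $Y$ (an $(r-i)$-clique, hence a $(<r)$-clique) to $\cD$ is legitimate. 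I also need to double-check that the potential argument correctly caps total work at $2^{O((k/\lambda)\log\zeta)}$ and not, say, $\zeta^{O(k/\lambda)}$ — this requires that the factor $\zeta$ (choice of which clique to branch on) appears additively per node rather than multiplicatively per path, which it does because at each node we branch on one specific clique found by scanning, not on all of them. Finally, termination of the outer loop (repeatedly applying \autoref{lem:lushcliques} and re-absorbing) must be argued from the fact that $|M|$ is monotonically nondecreasing and bounded by $n$, or better, that once all current lush $i$-cliques are killed no new ones arise because the absorbed vertices came from the $K_r$-free graph $G-M$.
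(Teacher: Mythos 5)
Your high-level plan matches the paper's: absorb small maximal matchings of $r$-petals into $M$, and for a lush $i$-clique $X$ with a large matching branch between ``a solution hits $X$'' (record $X$ in $\cD$) and ``a solution covers all petals of the matching'' (record the petals in $\cD$ and absorb them into $M$). However there are two genuine gaps.

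First, the running-time analysis as written does not give the claimed bound. You assert ``depth $O(k/\lambda + k) = O(k/\lambda)$'' and leaf count $(2r)^{O(k/\lambda)}\cdot\zeta$. Neither is right. Case-(a) branches (add $X$ to $\cD$) do not decrease $k$ and do not increase the forced packing $\Phi$; the only thing bounding their number on a root-to-leaf path is that each lush $i$-clique is processed at most once, and there are at most $\zeta$ of them. So a root-to-leaf path can have length up to $\zeta$, not $O(k/\lambda)$. The correct accounting is that along any path at most $k/\lambda$ nodes are case-(b) splits (each raises the packing size $|\cP^*|$ by $\geq\lambda$ and we prune when $|\cP^*|>k$), interspersed among at most $\zeta$ total branching nodes; the number of leaves is then on the order of $\sum_{j\le k/\lambda}\binom{\zeta}{j}\in\zeta^{O(k/\lambda)}=2^{O((k/\lambda)\log\zeta)}$. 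This is exactly what the paper's $T(x,y)$ recurrence establishes, and it is needed: your $(2r)^{O(k/\lambda)}\cdot\zeta$ is both incorrect as a bound on the tree and does not equal $2^{O((k/\lambda)\log\zeta)}$.

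Second, the concern you raise about new lush $i$-cliques appearing in $G[M']$ after absorption is real, but the proposed fix (``re-running the whole lemma, re-absorbing if needed, terminating because $|M|$ is bounded'') does not come with the size bound $|M'|\le|M|+r(\lambda\zeta+k)$: each re-run could multiply the additive term, and $\zeta$ itself grows with $M'$. The clean resolution (which the paper proves as its Claim that every lush $i$-clique encountered is a subset of the original $M$) is that no such new lush $i$-clique can arise: if an $i$-clique $X'$ with $X'\not\subseteq M$ had an $r$-petal disjoint from $M'\supseteq M$, then $X'\cap M$ together with the petal $(X'\setminus M)\cup Y$ would witness a lush $i'$-clique with $i'=|X'\cap M|<i$ in the original context, contradicting the hypothesis that the input context is $i$-stripped (and the ``no $D\subseteq X'$'' clause transfers to $X'\cap M$ because $V(\cD)\subseteq M$). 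Without this observation your argument either loses the size bound or is incomplete. A smaller inaccuracy in the same vein: your statement that case-(b) absorptions contribute ``at most $r\lambda$ new vertices each'' is wrong (such a matching has size $\geq\lambda$, not $\leq\lambda$); what actually bounds the total is that the sum of the matching sizes over case-(b) events on one path is at most $k$, giving the $rk$ term — you do land on the right total, but via an incorrect per-step bound.
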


\begin{proof}
Let us define an auxiliary algorithm that takes as input $(i,\lambda,(G,\cD,k), M, \cP^*)$, where $(i,\lambda,(G,\cD,k), M)$ is as specified in the statement of the lemma, and $\cP^*$ is a matching of $\cD$, and output the promised collection $\cZ$. 
Such an auxiliary algorithm will imply the lemma, as we will run it with parameters $(i,\lambda,(G,\cD,k), M, \emptyset)$.
For simplicity, as $i$ and $\lambda$ will not change between recursive calls: we denote by $((G,\cD,k), M, \cP^*)$ the parameter of this auxiliary algorithm, 
which is defined as follows:
  \begin{enumerate}
   \item \label{e:tropcher} If $k < |\cP^*|$, we can immediately return $\cZ=\emptyset$.
    \item \label{e:stripped} Run the algorithm of \autoref{lem:lushcliques} on $i$ and $((G, \cD, k), M)$. If no lush $i$-clique is found then the input context is already $(i+1)$-stripped so we return $\cZ = \{((G, \cD, k), M)\}$. Otherwise let $X$ denote the lush $i$-clique we found.
    \item Construct the hypergraph $\cB_X$ of those $r$-petals of $X$ that are subset of $V(G)\setminus M$.
    \item \label{e:greeeedy} Greedily compute a maximal matching $\cP$ in $\cB_X$.
    \item \label{e:mincr} If $\tilde{\nu}_X \leq \lambda$, we return the result of the recursive call with parameters $((G, \cD, k), M\cup V(\cP),\cP^*)$  (and quit).
    \item\label{e:vroom} Otherwise, we investigate the different ways to cover the $r$-cliques induced by $X$ and $\cB_X$ (via $X$ or via $\cB_X$) as follows.
    \begin{enumerate}
      \item \label{e:center} Solutions containing some (yet unspecified) vertex of $X$: this is done by a recursive call with parameters $((G, \cD \cup \{X\}, k), M,\cP^*)$. We call $\cZ_1$ the resulting family.
      \item \label{e:petals} Solutions covering $\cP$.\footnote{Notice that some of these solutions have possibly already been investigated in the previous step. For our purpose it is not an issue however to consider several times the same solution.} As $\cP$ is disjoint from $M$ (hence from $\cP^*$), such solutions exist only if $k \geq |\cP^*| + |\cP|$. In this case we define $\mathcal{Z}_2$ as the result of the recursive call with parameters
        \[
            ((G, \cD\cup \cP, k), M\cup V(\cP), \cP^* \cup \cP).
        \]
        
        Otherwise we set $\mathcal{Z}_2 = \emptyset$.
        \end{enumerate}
        \item \label{e:basecase} We return $\mathcal{Z}_1 \cup \mathcal{Z}_2$.
  \end{enumerate}

Observe first that in step \eqref{e:petals}, the last parameter $\cP^* \cup \cP$ is a matching as required, as in particular $V(\cP) \subseteq V(G)-M$ and $V(\cP^*) \subseteq M$, by definition.
  We first prove the following fragment of item~\eqref{e:M}.
  \begin{claim}\label{cl:Mgonfle}
  For every $((G', \cD', k'), M')\in \cZ$, $M\subseteq M'$ and $k'\leq k$. 
  \end{claim}
  \begin{proof}
  The only places where $M$ is updated are steps~\ref{e:mincr} and \ref{e:petals}, where new vertices are added to it. Besides we never change the value of the parameter $k$.
  \end{proof}
  
  Let us describe the recursion tree $T$ of the above algorithm on some input $((G, \cD, k), M,\cP)$.
  The nodes of this tree are inputs. The root is $((G, \cD, k), M,\emptyset)$ and a node $s'$ is child of a node $s$ if a call of the above algorithm on the input $s$ triggers a call on the input $s'$. So the leaves of this tree are the inputs that do not trigger any recursive call.
  
  Let us consider a path from the root of $T$ to some leaf. We denote by
  \[
  ((G_j, \cD_j, k_j), M_j,\cP^*_j)_{j\in \intv{1}{\ell}}
  \]
  the inputs along this path, and by $C_j = (G_j, \cD_j, k_j), M_j)$ 
  the corresponding contexts,  with
  $((G_1, \cD_1, k_1), M_1,\cP^*_1)= ((G, \cD, k), M,\emptyset)$
  and $((G_\ell, \cD_\ell, k_\ell), M_\ell,\cP_\ell^*)$ corresponding to the aforementioned leaf.
  Also, for every $j\in \intv{1}{\ell-1}$ we denote by $X_j$ the lush $i$-clique of $C_j$ that is considered in the corresponding call.
  
  We first show that all lush $i$-cliques considered along this path belong to the cover $M=M_1$ of the initial context.
  
  \begin{claim}\label{cl:sameM}
  For every $j \in \intv{1}{\ell-1}$, $X_j \subseteq M$. % oui c'est bien M ici
  \end{claim}
  \begin{proof}
    Suppose towards a contradiction that for some $j$, $X_j \nsubseteq M$. Observe that since $C_j$ is not a leaf, $\cB_{X_j}$ is not empty so $X_j$ induces an $r$-clique together with some $r$-petal $B\in \cB_{X_j}$. Recall that $B$ is disjoint from 
    $M_j$, by definition of $\cB_{X_j}$. As $M_j$ is a superset of $M$ (\autoref{cl:Mgonfle}), $B$ is disjoint from $M$ as well. So $M$ intersects $X_j$ otherwise the $r$-clique $X_j\cup B$ would not be covered by $M$.
    Let $i'= |X_j\cap M|$. Note that $X_j\cap M$ is a lush $i'$-clique of $C_1$ since it has an $r$-petal $B\cup (X\setminus M)$ disjoint from $M$ (the fact that no set of $\cD$ is subset of $X_j\cap M$ follows from the fact that this property holds for $X_j$).
    By our initial assumption and as $|X_j| = i$, we have $i'<i$.  This contradicts the fact that $C_1$ is $i$-stripped.
  \end{proof}
  
  With a similar proof we can show the following (so the recursive calls are indeed made on valid inputs).
  \begin{claim}
  For every $i\in \intv{1}{\ell}$, $C_i$ is $i$-stripped.
  \end{claim}
  
  Let us now show that each lush clique is only considered once.
  \begin{claim}\label{cl:norepeat}
  For every distinct $j,j'\in \intv{1}{\ell-1}$, $X_j\neq X_{j'}$.
  \end{claim}
  \begin{proof}
  Suppose towards a contradiction that for some $j<j'$ we have $X_j = X_{j'}$.
  Then in the call on the context $C_j$, the next context $C_{j+1}$ was obtained at step~\ref{e:mincr} or~\ref{e:petals} (since at step~\ref{e:center} we would include $X_j$ in $\cD_j$, preventing it to be considered in future calls). In any of these two possibilities we set $M_{j+1} = M_j \cup V(\cP_j)$, where $\cP_j$ denotes the maximal matching of step~\ref{e:greeeedy} in the call on context $C_j$.
  
  Besides, as $X_{j}$ is a lush $i$-clique in the context $C_{j'}$, then it has some $r$-petal $B$ subset of $V(G_{j'})-M_{j'}$.
  As $M_{j+1} \supseteq M_{j'}$ (\autoref{cl:Mgonfle}), $B$ is also an $r$-petal of $X_j$ and is subset of $V(G_j)-M_j$ and by the above observation, it is disjoint from $\cP_j$. This contradicts the maximality of~$\cP_j$.
  \end{proof}
  
  Recall that the number of $i$-cliques in $M$ is $\zeta$. As a consequence of \autoref{cl:sameM} and \autoref{cl:norepeat}, we get the following.
  \begin{claim}\label{cl:recdepth}
  The recursion tree has depth at most $\zeta$.
  \end{claim}
  
  We can now conclude the proof of item~\eqref{e:M}.
  
  \begin{claim}\label{cl:Mbound}
    For every $((G', \cD', k'), M')\in \cZ$, $|M'| \leq |M| + r(\lambda\zeta + k)$.
  \end{claim}
  \begin{proof}
  When considering the context $C_j$, and given the chosen maximal matching $\cP_j$ of the petals of $X_j$, the set $M_{j+1}$ is defined from $M_j$ by:
  \begin{itemize}
      \item either adding the at most $\lambda(r-1)$ new vertices of $\cP_j$, if we make the recursive call at step~\ref{e:mincr},
      \item or by adding the vertices of the petals of $\cP_j$, if we recurse at step~\ref{e:petals}. Recall that in this case we also have $\cD_{j+1} = \cD_j \cup \cP_j$ and $\cP_{j+1}^* = \cP^*_j \cup \cP_j$
  \end{itemize}
  In the later case the number of added vertices is not directly bounded however we have
  $|\cP_{j+1}^*| = |\cP_{j}^*| + |\cP_j|$.
  Because of the stopping condition of step~\ref{e:tropcher}, we will overall (from $C_1$ to $C_\ell$) add at most $k$ petals to the hypergraph and each petal has at most $r-1$ vertices. Hence we get 
  $|M'| \leq |M| + r(\lambda\zeta + k)$, as claimed.
  \end{proof}

  Let us now show that the algorithm is correct, i.e., item~\ref{e:posepos} of the statement of the lemma.
  The proof is by induction on the depth of the recursion tree (i.e., $\ell-1$ with the notation above). When the depth is 0, there is no recursive call. This corresponds to the two base cases in this algorithms: step~\ref{e:tropcher}, when the ``budget'' $k$ is insufficient, and step~\ref{e:stripped}, when the input context is already $(i+1)$-stripped.
  Clearly the outputs in these cases satisfy~\ref{e:posepos}. 
  
  So we now consider the case of a run of the algorithm where the recursion tree has depth at least 1 and suppose that item~\ref{e:posepos} holds for all runs with recursion trees of smaller depth.
  \begin{itemize}
      \item If the recursive call is made at step~\ref{e:mincr} then 
  item~\ref{e:posepos} trivially holds because the instance is unchanged.
    \item Otherwise, note that any solution has to cover the $r$-cliques induced by $X$ and $\cB_X$.
    So any solution either contains a vertex of $X$, or covers $\cB_X$ (or both). These are exactly the two branches that are explored in steps~\ref{e:center} and~\ref{e:petals}, respectively, by our induction hypothesis.
  \end{itemize}
  
  The above shows that the algorithm is correct. It remains to prove that it has the claimed running time.
  Note that we do not update the graph neither the parameter between recursive calls: we will always work on the graph $G$ with $n$ vertices and with the parameter $k$. So the induction proving the time bound will use two different parameters as we explain now.
  For every $x,y \in \mathbb{N}$, let us denote by $T(x,y)$ the worst-case running time of the above algorithm on an input $((G, \cD, k), M,\cP^*)$ with $|G|=n$ such that there is at most $x$ (non-necessarily disjoint) lush $i$-cliques in $G[M]$ and such that $|\cP^*| \ge y$.
 Let $S_r(n)$ be the sum of the worst-case complexity of all subroutines needed in the different steps of item~\eqref{e:tropcher} to item~\eqref{e:basecase} to the exception of recursive calls (one such subroutine is the algorithm of  \autoref{lem:lushcliques}, another one is the construction of the hypergraph $\cB_X$).
 Observe that $T(x,y) \le S_r(n)$ when $y > k$ (as we fall into base case  of step~\ref{e:tropcher}) or when $x=0$ (as we fall into base case  of step~\ref{e:stripped}).
 Notice also that by definition we have $T(x,y) \le T(x',y)$ for any $x \le x'$, and $T(x,y) \le T(x,y')$ for any $y' \le y$.

 If we make a recursive call at step~\ref{e:mincr}, we return in time at most $T(x-1,y)$ (by induction) as, by \autoref{cl:norepeat}, no lush $i$-clique of the original instance is considered twice.
 Otherwise, we will make recursive calls in step~\ref{e:vroom} which by induction take time at most $T(x-1,y)+T(x-1,y+|\cP|)$ as, in the first branch \ref{e:center} of recursion, 
 \autoref{cl:norepeat} implies again that no lush $i$-clique is considered two times, and in the second branch~\ref{e:petals}, we know in addition that the size of the matching given as parameter increases by  $|\cP|$.
 Thus, in both cases (and including the other computation steps which take time $S_r(n)$), we obtain the upper bound:
\begin{align*}
    T(x,y) &\leq  T(x-1,y)+T(x-1,y+|\cP|)+S_r(n)  \\
    &\leq  T(x-1,y)+T(x,y+\lambda)+S_r(n).
\end{align*}
(For the last line recall that $T$ is anti-monotone with respect to its second parameter.)

Let us now show that for every $x,y\in \N$, $T(x,y) \leq x T(x, y+\lambda) + (x+1)S_r(n).$ The proof is by induction on $x$. The base case $x=0$ holds as observed above.
Suppose the inequality holds for $x-1$. As proved above
\begin{align}
T(x,y) &\leq T(x-1,y)+T(x,y+\lambda)+S_r(n)\nonumber\\
&\leq (x-1)T(x-1,y+\lambda) + xS_r(n)&\text{(by induction)}\nonumber\\
&\quad + T(x,y+\lambda)+S_r(n)\nonumber\\
&  \leq x T(x, y+\lambda) + (x+1)S_r(n), &\text{as claimed.}\label{eq:leq}
\end{align}
We now prove that for every $x,y\in \N$, 
\[
T(x,y) \leq x^{\frac{k+1-y}{\lambda}} + \left( 1+\frac{k+1-y}{\lambda}\right )(x+1) S_r(n).
\]
This time the induction is on $y$. The base case $y>k$ hold as observed above. Let $x\geq 1$ and $y\leq k$ and suppose the inequality holds for any pair $(x', y')$ with $y'>y$. Then as proved above in Eq.~\ref{eq:leq},
\begin{align*}
T(x,y) &\leq x T(x,y+\lambda)+(x+1)S_r(n)\\
&\leq x\cdot x^{\frac{k+1-y-\lambda}{\lambda}} + \left ( 1 + \frac{k+1- y - \lambda}{\lambda}\right )(x+1) S_r(n) + (x+1)S_r(n)\\
&\leq x^{\frac{k+1-y}{\lambda}} + \left( 1+\frac{k+1-y}{\lambda}\right )(x+1) S_r(n).
\end{align*}

Observe that $S_r(n)$ is dominated by the time spent in the algorithm of \autoref{lem:lushcliques}, and thus $S_r(n)\in n^{O(r)}$.
As we initially have $x \leq \zeta$ and $y \ge 0$,  we obtained the claimed running time.
A similar analysis can be used to bound the size of the output family $\cZ$.
  \end{proof}

By iterating the algorithm of \autoref{lem:flower-picking} for increasing values of $i$ we can obtain a collection of $r$-stripped contexts, as we explain now.
\begin{lemma}[picking petals]\label{lem:flowerS-picking}
 There is an algorithm that, given $i \in \intv{1}{r}$, $\lambda\in \R_{\geq 1}$, a context $((G,\cD,k),M)$, where we denote by $\zeta$ the number of $(<r)$-cliques in $G[M]$, runs in time 
 $ 2^{O((i \cdot k/\lambda) \cdot \log \zeta)} \cdot n^{O(r)}$
  and returns a collection $\mathcal{Z}$ of size 
  $ 2^{O((i \cdot k/\lambda) \cdot \log \zeta)}$
  of $i$-stripped contexts such that:
  \begin{enumerate}
      \item \label{e:posepos2} $(G, \cD, k)$ is a positive instance if and only if some input of $\cZ$ contains one; and
      \item \label{e:M2} for every $((G', \cD', k'), M') \in \cZ$, $M\subseteq M'$, $|M'|\leq |M| + ir(\lambda\zeta +k)$, and $k'\leq k$.
  \end{enumerate}
\end{lemma}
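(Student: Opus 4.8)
The plan is to prove \autoref{lem:flowerS-picking} by induction on $i$, using \autoref{lem:flower-picking} as the single-step engine. The base case $i=1$ is trivial: a context is vacuously $1$-stripped (there is no $i'<1$), so we may simply return $\cZ=\{((G,\cD,k),M)\}$, which satisfies both conclusions with room to spare. For the inductive step, suppose we have an algorithm achieving the statement for $i-1$. Given an input $((G,\cD,k),M)$ with $\zeta$ many $(<r)$-cliques in $G[M]$, first run the algorithm for $i-1$ to obtain a collection $\cZ'$ of $(i-1)$-stripped contexts, of size $2^{O(((i-1)k/\lambda)\log\zeta)}$, satisfying $M\subseteq M'$, $|M'|\le |M|+(i-1)r(\lambda\zeta+k)$ and $k'\le k$ for each $((G',\cD',k'),M')\in\cZ'$. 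Then, for each such context, run the algorithm of \autoref{lem:flower-picking} with parameters $i-1$ (note: to go from $(i-1)$-stripped to $i$-stripped we kill lush $(i-1)$-cliques), $\lambda$, and that context; collect all the resulting $i$-stripped contexts into $\cZ$.

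The correctness (item~\eqref{e:posepos2}) chains the two equivalences: $(G,\cD,k)$ is positive iff some context of $\cZ'$ is positive (induction hypothesis), iff for that context the collection produced by \autoref{lem:flower-picking} contains a positive context (item~\eqref{e:posepos} of that lemma), hence iff $\cZ$ contains a positive context; the ``stripped'' output guarantee of \autoref{lem:flower-picking} ensures every context in $\cZ$ is $i$-stripped. For item~\eqref{e:M2}, I would track the two size increases additively. Along $\cZ'$ we gained at most $(i-1)r(\lambda\zeta+k)$ vertices in $M$; one more application of \autoref{lem:flower-picking} adds at most $r(\lambda\zeta_0+k)$ where $\zeta_0$ is the number of $(i-1)$-cliques in $G[M']$. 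The one subtlety here is that $\zeta_0$ is measured in the \emph{enlarged} cover $M'$, not in $M$ — so a priori it need not be bounded by $\zeta$. The clean way around this is to observe (as in \autoref{cl:sameM} of the proof of \autoref{lem:flower-picking}) that every lush $(i-1)$-clique actually considered lies inside the \emph{original} cover $M$ of that call; more robustly, one simply bounds $\zeta_0$ by the number of $(<r)$-cliques in $G[M']$, and since in the applications (and in the intended use via \autoref{th:kick}) we only ever need $\zeta$ to be the $(<r)$-clique count of the \emph{current} $G[M]$, it is enough to state the bound with $\zeta$ being re-read at each level; or, cleanest, to note that \autoref{cl:sameM}-type reasoning keeps all relevant cliques inside $M$, so the same $\zeta$ works throughout. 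Telescoping then gives $|M'|\le |M|+ir(\lambda\zeta+k)$, and $k'\le k$ is immediate since neither subroutine ever increases $k$.

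For the running time and output size, both the algorithm for $i-1$ and the single call to \autoref{lem:flower-picking} per context run in time $2^{O((k/\lambda)\log\zeta)}\cdot n^{O(r)}$ (with the same $\zeta$, as just discussed), and we make one call per context of $\cZ'$; multiplying the bound $2^{O(((i-1)k/\lambda)\log\zeta)}$ on $|\cZ'|$ by the per-context bound $2^{O((k/\lambda)\log\zeta)}$ yields $2^{O((ik/\lambda)\log\zeta)}$ for both $|\cZ|$ and the total time, absorbing the $n^{O(r)}$ factor and the polynomial overhead of iterating over $\cZ'$ into the stated $n^{O(r)}$. The main obstacle I anticipate is precisely the bookkeeping of $\zeta$ across the induction — making sure that enlarging $M$ between levels does not blow up the number of low-order cliques we must consider — and the honest fix is to invoke the ``all lush cliques live in the initial cover'' phenomenon (\autoref{cl:sameM}), which confines everything to a $\zeta$-sized universe at every level; everything else is routine telescoping and exponent arithmetic.
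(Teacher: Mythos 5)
You follow the same inductive scheme as the paper: induction on $i$, the base case being trivial, and each inductive step invoking \autoref{lem:flower-picking} on every context produced at level $i-1$. The correctness (item~\ref{e:posepos2}) and the running-time/output-size bookkeeping also match the paper's proof. Moreover, you correctly flag a genuine subtlety that the paper's own proof silently glosses over: after the inductive step enlarges the cover from $M$ to $M'$, the quantity $\zeta$ appearing in the conclusion of \autoref{lem:flower-picking} is, by that lemma's own definition, the number of $(i-1)$-cliques in $G[M']$ and not in $G[M]$; there is no a priori reason the former should be bounded by the $\zeta$ fixed in the statement of \autoref{lem:flowerS-picking}.

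However, none of the three fixes you sketch actually closes this gap. Your first fix (``every lush $(i-1)$-clique lies inside the original cover of that call'') is vacuous: the ``original cover of that call'' to \autoref{lem:flower-picking} \emph{is} $M'$, so this reproduces exactly the quantity we need to bound rather than bounding it. Your third fix (``\autoref{cl:sameM}-type reasoning keeps all relevant cliques inside $M$'') is in fact false: once $M$ has been enlarged to $M'$ there can be a lush $(i-1)$-clique $X$ with $X\cap(M'\setminus M)\neq\emptyset$, and the argument of \autoref{cl:sameM} cannot push $X$ into $M$ because the candidate petal $B\cup(X\setminus M)$ it produces for $X\cap M$ meets $M'$ (on the vertices of $X\setminus M$), so $X\cap M$ is not exhibited as a lush clique of the $(i-1)$-stripped context. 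Your second fix — re-reading $\zeta$ at each level — is the honest one, but it changes the statement of the lemma, and then the cover sizes compound multiplicatively over the $i-1$ levels rather than additively: one obtains something like $|M'|\le|M|\bigl(1+O(r\lambda\zeta/|M|)\bigr)^{i-1}$ instead of $|M|+ir(\lambda\zeta+k)$. Tracing this through \autoref{th:kick} replaces the exponent $1+\eps(\phi+1)$ on $k$ in the bound on $|M'|$ by roughly $1+(i-1)\eps(\phi+1)$ with $i-1\le r-1$, which weakens (though does not destroy) the claimed subexponential exponent $\eps$. The paper's proof asserts the additive bound without addressing this and therefore shares the lacuna; your write-up has the merit of noticing it, but it does not supply a correct resolution.
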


\begin{proof}
Again for the sake of clarity we assume $\lambda$ is a fixed constant.

The proof is by induction on $i$.
For the base case $i=1$ we simply observe that $((G, \cD, k), M)$ is already 1-stripped so there is nothing to do.

So let us now suppose that $i>1$ and that the statement holds for $i-1$.
So from the input context $((G, \cD, k), M)$ and $i-1$ we can use the induction hypothesis to construct a collection $\cZ_{i-1}$ of $(i-1)$-stripped contexts satisfying the statement for $i-1$.
Now  we apply the algorithm of \autoref{lem:flower-picking} to $i-1$ and each context in $\cZ_{i-1}$ and call $\cZ_i$ the union of the obtained collections.
Item~\ref{e:posepos2} follows from the properties of $\cZ_{i-1}$ and the correctness of the algorithm of \autoref{lem:flower-picking}.
Constructing $\cZ_{i-1}$ takes time
$ 2^{O(((i-1) \cdot k/\lambda) \cdot \log \zeta)} \cdot n^{O(r)}$ (by induction)
 and then we run the
$\left( 2^{O((k/\lambda) \cdot \log \zeta)} \cdot n^{O(r)}\right)$
-time algorithm of \autoref{lem:flower-picking} on each of its
 $ 2^{O(((i-1) \cdot k/\lambda) \cdot \log \zeta)}$
contexts. This results in an overall running time of 
$ 2^{O((i \cdot k/\lambda) \cdot \log \zeta)} \cdot n^{O(r)}$,
 as claimed. In each context of $\cZ_{i-1}$ the set $M'$ has size at most $|M| + (i-1)r(\lambda \zeta + k)$ (by induction) and after the run of the algorithm \autoref{lem:flower-picking}, the corresponding set in the produced instances has at most $r(\lambda \zeta + k)$ vertices more so we get the desired bound. Finally, as in the proof of \autoref{lem:flower-picking} the value of $k$ never changes.
\end{proof}

\section{Kick the cliques}
\label{sec:kick}

For every $\phi, \gamma\in \R_{\geq 0}$ and $\alpha \in (0,1)$, we say that graph class $\mathcal{G}$ has property $P_r(\phi, \gamma, \alpha)$ if there are functions $f(x) \in O(x^\phi)$ and $g(x) \in O(x^\gamma)$ such that for $G\in \mathcal{G}$ with $n$ vertices and clique number less than $\omega$,
\begin{enumerate}[(P1)]
    \item\label{p:cliques} $G$ has at most $f(\omega)\cdot n$ cliques of order less than $r$; and
    \item\label{p:tw} $\tw(G) \leq g(\omega) n^\alpha$.
\end{enumerate}
Our main contribution is the following.

\begin{theorem}\label{th:kick}
    For every hereditary graph class $\mathcal{G}$ that has property $P_r(\phi, \gamma, \alpha)$ for some $\alpha \in (0,1)$, $\phi, \gamma\in \R_{\geq 0}$,
    there is an algorithm that solves $K_r$-\cov{} on $\mathcal{G}$ in time
    \begin{align*}
    2^{
       O_{r,\phi}\left(k^{\eps} \log k\right )}
            \cdot n^{O_r(1)}\qquad \text{with}\quad \eps = \frac{\gamma + \alpha(\phi+2)}{\gamma + \alpha(\phi+1)+1}<1,
\end{align*}
    i.e., subexponential FPT time.
\end{theorem}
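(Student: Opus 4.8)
The plan is to pipeline the tools of \autoref{sec:covers} and \autoref{sec:petals} and finish with a treewidth dynamic program, then tune the two thresholds involved. First I would compute greedily a maximal packing of vertex-disjoint $r$-cliques and let $M$ be the union of their vertex sets: then $M$ is a $K_r$-cover, and if $|M|>rk$ we may answer negatively since $\nu(\cK_r(G))>k$, so we assume $|M|\le rk$. (For $k$ below a constant, where the thresholds chosen below would be smaller than $r$, we instead solve $K_r$-\cov{} directly in time $r^k n^{O(r)}$ by branching on the $r$ vertices of a witnessing clique.)

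Fix thresholds $p=k^{\delta}$ and $\lambda=k^{\beta}$, to be optimized at the end, and apply \autoref{lem:cliques} with $p$ to $(G,k)$ and $M$, producing a family $\cY$ of $2^{O_r((k\log p)/p)}\cdot(rk)^{p}$ instances, each with no $p$-clique, such that $(G,k)$ is positive iff some instance of $\cY$ is. For each $(G',k')\in\cY$ I would recompute a greedy $K_r$-cover $M_0$, discard the instance if $|M_0|>rk'$ (it is then negative), and otherwise form the context $((G',\emptyset,k'),M_0)$, which is trivially $1$-stripped and has $V(\emptyset)\subseteq M_0$. Since $G'[M_0]\in\mathcal G$ (heredity) has clique number $<p$, property \Pref{p:cliques} bounds its number of $(<r)$-cliques by $\zeta\le f(p)\cdot|M_0|=O_r(p^{\phi}k)$, whence $\log\zeta=O_{r,\phi}(\log k)$. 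Apply \autoref{lem:flowerS-picking} with $i=r$ and $\lambda$ to each such context: this yields a family $\cZ$ of $2^{O_r((k/\lambda)\log\zeta)}$ many $r$-stripped contexts such that $(G',\emptyset,k')$ is positive iff some context of $\cZ$ contains a positive instance, and such that every $((G'',\cD'',k''),M'')\in\cZ$ satisfies $k''\le k'$ and $|M''|\le|M_0|+r^{2}(\lambda\zeta+k)=O_r(\lambda p^{\phi}k)=:N$.

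Because each context of $\cZ$ is $r$-stripped, \autoref{lem:kernel} gives that $(G'',\cD'',k'')$ is equivalent to $(G''[M''],\cD'',k'')$. The graph $G''[M'']\in\mathcal G$ has $N$ vertices and clique number $<p$, so by \Pref{p:tw} its treewidth is at most $w:=g(p)\cdot N^{\alpha}=O(p^{\gamma}(\lambda p^{\phi}k)^{\alpha})$. I would compute, by a known $2^{O(w)}n$-time constant-factor treewidth approximation (which succeeds since the true treewidth is $\le w$), a tree decomposition of $G''[M'']$ of width $O(w)$, and then run the usual bag-indexed dynamic program for \ann{}-$K_r$-\cov{}: a state records the trace of the solution on a bag, and at a forget node one discards the states that fail to hit some $r$-clique or some member of $\cD''$ meeting the forgotten vertex (each such set is a clique, hence lies inside a bag), which costs $2^{O(w)}\cdot n^{O(r)}$ overall. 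We accept iff at least one of these dynamic programs accepts; correctness follows by chaining the equivalences of \autoref{lem:cliques}, \autoref{lem:flowerS-picking} and \autoref{lem:kernel}.

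Collecting contributions, the running time is
\[
2^{O_{r,\phi}\bigl((k^{1-\delta}+k^{\delta}+k^{1-\beta}+k^{\delta(\gamma+\alpha\phi)+\alpha\beta+\alpha})\log k\bigr)}\cdot n^{O(r)},
\]
the four exponents coming respectively from the branching factor of \autoref{lem:cliques}, the number of instances it outputs, the branching of \autoref{lem:flowerS-picking}, and the value $w$. Taking $\delta=\beta=1-\eps$ makes the first and third exponents equal $\eps$, and a short computation shows the fourth equals $\eps$ exactly when $(1-\eps)\bigl(\gamma+\alpha(\phi+1)\bigr)+\alpha=\eps$, i.e.\ when $\eps=\frac{\gamma+\alpha(\phi+2)}{\gamma+\alpha(\phi+1)+1}<1$, the value in the statement; the second exponent is then $1-\eps$, which is $\le\eps$ in the main regime $\eps\ge\tfrac12$, while in the complementary regime one adjusts the thresholds (omitting the branching of \autoref{lem:cliques} when \Pref{p:cliques} carries no dependence on $\omega$) to reach the same bound. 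The analogous recurrences bound $|\cY|$ and $|\cZ|$, hence the number of dynamic programs. The work here is assembly and bookkeeping rather than a new idea — the crux was \autoref{lem:flowerS-picking} — so I expect the main obstacle to be ensuring that the clique number stays below $p$ all the way down to $G''[M'']$ so that \Pref{p:cliques} and \Pref{p:tw} keep applying, that $\zeta$ is measured on $G'[M_0]$ rather than on $G'$, and that the exponents $\delta,\beta$ genuinely balance the four sources of running time.
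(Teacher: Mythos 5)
Your proposal is correct and follows essentially the same route as the paper's proof: greedily pack $r$-cliques to get $M$ with $|M|\le rk$, branch away $p$-cliques via \autoref{lem:cliques}, run the petal-picking of \autoref{lem:flowerS-picking} to get $r$-stripped contexts with $|M''|=O_r(\lambda p^{\phi}k)$, shrink to $G''[M'']$ by \autoref{lem:kernel}, solve by dynamic programming over a tree decomposition of width $O(p^{\gamma}|M''|^{\alpha})$ via \Pref{p:tw}, and balance the exponents, which yields exactly the paper's $\eps=\frac{\gamma+\alpha(\phi+2)}{\gamma+\alpha(\phi+1)+1}$ (the paper simply takes both thresholds equal, $p=\ceil{k^{1-\eps}}$ and $\lambda=k^{1-\eps}$, matching your $\delta=\beta=1-\eps$). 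If anything you are more careful than the paper about the $|M|^{p}$ factor from \autoref{lem:cliques} (your second exponent $k^{\delta}$), which the paper's proof silently drops and which is harmless precisely when $\eps\ge 1/2$, as holds in all of the paper's applications.
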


\begin{proof}
Let $f\colon x \mapsto c_f \cdot x^\phi$ and $g\colon x\mapsto c_g \cdot x^\gamma$ (for some $c_f,c_g>0$) be as in the definition of property $P_r(\phi, \gamma, \alpha)$.
For the sake of readability we will here use (when deemed useful) $\exp$ to denote the function $x \mapsto 2^x$ defined over reals.
Given an instance $(G,k)$ of $K_r$-\cov{}, we consider the context $((G,\cD, k), M)$ where $\cD = \emptyset$ and $M$ is an $r$-approximation of a $K_r$-cover computed in $n^{O(r)}$ time by greedily packing disjoint $r$-cliques.
If $|M| > kr$ we can already answer negatively, so in what follows we suppose that $|M|\leq kr$.
We run the algorithm of \autoref{lem:cliques} with $p=\ceil{k^{\eps}}$ for some constant $\eps\in (0,1)$ that we will fix later.
In time
\[
2^{O(rk^{1-\eps} \log k)} n^{O(r)}
\]
we obtain a set $\cY$ of $2^{O(rk^{1-\eps} \log k)}$ contexts that have no $p$-clique.
For each such context we apply the petal-picking algorithm of \autoref{lem:flowerS-picking} with
\begin{align*}
\lambda &= k^{\eps}\qquad \text{and}\\
\zeta &= f(p)\cdot |M|\\
&\in O\left (r k^{1+ \eps \phi}\right ) .
\end{align*}
Let $\cZ$ denote the union of the outputs families of these algorithms.
Computing this set then takes time
\begin{align*}
|\cY|\cdot 2^{O((r \cdot k/\lambda) \cdot \log \zeta)} \cdot n^{O(r)}
&\in 2^{O(r \cdot k^{1-\eps} (\log k + \log \zeta))} \cdot n^{O(r)} \\
&\in 2^{O(r^2 \cdot (1+\eps \phi)k^{1-\eps} \log k)} \cdot n^{O(r)}.
\end{align*}

For every $((G', \cD', k), M')\in \cZ$ we have
\begin{align*}
    |M'| &\leq |M| + r^2(\lambda \zeta + k) & \text{by \autoref{lem:flowerS-picking}}\\
    &\in O\left ( kr + r^2(rk^{1 + \eps\phi + \eps} + k) \right)\\
    &\in O\left (r^3k^{1 + \eps(\phi +1)} \right).
\end{align*}
So given any such context, we can decide whether it is positive or not in time
\begin{align*}
2^{g(p)|M'|^\alpha} n^{O(1)}
&\in \exp \left (O
    \left (
        p^\gamma \cdot \left (r^3 k^{1+\eps(\phi+1)} \right )^\alpha 
    \right ) \right )\cdot n^{O_r(1)}\\
&\in \exp
    \left (
        O\left (
            r^{3\alpha} \cdot k^{\eps \gamma + \alpha(1+\eps(\phi+1))}
        \right )
    \right ) \cdot n^{O_r(1)}
\end{align*}
as follows: first we use \autoref{lem:kernel} to delete irrelevant vertices and obtain an equivalent instance $H$ on $|M'|$ vertices, then we use property $P_r(\phi, \gamma, \alpha)$ to bound the treewidth of $H$ and then we solve the problem by dynamic programming on an approximate tree-decomposition in $2^{O(\tw(H))} n^{O(1)}$ time by noting that every $r$-clique and every $D'\in \cD'$ (which is an $(<r)$-clique) has to be contained in a bag.

So overall, computing $\cZ$ and solving the problem in each sub-instance takes time
\begin{align*}
    & \exp\left (
        O_r\left(
            (1+\eps \phi)k^{1-\eps} \log k
            + k^{\eps \gamma + \alpha(1+\eps(\phi+1))}
            \right )\right )
            \cdot n^{O_r(1)}
            \\
            \in &\exp\left (
        O_r\left(
            (1+\eps \phi) k^{1-\eps} \log k
            + k^{\eps(\gamma + \alpha + \alpha\phi) + \alpha}
            \right )\right )
            \cdot n^{O_r(1)}.
\end{align*}

As we aim for algorithms where the contribution of $k$ to the time complexity is of the form $2^{o(k)}$, the above bound sets the following constraint: $\eps<\frac{1-\alpha}{\gamma+\alpha+\alpha\phi}$.
Let $\eps = \frac{1-\alpha-\delta}{\gamma+\alpha+\alpha\phi}$ for some constant $\delta\in (0,1)$ that we will fix later. Then the above complexity becomes:
\begin{align*}
\exp\left (
        O_r\left(
            (1+\eps \phi)k^{1-\eps} \log k
            + k^{1-\delta}
            \right )\right )
            \cdot n^{O_r(1)}.
\end{align*}
We optimize (ignoring logarithmic factors) by choosing the value of $\delta$ so that $1-\eps= 1-\delta$, i.e., $\delta = \frac{1-\alpha}{\gamma + \alpha(\phi+1) +1}$. This gives the following overall time bound:
\begin{align*}
2^{
        O_r\left((1+\eps \phi)k^{\eps'} \log k\right )}
            \cdot n^{O_r(1)}\qquad \text{with}\ \eps' = \frac{\gamma + \alpha(\phi+2)}{\gamma + \alpha(\phi+1)+1}.
\end{align*}

As $\alpha<1$ we have $\eps'<1$ so the algorithm runs in subexponential FPT time, as desired.
\end{proof}

\section{Applications}
\label{sec:apps}

In this section we give applications of \autoref{th:kick} to specific graph classes. First we have to show that the considered classes satisfy property $P_r$ (recall that this property is defined at the beginning of \autoref{sec:kick}).

\subsection{Pseudo-disk graphs and map graphs}
In this subsection we will prove the following lemma.

\begin{lemma}\label{lem:pdg}
 Pseudo-disk graphs and map graphs have the property $P_r(r-2, 1/2, 1/2)$.
\end{lemma}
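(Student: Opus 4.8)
The plan is to verify the two parts of property $P_r(r-2,1/2,1/2)$ separately for pseudo-disk graphs and map graphs, namely: (P\ref{p:cliques}) that an $n$-vertex pseudo-disk or map graph with clique number at most $\omega$ has $O(\omega^{r-2}\cdot n)$ cliques of order less than $r$; and (P\ref{p:tw}) that its treewidth is $O(\omega^{1/2}n^{1/2})$.

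For (P\ref{p:tw}), I would invoke the known fact that pseudo-disk graphs (and hence disk graphs, contact-segment graphs) and map graphs are \emph{$K_{t,t}$-subgraph-free whenever $\omega$ is bounded}, or more directly that these classes have sublinear (indeed $O(\sqrt{m})$-type) balanced separators once the clique number is controlled. Concretely, a pseudo-disk graph on $n$ vertices with clique number $\omega$ has $O(\omega n)$ edges (by a standard incidence/planarity argument on the arrangement of pseudo-disk boundaries), and classes of graphs with $O(\omega n)$ edges that are closed under taking subgraphs and admit small separators satisfy $\tw(G) = O(\sqrt{\omega n})$. The cleanest route is: pseudo-disk graphs with clique number $\omega$ have at most $c\omega n$ edges, and a graph with $m$ edges drawn from such a class has treewidth $O(\sqrt{m})$ by the separator theorem for (pseudo-)disk arrangements; plugging $m = O(\omega n)$ gives $\tw(G) = O(\sqrt{\omega n}) = O(\omega^{1/2}n^{1/2})$, which is exactly $g(\omega)n^{\alpha}$ with $\gamma = 1/2$, $\alpha = 1/2$. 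For map graphs the same bound follows since map graphs have a similar planar-arrangement structure (each map graph with clique number $\omega$ is sparse after removing the trivial large cliques, and one can use the half-square / planar witness representation to get the separator bound).

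For (P\ref{p:cliques}), the key observation is that in pseudo-disk graphs the number of \emph{$j$-cliques} is linear in $n$ for each fixed $j$ once we fix the clique number, with the dependence on $\omega$ being polynomial of degree $j-2$. I would argue as follows: a $j$-clique corresponds to $j$ pairwise-intersecting pseudo-disks, and by the Helly-type / boundary-intersection properties of pseudo-disks one shows that each edge $uv$ (an intersecting pair) is contained in at most $O(\omega^{j-2})$ $j$-cliques — intuitively, once two pseudo-disks are fixed, the remaining $j-2$ come from a region of bounded "depth" governed by $\omega$. Summing over the $O(\omega n)$ edges gives $O(\omega^{j-1} n)$, but a sharper count anchoring on a single vertex and using that each vertex lies in $O(\omega^{j-1})$ cliques of size $j$ — or more carefully, a ply/arrangement argument — yields $O(\omega^{j-2}n)$; taking the maximum over $j \in \{1,\dots,r-1\}$ gives $f(\omega)n$ with $f(x) = O(x^{r-2})$, i.e.\ $\phi = r-2$. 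The same counting transfers to map graphs via their witness planar graph. I would therefore structure the proof of \autoref{lem:pdg} as two claims (one for each property), each first stated for pseudo-disk graphs and then remarked to hold for map graphs by the analogous arrangement/witness argument, and finally conclude by matching the exponents $(\phi,\gamma,\alpha) = (r-2,1/2,1/2)$.

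The main obstacle I anticipate is the clique-counting bound (P\ref{p:cliques}) with the \emph{correct} exponent $r-2$ in $\omega$: getting a linear-in-$n$ bound is relatively standard (it is essentially the statement that these classes are "$\omega$-degenerate" in a clique sense), but pinning the polynomial dependence on $\omega$ to exactly degree $r-2$ rather than $r-1$ requires care — one must anchor the count on an \emph{edge} (two vertices) rather than a single vertex, and then bound the number of ways to extend an intersecting pair to a $j$-clique by $O(\omega^{j-2})$ using the pseudo-disk boundary-intersection axiom (two boundaries meet in at most two points), which forces the common region of two intersecting pseudo-disks to have bounded combinatorial complexity. For map graphs the subtlety is instead that they are not intersection graphs of well-behaved convex-ish regions, so one has to pass through the planar witness graph (where cliques of the map graph correspond to faces or to stars around a "nation-point"), and verify the same $O(\omega^{r-2}n)$ bound there; this bookkeeping, rather than any deep idea, is where the work lies.
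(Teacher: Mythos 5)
Your treatment of \Pref{p:tw} matches the paper: both you and the authors bound the number of edges by $O(\omega n)$ (the paper cites this directly, via \autoref{th:vomi} for pseudo-disk graphs and \autoref{th:mapedges} for map graphs) and then invoke an $O(\sqrt{m})$-type treewidth bound (\autoref{th:sep-m-tw}, or the treewidth statement of \autoref{th:vomi}) to get $\tw = O(\sqrt{\omega n})$, i.e.\ $\gamma=\alpha=1/2$.

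Your treatment of \Pref{p:cliques}, however, contains a genuine error, and it stems from a miscounting of where the exponent $r-2$ comes from. You aim to show that the number of $j$-cliques is $O(\omega^{j-2} n)$ for each $j\leq r-1$ and flag this ``sharper count'' as the hard step. It is not just hard --- it is false. A disjoint union of $n/\omega$ copies of $K_\omega$ is a pseudo-disk graph (even a disk graph) with clique number $\omega$, and its number of triangles is $\Theta(\omega^2 n)$, not $O(\omega n)$; so the bound $O(\omega^{j-2}n)$ fails already at $j=3$. The good news is that you do not need it. The correct and sufficient bound is $O(\omega^{j-1} n)$ $j$-cliques, which follows immediately from the paper's route: pseudo-disk graphs with clique number $\omega$ have $O(\omega n)$ edges hence degeneracy $O(\omega)$ (\autoref{th:vomi}), and a $d$-degenerate graph has $O(j\,d^{j-1}n)$ $j$-cliques (\autoref{th:chiba}, or the elementary degeneracy-ordering argument you allude to). Since we count cliques of order \emph{less than} $r$, the worst $j$ is $r-1$, giving $O(\omega^{(r-1)-1}n) = O(\omega^{r-2}n)$ and hence $\phi=r-2$. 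In short, the exponent $r-2$ is $(r-1)-1$, not ``$j-2$ for each $j$'': your per-edge arrangement argument is both unnecessary and, as stated, unsalvageable. The same degeneracy $\to$ Chiba--Nishizeki chain handles map graphs via \autoref{th:mapedges}, with no need for the planar-witness detour you propose.
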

As a consequence we get the two following results.

\thpseudo*

\thmap*

To prove \autoref{lem:pdg}, we first need to state some external results. For $d\in \N$ we say that a graph $G$ is \emph{$d$-degenerate} if every subgraph of $G$ (including $G$ itself) has a vertex of degree at most $d$.
In order to bound the number of small cliques in the considered graphs we can bound their degeneracy and then rely on the following result of Chiba and Nishizeki.

\begin{theorem}[\cite{chiba1985arboricity}]\label{th:chiba}
 Any string graph $G$ with $n$ vertices and degeneracy $d$ has $O(id^{i-1} n)$ $i$-cliques.
\end{theorem}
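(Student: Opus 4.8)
The plan is to prove the statement directly for every $d$-degenerate graph, ignoring the ``string graph'' hypothesis, which plays no role whatsoever: the bound holds for arbitrary graphs of degeneracy $d$. The argument is the classical degeneracy-ordering counting of Chiba and Nishizeki, and it will in fact give the slightly stronger bound of at most $n\binom{d}{i-1}$ cliques of order $i$, which is $O(i d^{i-1} n)$ for fixed $i$.

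First I would recall (or reprove in one line) that $d$-degeneracy of $G$ is equivalent to the existence of a linear ordering $v_1,\dots,v_n$ of $V(G)$ in which every vertex has at most $d$ neighbors appearing after it in the order. Such an ordering is produced by repeatedly peeling off a vertex of minimum degree: every subgraph of $G$ is again $d$-degenerate, hence has a vertex of degree at most $d$, so the peeling never gets stuck; letting $v_j$ be the vertex removed at step $j$, its degree in the graph of remaining vertices is at most $d$, i.e., $v_j$ has at most $d$ neighbors among $\{v_{j+1},\dots,v_n\}$. This is a routine induction on $n$ and I would state it without much detail.

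Next comes the counting step. Fix such an ordering and let $X$ be an arbitrary $i$-clique of $G$. Among the vertices of $X$, let $v_j$ be the one of smallest index. Then each of the other $i-1$ vertices of $X$ is a neighbor of $v_j$ of larger index, and $v_j$ has at most $d$ such neighbors; hence $X\setminus\{v_j\}$ is one of at most $\binom{d}{i-1}$ possible subsets. Since every $i$-clique has a unique smallest-index vertex, summing over the $n$ possible choices for that vertex yields at most $n\binom{d}{i-1}\le n\, d^{i-1}\le i\, d^{i-1} n$ cliques of order $i$ in $G$, which is exactly the claimed $O(i d^{i-1} n)$.

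I do not anticipate any genuine obstacle. The only points worth flagging are that the hypothesis ``string graph'' is vacuous for this counting statement (it would only matter, via other results, for bounding $d$ itself), and that one must state the degeneracy ordering with the right orientation --- ``at most $d$ \emph{later} neighbors'' rather than earlier ones --- so that each $i$-clique is charged to exactly one vertex and the sum over vertices is a clean overcount.
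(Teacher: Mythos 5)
Your proof is correct, and in fact gives the slightly sharper bound $n\binom{d}{i-1}$. The paper does not reprove the theorem; it cites Chiba and Nishizeki, whose algorithm lists all $\ell$-cliques of a graph of arboricity $a$ in time $O(\ell\,a^{\ell-2} m)$, and then argues (i) that a running-time bound on an enumeration algorithm upper-bounds the number of objects enumerated, and (ii) that arboricity and degeneracy are within constant factors of one another. Your argument is a genuinely different and more self-contained route: it bypasses the enumeration algorithm and the detour through arboricity, instead fixing a degeneracy ordering and charging each $i$-clique to its earliest vertex, whose at most $d$ later neighbours must contain the remaining $i-1$ vertices. This also makes transparent, as you correctly observe, that the ``string graph'' hypothesis is vacuous here --- the bound holds for every $d$-degenerate graph; in the paper the hypothesis only serves, via \autoref{th:vomi}, \autoref{th:mapedges}, and \autoref{th:lee}, to control $d$ in the specific classes. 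One cosmetic remark: the factor $i$ in the statement is an artifact of Chiba--Nishizeki's running time and not needed for the raw count, which you handle correctly by simply padding $d^{i-1}$ up to $i\,d^{i-1}$ to match the stated form.
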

Actually \cite{chiba1985arboricity} gives a time bound for the enumeration of $i$-cliques in graphs of arboricity~$d$. As arboricity and degeneracy are linearly bounded by each other and since the time bound implies a bound on the number of enumerated objects (up to a constant factor), we get the above statement. 

\begin{theorem}[\cite{berthe24bFVS}]\label{th:vomi}
Pseudo-disk graphs on $n$ vertices with clique number $\omega$ have at most $3e\omega n$ edges and treewidth $O(\sqrt{\omega n})$. In particular they are $(3e\omega)$-degenerate.
\end{theorem}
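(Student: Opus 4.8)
The plan is to establish the three assertions in turn. For the bound on the number of edges, fix a pseudo-disk representation $\mathcal F$ of $G$ (so $|\mathcal F| = n$). The clique number enters only through the elementary observation that the arrangement has \emph{depth at most $\omega$}: a point lying in $j$ of the pseudo-disks is a common point of $j$ pairwise-intersecting pseudo-disks, so $j \le \omega$. Two consequences follow. First, if a pseudo-disk $A$ is contained in $m$ others, then every point of $A$ has depth at least $m+1$, hence $m \le \omega-1$, so the number of \emph{nested} intersecting pairs is at most $(\omega-1)n$ (alternatively one perturbs $\mathcal F$ to remove all containments while keeping $G$ unchanged). Second, for the remaining \emph{crossing} pairs $\{A,B\}$ --- those with $\partial A \cap \partial B$ a two-point set --- I would run the Clarkson--Shor random-sampling argument: sample each pseudo-disk independently with probability $p = 1/\omega$, and use the Kedem--Livne--Pach--Sharir theorem that the boundary of the union of any $m$ pseudo-disks has at most $6m$ vertices, each of them a boundary crossing point not covered by the interior of any of the $m$ pseudo-disks. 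A fixed crossing point $v \in \partial A \cap \partial B$ survives on the union boundary of the sample with probability $p^2(1-p)^{d_v}$, where $d_v \le \omega-2$ is its depth; writing $P$ for the number of crossing pairs and taking expectations over the $2P$ crossing points, the inequality $2P\,p^2(1-p)^{\omega-2} \le 6pn$ combined with $(1-1/\omega)^{\omega-2} \ge 1/e$ gives $P \le 3e\omega n$ (the cases $\omega \le 2$ being trivial). Summing the two contributions yields $|E(G)| = O(\omega n)$ with $3e\omega n$ as the leading term; matching the stated constant exactly is bookkeeping, the $(\omega-1)n$ nested pairs being absorbed either by the perturbation step or by a marginally sharper sampling estimate.

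For the treewidth bound I would invoke a geometric separator theorem for bounded-depth pseudo-disk arrangements. Since pseudo-disks have linear union complexity, the ``moat''/shifted-grid machinery behind separator theorems of Miller--Teng--Thurston--Vavasis type applies and produces, for any subfamily of size $n'$, a closed Jordan curve that meets only $O(\sqrt{\omega n'})$ of the pseudo-disks while keeping at most $\frac{2}{3}n'$ of them strictly inside and at most $\frac{2}{3}n'$ strictly outside. The pseudo-disks met by the curve form a balanced separator of the intersection graph, so every induced subgraph of $G$ on $n'$ vertices has a balanced separator of size $O(\sqrt{\omega n'})$, and the standard recursion turning balanced separators into tree-decompositions gives $\tw(G) = O(\sqrt{\omega n})$.

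Finally, pseudo-disk graphs are hereditary and the clique number cannot increase under taking induced subgraphs, so the edge bound applied to all induced subgraphs of $G$ shows each of them has a vertex of degree $O(\omega)$; iterating, $G$ is $O(\omega)$-degenerate (a direct peeling argument on the representation can pin down the constant $3e\omega$ if desired). The non-routine content is entirely in the two external inputs this rests on: the Kedem--Livne--Pach--Sharir linear bound on the union complexity of pseudo-disks, and --- the harder one --- the sublinear geometric separator theorem for depth-$\omega$ pseudo-disk arrangements, which is where the real work for the $O(\sqrt{\omega n})$ treewidth bound lies. Everything else is assembly, the only delicate point being the tuning of constants in the Clarkson--Shor estimate so as to land exactly on $3e\omega n$.
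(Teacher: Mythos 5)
This theorem is not proved in the present paper: it is imported verbatim from \cite{berthe24bFVS}, so there is no in-paper proof for me to compare against. I will assess your sketch on its own.

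For the edge bound, the Clarkson--Shor sampling combined with the Kedem--Livne--Pach--Sharir linear union-complexity bound is indeed the standard route and your computation is essentially right, but the gap you yourself flag about nested pairs is a real one and not just bookkeeping. The sampling inequality only charges pairs whose boundaries cross, so as written you obtain at most $3e\omega n$ crossing pairs plus up to $(\omega-1)n$ nested pairs, which exceeds the stated $3e\omega n$. The perturbation fix (inflate a contained pseudo-disk infinitesimally so the nesting becomes a two-point crossing, while preserving the intersection graph and the at-most-two-crossings invariant) does work, but it has to be carried out, not gestured at. Separately, your closing sentence on degeneracy does not actually follow from your edge bound: a hereditary bound of $cn$ edges gives min degree at most $2c$ in every induced subgraph, hence $2c$-degeneracy, so $3e\omega n$ edges only yields $6e\omega$-degeneracy. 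Landing on $3e\omega$ would require a direct low-degree-vertex argument on the arrangement rather than passing through the edge count; you allude to such a ``peeling argument'' without supplying it. (For the purposes of this paper the constant disappears into a big-$O$, but the claim as you wrote it is not established.)

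For the treewidth bound your MTTV-type separator machinery for bounded-depth pseudo-disk arrangements is overkill and is also not the route the authors take elsewhere in this very paper. The cheap argument, used in \autoref{cor:maptw} for map graphs, applies verbatim here: pseudo-disk graphs are string graphs, and by \autoref{th:sep-m-tw} every $m$-edge string graph has treewidth $O(\sqrt{m})$; combining with the $O(\omega n)$ edge bound immediately gives $\tw(G)=O(\sqrt{\omega n})$. I would replace your entire second paragraph with this two-line deduction, which removes the dependence on a nontrivial geometric separator theorem for shallow pseudo-disk arrangements that you would otherwise have to cite and verify.
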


\begin{theorem}[\cite{chen2002map}]\label{th:mapedges}
Map graphs on $n$ vertices with clique number $\omega$ have at most $7\omega n$ edges. In particular they are $(7\omega)$-degenerate.
\end{theorem}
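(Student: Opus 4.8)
The plan is to pass to a planar bipartite ``witness'' graph and count edges there, exploiting that every witness point sees a clique of $G$. First I would invoke the description of map graphs due to Chen, Grigni and Papadimitriou (the very reference \cite{chen2002map}): a graph $G$ is a map graph if and only if there is a \emph{simple planar bipartite} graph $H$, one side of which is $V(G)$ and the other side a set $B$ of ``branch points'', such that $uv\in E(G)$ exactly when $u$ and $v$ have a common neighbour in $H$. From a geometric representation $(R_v)_{v\in V(G)}$ by interior-disjoint closed topological disks, such an $H$ is built by first normalising the representation so that the set of points lying on the boundary of at least three regions is finite, then taking $B$ to be that set together with one auxiliary point of $R_u\cap R_v$ for each edge $uv$ not already witnessed by such a triple point, and finally joining each $v$ to every point of $B\cap R_v$ by an arc through the interior of $R_v$; interior-disjointness makes these stars pairwise non-crossing, so $H$ is planar, and it is bipartite by construction. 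The only feature of $H$ I will actually use is that for every $b\in B$, all regions through $b$ pairwise meet at $b$, so $N_H(b)$ is a clique of $G$ and hence $\deg_H(b)\le\omega$.

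Next I would do the counting. Choose $H$ with $|B|$ minimum; then no two points of $B$ have the same neighbourhood and every point has degree at least $2$, so $B=B_2\sqcup B_{\ge 3}$ according to whether $\deg_H(b)=2$ or $\deg_H(b)\ge 3$. Suppressing each point of $B_2$ (replacing the path $u\!-\!b\!-\!v$ by the edge $uv$) turns $H$ into a planar graph, and the new edges form a \emph{simple} planar graph on $V(G)$ (distinct neighbourhoods give distinct pairs), so $|B_2|\le 3n-6$ where $n:=|V(G)|$. On the subgraph $H'$ of $H$ induced by $V(G)\cup B_{\ge 3}$, which is simple, planar, bipartite, and has all $B_{\ge 3}$-vertices of degree at least $3$, Euler's formula gives $3|B_{\ge 3}|\le |E(H')|\le 2(n+|B_{\ge 3}|)-4$, whence $|B_{\ge 3}|\le 2n-4$ and $|E(H')|\le 6n-12$. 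Since every edge of $G$ is witnessed by some $b\in B$, and $\binom{\deg_H(b)}{2}\le\tfrac{\omega-1}{2}\deg_H(b)$ because $\deg_H(b)\le\omega$,
\begin{align*}
|E(G)| &\le |B_2| + \sum_{b\in B_{\ge 3}}\binom{\deg_H(b)}{2}\\
&\le (3n-6) + \tfrac{\omega-1}{2}\,|E(H')| \ \le\ \omega(3n-6)\ \le\ 7\omega n,
\end{align*}
which is the claimed edge bound (with room to spare; the tiny cases $n\le 2$ are trivial).

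For the ``in particular'' I would use that map graphs are closed under induced subgraphs (keep only the regions of the retained vertices): an induced subgraph $G'$ on $n'$ vertices is then a map graph with clique number at most $\omega$, so the displayed inequality yields $|E(G')|\le 3\omega n'$, hence average degree below $7\omega$, hence a vertex of degree less than $7\omega$. As this holds for every induced subgraph, $G$ is $7\omega$-degenerate.

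I expect the genuine difficulty to lie entirely in the first paragraph: producing the planar bipartite witness $H$ with $\deg_H(b)\le\omega$ for every $b$. Taking the Chen--Grigni--Papadimitriou half-square characterisation as a black box makes it immediate; otherwise one must carry out the topological normalisation (ensuring finitely many triple points) and deal with the degree-$2$ witnesses, which is routine but fiddly. The arithmetic afterwards is only Euler's formula combined with the bound $\deg_H(b)\le\omega$, so it is not the hard part.
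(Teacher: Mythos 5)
The paper does not prove this statement: it is imported verbatim as a black box from Chen, Grigni and Papadimitriou \cite{chen2002map}, so there is no ``paper proof'' to compare against. Your argument is therefore a self-contained reconstruction, and after checking it I believe it is correct. The decisive observation that for every branch point $b$ the set $N_H(b)$ is a clique of $G$ (hence $\deg_H(b)\le\omega$) follows already from the half-square definition without appealing to the geometry, and the rest is the standard ``minimal witness plus Euler'' routine: minimality forces $\deg_H(b)\ge 2$ and pairwise distinct branch-point neighbourhoods; suppressing $B_2$ gives a simple planar graph on $V(G)$, so $|B_2|\le 3n-6$; the bipartite Euler bound on $H'=H[V(G)\cup B_{\ge 3}]$ gives $3|B_{\ge 3}|\le|E(H')|\le 2(n+|B_{\ge 3}|)-4$, hence $|E(H')|\le 6n-12$; and the charging $\binom{\deg_H(b)}{2}\le\tfrac{\omega-1}{2}\deg_H(b)$ then yields
\[
|E(G)| \le (3n-6) + \tfrac{\omega-1}{2}(6n-12) = 3\omega(n-2),
\]
which is in fact stronger than the stated $7\omega n$ (and also tighter than the constant needed for $7\omega$-degeneracy — you get $6\omega$-degeneracy for free). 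One point worth making explicit if this were to be included as a proof: the heredity step uses that ``map graph'' is closed under taking induced subgraphs, which you correctly justify by deleting regions, and that the bound $3\omega(n-2)\le 3\omega n$ is what you want, not $3n-6\le 7n$, since otherwise the $\omega$ would not track the induced subgraph's clique number. As you note yourself, the only nontrivial input is the Chen--Grigni--Papadimitriou half-square characterization, which is exactly the cited reference, so taking it as given is appropriate here.
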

To bound the treewidth of map graphs we use the following combination of results on balanced separators of string graphs of Lee and the links between separators and treewidth of Dvo{\v{r}}{\'a}k and Norin.

\begin{theorem}[\cite{lee2016separators} and  \cite{DVORAK2019137}] \label{th:sep-m-tw}
    Any $m$-edge string graph has treewidth $O(\sqrt{m})$.
\end{theorem}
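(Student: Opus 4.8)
The plan is to combine the two cited results in the standard way: Lee's theorem provides sublinear \emph{balanced separators} for string graphs, and the Dvo{\v{r}}{\'a}k--Norin theorem upgrades a uniform bound on balanced separators to a bound on treewidth.

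First I would recall the precise (weighted) form of Lee's separator theorem~\cite{lee2016separators}: for every string graph $G$ with $m$ edges and every vertex subset $W\subseteq V(G)$ there is a set $S\subseteq V(G)$ with $|S| = O(\sqrt m)$ such that every connected component of $G-S$ meets $W$ in at most $\tfrac{2}{3}|W|$ vertices; equivalently, for every weight function on $V(G)$ there is a $\tfrac{2}{3}$-balanced separator of order $O(\sqrt m)$. I would also recall the Dvo{\v{r}}{\'a}k--Norin theorem~\cite{DVORAK2019137}: there is an absolute constant $c$ such that if $G$ is a graph in which every induced subgraph admits, for every weight function, a $\tfrac{2}{3}$-balanced separator of order at most $s$, then $\tw(G)\leq c\cdot s$.

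The two statements are glued together using that string graphs form a hereditary class and that deleting vertices cannot increase the number of edges. Let $G$ be a string graph with $m$ edges and let $H$ be any induced subgraph of $G$; then $H$ is again a string graph with at most $m$ edges, so Lee's theorem gives, for every weight function on $V(H)$, a $\tfrac{2}{3}$-balanced separator of $H$ of order $O(\sqrt m)$, with the hidden constant independent of $H$. Thus the hypothesis of the Dvo{\v{r}}{\'a}k--Norin theorem holds with $s\in O(\sqrt m)$, and we conclude $\tw(G)\in O(\sqrt m)$.

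The only point needing care is that the two ``balanced separator'' notions agree: the Dvo{\v{r}}{\'a}k--Norin machinery requires separators that are balanced with respect to arbitrary vertex weightings of every subgraph, whereas a careless reading of a separator theorem might only guarantee balance with respect to the whole vertex set of the ambient graph. This is exactly why one invokes the weighted/subset version of Lee's theorem (which his argument does deliver) together with the hereditary closure of string graphs and edge-monotonicity of $m$. A secondary, cosmetic point is the exact balance ratio: any fixed constant in $(0,1)$ works, since changing it rescales the separator size, and hence the final treewidth bound, by only a multiplicative constant. With these alignments in place the statement follows immediately.
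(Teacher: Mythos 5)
Your proof is correct and is exactly the intended argument: the paper states this theorem purely as a citation of Lee's separator theorem together with the Dvořák--Norin balanced-separators-to-treewidth result, and the gluing you describe (hereditarity of string graphs, edge-monotonicity under induced subgraphs, and the weighted form of the separator theorem) is precisely the standard way to combine them. Nothing further is needed.
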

As a consequence of \autoref{th:mapedges} and \autoref{th:sep-m-tw} we get the following.
\begin{corollary}\label{cor:maptw}
Map graphs on $n$ vertices with clique number $\omega$ have treewidth $O(\sqrt{\omega n})$.
\end{corollary}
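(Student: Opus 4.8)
The final statement to prove is \autoref{cor:maptw}: map graphs on $n$ vertices with clique number $\omega$ have treewidth $O(\sqrt{\omega n})$.

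\medskip

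The plan is to simply combine the two results that immediately precede this corollary in the excerpt, namely \autoref{th:mapedges} (a map graph on $n$ vertices with clique number $\omega$ has at most $7\omega n$ edges) and \autoref{th:sep-m-tw} (any $m$-edge string graph has treewidth $O(\sqrt{m})$). First I would recall that every map graph is a string graph; this is a standard fact, since the regions defining a map graph are homeomorphic to disks and adjacency corresponds to boundary contact, so one can represent each vertex by a suitable Jordan arc inside its region reaching out to touch the arcs of its neighbours. Granting that, a map graph $G$ with $n$ vertices and clique number $\omega$ is in particular a string graph with $m \le 7\omega n$ edges by \autoref{th:mapedges}.

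\medskip

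Then applying \autoref{th:sep-m-tw} to $G$ with this bound on $m$ yields
\[
\tw(G) \in O(\sqrt{m}) \subseteq O\!\left(\sqrt{7\omega n}\right) = O(\sqrt{\omega n}),
\]
which is exactly the claimed bound. That is the entire argument; there are no hidden cases, since both ingredients are quoted as black boxes and the only thing to check is the (folklore) inclusion of map graphs in string graphs.

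\medskip

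I do not expect any real obstacle here. The one point that deserves a sentence of justification rather than being left implicit is precisely that map graphs are string graphs, so that \autoref{th:sep-m-tw} is applicable; everything else is a one-line substitution of the edge bound into the treewidth bound. If one wanted to avoid invoking the string-graph inclusion, an alternative would be to quote directly a balanced-separator result for map graphs (or bounded-degeneracy graphs with few edges) together with the Dvořák–Norin correspondence between separators and treewidth, but the route through \autoref{th:sep-m-tw} as stated is the shortest and is clearly the intended one given how the results are arranged in this section.
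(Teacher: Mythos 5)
Your proof is correct and is exactly the derivation the paper intends: it states the corollary "as a consequence of" the edge bound for map graphs and the $O(\sqrt{m})$ treewidth bound for $m$-edge string graphs, with the (folklore) fact that map graphs are string graphs implicit. Your explicit remark justifying that inclusion is the only addition, and it is a sound one.
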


\begin{proof}[Proof of \autoref{lem:pdg}]
Let $G$ be a pseudo-disk graph with $n$ vertices and clique number $\omega$.
By \autoref{th:vomi} the pseudo-disk graphs with $n$ vertices and clique number $\omega$ are $(3e\omega)$-degenerate, so by \autoref{th:chiba} they have $O((r-1)^2 (3e\omega)^{r-2} n)$ cliques of order less than $r$. So property \Pref{p:cliques} holds with $\phi = r-2$.
By \autoref{th:vomi} property \Pref{p:tw} is satisfied with $\alpha=1/2$ and
$\gamma=1/2$.
The proof for map graphs is very similar, using \autoref{th:mapedges} and \autoref{cor:maptw}.
\end{proof}

\subsection{String graphs}
We now move to string graphs where, as discussed in the introduction, forbidding large bicliques is necessary. Actually for $K_{t,t}$-subgraph free string graphs the branching of \autoref{lem:bkeps} to reduce the clique number is not necessary in the algorithm of \autoref{th:kick} as the number of small cliques and the treewidth are already suitably bounded. This explains the zeroes in \autoref{lem:string} hereafter.  In this subsection we will prove the following lemma.

\begin{lemma}\label{lem:string}
 $K_{t,t}$-subgraph-free string graphs have the property $P_r(0,0,1/2)$.
\end{lemma}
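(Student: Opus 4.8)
The plan is to establish the two conditions \Pref{p:cliques} and \Pref{p:tw} separately for $K_{t,t}$-subgraph-free string graphs, with $\phi=\gamma=0$ (so the bounds on the number of small cliques and on treewidth do not depend on $\omega$) and $\alpha=1/2$. The key external fact I would invoke is that $K_{t,t}$-subgraph-free string graphs are sparse: a string graph on $n$ vertices with no $K_{t,t}$ subgraph has $O_t(n)$ edges. This is a known result (it follows for instance from the work of Fox and Pach on string graphs with forbidden bicliques, or can be derived from the separator theorem of Lee together with an edge-count bound); I would cite the appropriate source as is done for the other classes. From a linear edge bound one immediately gets that such graphs are $O_t(1)$-degenerate, since every subgraph is again a $K_{t,t}$-subgraph-free string graph and hence has average degree $O_t(1)$, so a vertex of degree $O_t(1)$.

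For \Pref{p:cliques}: once we know the degeneracy is bounded by some constant $d = d(t)$, we apply \autoref{th:chiba} to conclude that the number of $i$-cliques is $O(i d^{i-1} n)$ for each $i < r$; summing over $i \in \intv{1}{r-1}$ gives $O_{t,r}(n)$ cliques of order less than $r$, which is $O(n)$ for fixed $r$ and $t$ and in particular does not depend on $\omega$. Hence \Pref{p:cliques} holds with $f(x) \in O(x^0) = O(1)$, i.e. $\phi = 0$. For \Pref{p:tw}: from the linear edge bound $m = O_t(n)$ and \autoref{th:sep-m-tw} (string graphs with $m$ edges have treewidth $O(\sqrt{m})$) we get $\tw(G) = O(\sqrt{m}) = O_t(\sqrt{n})$, which is of the form $g(\omega) n^{1/2}$ with $g(x) \in O(x^0) = O(1)$, i.e. $\gamma = 0$ and $\alpha = 1/2$.

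The main (and really only) obstacle is sourcing and stating correctly the linear upper bound on the number of edges of a $K_{t,t}$-subgraph-free string graph; everything downstream (degeneracy, clique count via Chiba–Nishizeki, treewidth via the separator theorem) is then routine and mirrors the pseudo-disk case of \autoref{lem:pdg} almost verbatim. I would phrase the edge bound as a cited theorem, analogous to \autoref{th:vomi} and \autoref{th:mapedges}, and then the proof of \autoref{lem:string} is two short paragraphs: one deriving bounded degeneracy and applying \autoref{th:chiba}, one applying \autoref{th:sep-m-tw}, concluding that property $P_r(0,0,1/2)$ holds.
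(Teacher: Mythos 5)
Your proof is correct and follows essentially the same route as the paper. The paper's proof simply cites Lee's theorem (\autoref{th:lee}), which for $K_{t,t}$-subgraph-free string graphs gives the degeneracy bound $O(t\log t)$ and the treewidth bound $O(\sqrt{n\,t\log t})$ in one statement, then combines the degeneracy bound with Chiba--Nishizeki (\autoref{th:chiba}) exactly as you do to get \Pref{p:cliques} with $\phi=0$, and reads off \Pref{p:tw} with $\alpha=1/2$, $\gamma=0$. You instead start from a linear edge bound and derive degeneracy from it, then invoke \autoref{th:sep-m-tw} for the treewidth; this is the same argument with the citation unpacked one step further (Lee's degeneracy bound is itself an edge-count statement, and Lee's treewidth bound is obtained from the separator/treewidth machinery). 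One small remark: your worry about whether arbitrary (non-induced) subgraphs of string graphs remain string graphs is moot for the degeneracy claim, since degeneracy is characterized by induced subgraphs having a low-degree vertex, and induced subgraphs of $K_{t,t}$-subgraph-free string graphs are again $K_{t,t}$-subgraph-free string graphs.
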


As a consequence we get the following result.

\begin{corollary}\label{cor:string}
There is an algorithm solving $K_r$-\cov{} in $K_{t,t}$-subgraph-free string graphs in time 
\[
2^{O_{t,r}(k^{2/3}\log k)} \cdot n^{O_r(1)}.
\]
\end{corollary}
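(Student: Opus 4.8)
The plan is to apply \autoref{th:kick} directly, once we know that $K_{t,t}$-subgraph-free string graphs satisfy property $P_r(0,0,1/2)$, which is exactly \autoref{lem:string}. So the real work is proving \autoref{lem:string}, and then checking that plugging $\phi=\gamma=0$ and $\alpha=1/2$ into the running-time formula of \autoref{th:kick} gives the claimed bound. For the latter: with $\phi=\gamma=0$ and $\alpha=1/2$ the exponent is $\eps = \frac{0 + (1/2)\cdot 2}{0 + (1/2)\cdot 1 + 1} = \frac{1}{3/2} = \frac{2}{3}$, so \autoref{th:kick} gives running time $2^{O_r(k^{2/3}\log k)}\cdot n^{O_r(1)}$; the dependence on $t$ enters through the constants hidden in property $P_r$ (the number of cliques and the treewidth bound both depend on $t$), hence the $O_{t,r}$ in the exponent. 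This is precisely \autoref{cor:string}.

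To prove \autoref{lem:string}, I would first invoke the known structural facts about $K_{t,t}$-subgraph-free string graphs. By a result of Fox and Pach (and refinements by Lee), such graphs are sparse: an $n$-vertex $K_{t,t}$-subgraph-free string graph has $O_t(n)$ edges, and in fact is $O_t(1)$-degenerate. Property \Pref{p:cliques} with $\phi=0$ then follows from \autoref{th:chiba}: since the degeneracy $d$ is bounded by a function of $t$ only (independent of $\omega$ and $n$), the number of $i$-cliques for each $i<r$ is $O(i d^{i-1} n) = O_{t,r}(n)$, so summing over $i\in\intv{1}{r-1}$ gives $O_{t,r}(n)$ cliques of order less than $r$; crucially this bound does not depend on $\omega$, which is why we may take $f\equiv$ constant, i.e. $\phi=0$. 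For property \Pref{p:tw}, since the graph has $O_t(n)$ edges, \autoref{th:sep-m-tw} gives treewidth $O(\sqrt{O_t(n)}) = O_t(\sqrt n) = O_t(n^{1/2})$, so \Pref{p:tw} holds with $\alpha=1/2$ and $\gamma=0$ (again no dependence on $\omega$). Hereditarity is immediate since both being a string graph and being $K_{t,t}$-subgraph-free are closed under taking induced subgraphs.

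The main obstacle — or rather the only nontrivial input — is the linear edge bound for $K_{t,t}$-subgraph-free string graphs (equivalently, the bounded degeneracy), which is not proved in the excerpt and must be cited from the literature; everything else is a routine substitution into already-established results (\autoref{th:chiba}, \autoref{th:sep-m-tw}, \autoref{th:kick}). One should also double-check that \autoref{th:kick} is stated for hereditary classes with property $P_r(\phi,\gamma,\alpha)$ and imposes no further hypotheses (e.g. it does not need the preliminary clique-reduction branching of \autoref{sec:covers} to do anything nontrivial when $\omega$ is already small); the remark preceding \autoref{lem:string} in the excerpt confirms this, observing that the $p$-clique branching is vacuous here. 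Finally I would note, as the paper does after \autoref{th:string}, that the $k^{2/3}$ exponent is essentially optimal: by \autoref{th:neg} item~(3), under ETH there is no $2^{o(\sqrt k)}$-time algorithm even for $K_3$-\cov{} on $K_{2,2}$-free contact-2-DIR graphs, which are $K_{t,t}$-subgraph-free string graphs.
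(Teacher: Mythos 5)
Your proposal is correct and follows essentially the same route as the paper: verify that $K_{t,t}$-subgraph-free string graphs are hereditary with property $P_r(0,0,1/2)$ and plug $\phi=\gamma=0$, $\alpha=1/2$ into \autoref{th:kick} to get the exponent $2/3$. The only cosmetic difference is that the external fact you flag as needing a citation (bounded degeneracy, and hence the clique and treewidth bounds) is already available in the paper as \autoref{th:lee}, which the paper combines with \autoref{th:chiba} exactly as you do, whereas you rederive the treewidth bound via the linear edge count and \autoref{th:sep-m-tw} instead of reading it off \autoref{th:lee} directly.
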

The contributions of $t$ and $r$ to the complexity in \autoref{cor:string} are not explicit due to the way we stated the bound of \autoref{th:kick} but can be extracted from the proof of this theorem.

\begin{theorem}[\cite{lee2016separators}]\label{th:lee}
For every $t\in \N$, $K_{t,t}$-subgraph-free string graphs on $n$ vertices have degeneracy $O(t \log t)$ and treewidth $O(\sqrt{n\cdot t \log t})$.
\end{theorem}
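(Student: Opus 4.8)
The final statement to prove is Theorem~\ref{th:lee}: that for every $t\in\N$, $K_{t,t}$-subgraph-free string graphs on $n$ vertices have degeneracy $O(t\log t)$ and treewidth $O(\sqrt{n\cdot t\log t})$. This is a structural (non-algorithmic) statement about string graphs, and the plan is to obtain both bounds from a single separator theorem for string graphs combined with an edge-count bound coming from the $K_{t,t}$-freeness.

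The plan is as follows. First I would recall (or cite) Lee's separator theorem for string graphs: every string graph with $m$ edges has a balanced separator of size $O(\sqrt{m})$. Next, the key point is to bound $m$ in terms of $n$ and $t$. A $K_{t,t}$-subgraph-free graph has, by the K\H{o}v\'ari--S\'os--Tur\'an theorem, at most $O(t^{1/t} n^{2-1/t})$ edges in general --- but that is far too weak here; instead one must use the fact that string graphs do not contain large complete bipartite subgraphs \emph{and} are ``geometrically sparse''. The right statement to invoke is that $K_{t,t}$-subgraph-free string graphs have $O(n \cdot t\log t)$ edges: this follows from Lee's work (or from Fox--Pach) on the fact that a $K_{t,t}$-free string graph on $n$ vertices has at most $c\, n\, t\log t$ edges. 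Granting this linear (in $n$) edge bound, I would then deduce degeneracy: any subgraph $H$ of $G$ on $n'$ vertices is again a $K_{t,t}$-subgraph-free string graph, hence has at most $c\, n'\, t\log t$ edges, so its average degree is at most $2c\,t\log t$, so it has a vertex of degree at most $2c\,t\log t = O(t\log t)$. This gives degeneracy $O(t\log t)$.

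For the treewidth bound I would combine the edge bound $m = O(n\, t\log t)$ with the separator theorem: $G$ has a balanced separator of size $O(\sqrt{m}) = O(\sqrt{n\,t\log t})$. Then I would invoke the standard fact (e.g. via the result of Dvo\v{r}\'ak--Norin, Theorem~\ref{th:sep-m-tw}-style, or the classical recursion) that a hereditary class in which every $n$-vertex member has a balanced separator of size $O(f(n))$ for a well-behaved $f$ (here $f(n)=\sqrt{n\,t\log t}$, sublinear and subadditive in the relevant sense) has treewidth $O(f(n))$. Concretely: recursively separate, and since the separator size is $O(\sqrt{n\,t\log t})$ with the $n$ shrinking geometrically at each level, the total accumulated separator size along any root-to-leaf path of the recursion is a geometric sum dominated by its first term $O(\sqrt{n\,t\log t})$, yielding treewidth $O(\sqrt{n\,t\log t})$. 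Since induced subgraphs of string graphs are string graphs and induced subgraphs of $K_{t,t}$-subgraph-free graphs are $K_{t,t}$-subgraph-free, the class is hereditary and the recursion stays inside it, so the edge bound applies at every level.

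The main obstacle is establishing the near-linear edge bound $m = O(n\, t\log t)$ for $K_{t,t}$-subgraph-free string graphs, which is the genuinely nontrivial input; everything else (deriving degeneracy from a linear edge bound, and deriving treewidth from a sublinear balanced-separator bound) is routine and standard. In the write-up I would simply cite Lee~\cite{lee2016separators} for this edge bound and the separator theorem (it is the same reference the paper already uses), and cite Dvo\v{r}\'ak--Norin~\cite{DVORAK2019137} for the separator-to-treewidth implication, exactly as in Theorem~\ref{th:sep-m-tw}; then the proof reduces to the two short deductions above. One point to be careful about: the balanced-separator-to-treewidth theorem is usually stated for a fixed sublinear function of $n$ alone, so I would treat $t$ as a constant (absorbed into the $O(\cdot)$) when applying it, which is legitimate since $t$ is fixed throughout.
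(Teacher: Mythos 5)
The paper does not prove this theorem; it attributes it entirely to Lee~\cite{lee2016separators} and moves on, so there is no in-paper argument to compare against. Your derivation is correct and is essentially the proof the citation implicitly relies on. The genuinely nontrivial input is, as you identify, the linear edge bound $m = O(n\, t\log t)$ for $K_{t,t}$-subgraph-free string graphs (Lee, improving Fox--Pach); from it, degeneracy $O(t\log t)$ follows by hereditariness and averaging exactly as you say, and the treewidth bound follows by plugging $m = O(n\,t\log t)$ into Theorem~\ref{th:sep-m-tw}. One small simplification: you worry about treating $t$ as a constant when applying the separator-to-treewidth implication, but this is unnecessary here. Theorem~\ref{th:sep-m-tw} is already stated in terms of the edge count $m$ alone (not $n$), so it applies verbatim once you have the edge bound, which is precisely how the paper itself derives Corollary~\ref{cor:maptw} for map graphs; no recursion or ``treat $t$ as constant'' caveat is needed. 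A consequence of your derivation worth noting is that the treewidth half of Theorem~\ref{th:lee} really rests on Dvo\v{r}\'ak--Norin~\cite{DVORAK2019137} as well as Lee, so the paper's single-source attribution is slightly compressed; your write-up makes the dependency chain explicit, which is a modest improvement in bookkeeping rather than a different route.
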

\begin{proof}[Proof of \autoref{lem:string}.]
Combining \autoref{th:chiba} and \autoref{th:lee} we get that the number of $(<r)$-cliques in $K_{t,t}$-subgraph-free string graphs is $O_r((t\log t)^{r-2} n)$, i.e.  \Pref{p:cliques} holds with $\phi=0$.
\autoref{th:lee} gives \Pref{p:tw} with $\alpha=1/2$ and $\gamma=0$.
\end{proof}

\subsection{Minor-closed classes}

 In this subsection we will prove the following lemma.

\begin{lemma}\label{lem:hminfree}
For every graph $H$, $H$-minor-free graphs have the property $P_r(0,0,1/2)$.
\end{lemma}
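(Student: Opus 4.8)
The plan is to verify the two conditions (P\ref{p:cliques}) and (P\ref{p:tw}) of property $P_r(0,0,1/2)$ for the class of $H$-minor-free graphs, relying on the standard structural facts about such graphs. Recall that for any fixed graph $H$, there is a constant $d_H$ such that every $H$-minor-free graph is $d_H$-degenerate (this follows from the fact that $K_t$-minor-free graphs have $O(t\sqrt{\log t}\,n)$ edges, with $t$ depending only on $H$). In particular the degeneracy is bounded by a constant independent of the number of vertices and of the clique number, which is exactly what we need for $\phi = 0$ and $\gamma = 0$.

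First I would establish (P\ref{p:cliques}). Since $H$-minor-free graphs are in particular string graphs (any graph excluding a fixed minor is planar-like enough — more carefully, one invokes that $H$-minor-free graphs have bounded degeneracy and applies a clique-counting bound directly, or notes they form a subclass where \autoref{th:chiba} applies since $H$-minor-free graphs with a suitable $H$ are string graphs; alternatively one uses the degeneracy bound directly). The cleanest route: a $d$-degenerate $n$-vertex graph has at most $\binom{d}{i-1}n$ cliques of size $i$ (order the vertices by the degeneracy ordering; each $i$-clique is determined by its last vertex and a choice of $i-1$ earlier neighbours, of which there are at most $d$). Summing over $i < r$ gives $O_r(d_H^{r-2}n)$ cliques of order less than $r$, which is $O(n)$ for fixed $r$ and fixed $H$ — so (P\ref{p:cliques}) holds with $f$ the constant function, i.e. $\phi = 0$. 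I would phrase this using \autoref{th:chiba} to match the style of \autoref{lem:string}, after recording that $H$-minor-free graphs have constant degeneracy.

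Next I would establish (P\ref{p:tw}). By the grid-minor / treewidth theory for minor-closed classes, $H$-minor-free $n$-vertex graphs have treewidth $O_H(\sqrt{n})$: they admit balanced separators of size $O_H(\sqrt{n})$ (Alon–Seymour–Thomas), and balanced separators of size $s$ on all subgraphs imply treewidth $O(s)$ (again via Dvo{\v{r}}{\'a}k–Norin, as used for \autoref{th:sep-m-tw}). Hence $\tw(G) \le g(\omega)\cdot n^{1/2}$ with $g$ the constant function, giving $\alpha = 1/2$ and $\gamma = 0$. Combining the two items yields property $P_r(0,0,1/2)$, proving \autoref{lem:hminfree}; feeding $\phi=\gamma=0$, $\alpha=1/2$ into \autoref{th:kick} gives $\eps = \frac{0 + (1/2)(2)}{0 + (1/2)(1)+1} = \frac{1}{3/2} = 2/3$, recovering the claimed $2^{O_{H,r}(k^{2/3}\log k)}\cdot n^{O_r(1)}$ bound of \autoref{th:minorfree}.

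The main obstacle is essentially bookkeeping rather than mathematics: one must be careful that the constants hidden in $O(\cdot)$ depend only on $H$ (and $r$) and not on $\omega$ or $n$, so that the functions $f$ and $g$ in the definition of $P_r(\phi,\gamma,\alpha)$ can genuinely be taken constant (degree-$0$ polynomials). This is automatic here because the degeneracy bound and the separator bound for $H$-minor-free graphs are uniform — they do not degrade with the clique number — which is precisely why $\phi$ and $\gamma$ vanish, in contrast with the pseudo-disk case of \autoref{lem:pdg} where the relevant bounds scale with $\omega$. A secondary minor point is to confirm that $H$-minor-free graphs are string graphs (or to avoid \autoref{th:chiba} altogether and use the elementary degeneracy clique-count above); I would choose whichever keeps the write-up shortest and most parallel to \autoref{lem:string}.
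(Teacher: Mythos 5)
Your overall approach is the same as the paper's: bound the degeneracy of $H$-minor-free graphs by a constant (depending only on $H$), deduce a linear bound on the number of cliques, and invoke Alon--Seymour--Thomas for the $O_H(\sqrt{n})$ treewidth bound. The paper cites Kostochka/Thomason for constant degeneracy and then Wood's theorem (\autoref{th:wood}: a $d$-degenerate graph on $n\geq d$ vertices has at most $2^d(n-d+1)$ cliques) to get the linear clique count, and \autoref{th:twsqrt} directly for the treewidth; that is a valid and slightly more compact variant of your elementary degeneracy-ordering count $\binom{d}{i-1}n$ per clique size.

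One genuine error in your write-up that you should excise before committing: $H$-minor-free graphs are \emph{not} in general string graphs. For instance the $1$-subdivision of $K_5$ is $K_6$-minor-free but, being a full subdivision of a non-planar graph, is not a string graph. So the route through \autoref{th:chiba} (which the paper states only for string graphs) is not available here, and the parenthetical ``any graph excluding a fixed minor is planar-like enough'' is a trap. You noticed the issue yourself and provided the correct fallback --- the elementary count via the degeneracy ordering (or Wood's theorem, as the paper does) --- so simply commit to that and drop the string-graph detour.
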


As a consequence we get the following result.

\thminorfree*

To prove \autoref{lem:hminfree}, we first need to state some external results.

\begin{theorem}[see \cite{wood2007maximum}]\label{th:wood}
Every $d$-degenerate graph with $n\geq d$ vertices has at most $2^d(n-d+1)$ cliques.
\end{theorem}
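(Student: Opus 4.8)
The plan is to use a degeneracy elimination ordering and charge each clique to its least vertex. Since $G$ is $d$-degenerate, repeatedly deleting a vertex of minimum degree produces an ordering $v_1, \dots, v_n$ of $V(G)$ such that each $v_i$ has at most $d$ neighbours among $v_{i+1}, \dots, v_n$; write $N^+(v_i)$ for this set of later neighbours, so that $|N^+(v_i)| \le d$ and, trivially, $|N^+(v_i)| \le n-i$.

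The key observation is that every non-empty clique $K$ of $G$ has a unique vertex $v_i$ of smallest index, and then $K \setminus \{v_i\} \subseteq N^+(v_i)$ (because the remaining vertices of $K$ are neighbours of $v_i$ with larger index). Conversely, for any $i$ and any $S \subseteq N^+(v_i)$, the set $\{v_i\} \cup S$ is a clique whose least vertex is $v_i$. Hence the non-empty cliques with least vertex $v_i$ are in bijection with the subsets of $N^+(v_i)$, so there are exactly $2^{|N^+(v_i)|} \le 2^{\min(d,\, n-i)}$ of them.

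Summing over $i$ finishes the count. For $i \in \intv{1}{n-d}$ we bound the exponent by $d$, which contributes at most $(n-d)2^d$; for $i \in \intv{n-d+1}{n}$ the exponent $n-i$ runs through the values $d-1, d-2, \dots, 0$, contributing $\sum_{j=0}^{d-1} 2^j = 2^d - 1$. Adding the single empty clique gives at most
\[
(n-d)2^d + (2^d-1) + 1 = 2^d(n-d+1)
\]
cliques in total (and even $2^d(n-d+1)-1$ if the empty clique is not counted), which is the claimed bound; note this also uses the hypothesis $n \ge d$ so that the first range of indices is well defined.

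This is essentially careful bookkeeping: the only point beyond the naive estimate $n \cdot 2^d$ is the refinement for the last $d$ vertices of the ordering, whose later-neighbourhoods are forced to be small. I do not expect any genuine obstacle.
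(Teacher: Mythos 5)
The paper does not give a proof of this theorem; it is cited from Wood~\cite{wood2007maximum}, and your argument is essentially the standard one (and is Wood's own): order by degeneracy, charge each clique to its earliest vertex, bound the later-neighbourhood size by $\min(d, n-i)$, and sum. The arithmetic is correct and yields exactly $2^d(n-d+1)$.

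There is, however, one incorrect sentence: the claim that ``for any $i$ and any $S \subseteq N^+(v_i)$, the set $\{v_i\}\cup S$ is a clique.'' This is false in general, because $N^+(v_i)$ need not induce a clique. For example in $C_4$ with degeneracy ordering $a,b,c,d$ (edges $ab,bc,cd,da$), $N^+(a)=\{b,d\}$ but $\{a,b,d\}$ is not a clique. So the cliques with least vertex $v_i$ are in \emph{injection} with subsets of $N^+(v_i)$, not bijection, and there are \emph{at most} $2^{|N^+(v_i)|}$ of them rather than ``exactly.'' Fortunately, the upper bound is all the argument uses, so the error is purely expository and the stated bound still follows; you should simply delete the ``Conversely'' sentence and replace ``exactly'' with ``at most.''
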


\begin{theorem}[\cite{kostochka1986minimum, thomason1984extremal}]\label{th:kosto}
For every $h$-vertex graph $H$ there is a constant $d = O(h \sqrt{\log h})$ such that $H$-minor-free graphs are $d$-degenerate.
\end{theorem}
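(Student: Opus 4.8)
The plan is to reduce the statement, via two routine steps, to the classical extremal bound on clique minors. First, since $H$ has only $h$ vertices, $H$ is a subgraph and hence a minor of $K_h$, so by transitivity of the minor relation every $H$-minor-free graph is $K_h$-minor-free; it therefore suffices to bound the degeneracy of $K_h$-minor-free graphs. Second, because both $d$-degeneracy and the property of being $K_h$-minor-free are preserved under taking subgraphs, it is enough to find $d = O(h\sqrt{\log h})$ such that every $K_h$-minor-free graph has a vertex of degree at most $d$: applying this repeatedly to the successive subgraphs left after deleting such a vertex yields a degeneracy ordering, and the bound passes back to $H$-minor-free graphs since these are in particular $K_h$-minor-free.

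Next I would rephrase the remaining claim in terms of edge density. Let $c(h)$ denote the supremum of $|E(G)|/|V(G)|$ over all $K_h$-minor-free graphs $G$. A graph of minimum degree strictly greater than $2c(h)$ has more than $c(h)\cdot|V(G)|$ edges, hence contains a $K_h$-minor by definition of $c(h)$; consequently every $K_h$-minor-free graph has a vertex of degree at most $2c(h)$, and we may take $d = \lceil 2c(h)\rceil$. So everything comes down to proving $c(h) = O(h\sqrt{\log h})$, i.e.\ that every $n$-vertex graph with more than $O(h\sqrt{\log h})\cdot n$ edges has a $K_h$-minor. This is exactly the theorem of Kostochka~\cite{kostochka1986minimum} and Thomason~\cite{thomason1984extremal}, which I would invoke directly; the matching lower bound (so that the estimate for $d$ is tight up to constants) comes from disjoint unions of dense pseudo-random graphs on $\Theta(h\sqrt{\log h})$ vertices, whose largest complete minor has order $\Theta(h)$ and can be tuned below $h$.

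For completeness I would recall the shape of the proof of $c(h) = O(h\sqrt{\log h})$. One first replaces the graph by a subgraph of minimum degree $\Omega(h\sqrt{\log h})$, obtained by iteratively deleting vertices of small degree. Then one splits into cases: if there is a not-too-large set of vertices with every pair sharing many common neighbors, the $h$ pairwise adjacent connected branch sets of a $K_h$-minor can be built greedily; otherwise one chooses the branch sets at random (for instance via a random coloring of the vertices, or as small random connected pieces) and estimates, with a concentration inequality, the probability that two prescribed branch sets have no edge between them. Demanding that this probability beat a union bound over the $\binom{h}{2}$ pairs forces the branch sets to have size $\Theta(\sqrt{\log h})$, and the relation between branch-set size, the number $h$ of branch sets, and the minimum degree is precisely what produces the extra $\sqrt{\log h}$ factor.

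The hard part is this last probabilistic step: one must guarantee connectivity of the random branch sets and control the dependencies among the events ``no edge between two prescribed branch sets'', which is where the arguments of Kostochka and Thomason do genuine work — and the reason I would simply cite them rather than reproduce the proof here. By contrast, the two reductions above, from $H$-minors to $K_h$-minors and from degeneracy to the extremal edge-density function $c(h)$, are elementary.
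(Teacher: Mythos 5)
The paper cites this statement directly to Kostochka and Thomason and gives no proof of its own, so there is nothing in the paper to compare against in detail. Your reductions are correct and standard: $H$-minor-free implies $K_h$-minor-free since $H\subseteq K_h$ and the minor relation is transitive; degeneracy of a minor-closed (indeed, subgraph-closed) class is controlled by its extremal edge density via the average-degree argument you give (min degree $>2c(h)$ forces $|E|>c(h)|V|$, hence a $K_h$-minor, so every $K_h$-minor-free graph and every subgraph of it has a vertex of degree $\le 2c(h)$); and the bound $c(h)=O(h\sqrt{\log h})$ is exactly the Kostochka--Thomason theorem. Your sketch of the probabilistic core of that theorem (random branch sets of size $\Theta(\sqrt{\log h})$, union bound over $\binom{h}{2}$ pairs, with the dichotomy between a dense local structure and the random construction) is a fair summary of where the real work lies, and you are right that it is appropriate to cite rather than reproduce it. In short: the proposal is correct, and it supplies the elementary bookkeeping that the paper elides by citing the result as a black box.

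One small point worth flagging: your parenthetical remark about the \emph{lower} bound (disjoint unions of dense pseudo-random graphs showing the $h\sqrt{\log h}$ estimate is tight) is true but not needed for the statement, which only asserts an upper bound on degeneracy; you could drop it without loss.
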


\begin{theorem}[\cite{Alon1990separator}]\label{th:twsqrt}
For every graph $H$, $n$-vertex $H$-minor-free graphs have treewidth $O_H(\sqrt{n})$.
\end{theorem}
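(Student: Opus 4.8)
The plan is to deduce the treewidth bound from the Alon--Seymour--Thomas separator theorem for minor-closed classes, combined with the standard fact that a uniform bound on the order of balanced separators of all subgraphs forces a treewidth bound of the same order. First I would record the separator statement: writing $h=|V(H)|$, every $H$-minor-free graph is $K_h$-minor-free (containing $K_h$ as a minor would force \emph{every} $h$-vertex graph, in particular $H$, as a minor), so it suffices to prove the separator theorem for $K_h$. By \cite{Alon1990separator} there is a constant $c_H$ such that every $K_h$-minor-free graph on $m$ vertices has a set $S$ of at most $c_H\sqrt{m}$ vertices which is $\tfrac{2}{3}$-balanced with respect to any prescribed vertex weighting (in particular with respect to vertex count), i.e.\ every component of $G-S$ carries at most $\tfrac23$ of the total weight. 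Next I would observe that the class of $H$-minor-free graphs is closed under taking subgraphs, so \emph{every} subgraph of an $n$-vertex $H$-minor-free graph $G$ admits a $\tfrac23$-balanced separator of order at most $c_H\sqrt{n}$ (using the monotone bound $c_H\sqrt{n}$ uniformly, since subgraphs have at most $n$ vertices). Feeding this uniform bound $s=c_H\sqrt{n}$ into the generic balanced-separator-to-treewidth implication of Dvo{\v r}{\'a}k and Norin \cite{DVORAK2019137} (which yields $\tw(G)=O(s)$ whenever all subgraphs of $G$ admit balanced separators of order at most $s$) gives $\tw(G)=O_H(\sqrt{n})$, as claimed.

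If one prefers to avoid the black box of \cite{DVORAK2019137}, the separator-to-treewidth step can be reproved directly by the usual recursive decomposition. One maintains a piece $G[A]$ together with an already-placed interface $W\subseteq A$; one picks a separator $S$ of $G[A]$ of order at most $c_H\sqrt{n}$ that is simultaneously $\tfrac23$-balanced for vertex count and for the indicator of $W$, puts $W\cup S$ in the current bag, and recurses on $G[C\cup S]$ with interface $S\cup(W\cap C)$ for each component $C$ of $G[A]-S$. Then along every root-to-leaf path both the piece size and the accumulated interface size shrink geometrically, so each bag has size $|W\cup S|=O(\sqrt{n})$, giving the tree decomposition of width $O_H(\sqrt n)$.

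The main obstacle is the separator theorem itself, which is the genuine content, so I would only sketch it and otherwise defer to \cite{Alon1990separator}. After reducing to $H=K_h$, one uses that $K_h$-minor-free graphs are $O(h\sqrt{\log h})$-degenerate (\autoref{th:kosto}) and hence sparse, runs a breadth-first search from an arbitrary vertex to obtain layers, and argues that if no balanced separator of order $O_h(\sqrt{n})$ existed then some ``annulus'' between two suitably chosen BFS layers would be both large and highly connected, which would let one contract it into a $K_h$ minor once $n$ is large relative to $h$. Making the connectivity/contraction step precise, and in particular extracting the clean $O_H(\sqrt{n})$ dependence rather than a weaker strongly-sublinear bound, is the delicate part; the sharp extremal dependence of the separator order on $h$ is a further refinement I would not pursue, since any $O_H(\sqrt{n})$ bound already suffices for \autoref{th:twsqrt}.
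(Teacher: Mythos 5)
The paper itself does not prove this statement; it imports it wholesale by citing \cite{Alon1990separator}. So there is no ``paper proof'' to compare against, only the question of whether your derivation correctly bridges the gap between what that reference actually proves (a balanced-separator theorem) and what is claimed here (a treewidth bound). It does, and it follows the standard route.

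Two comments. First, the reduction to $K_h$ and the appeal to the AST separator theorem are correct, and you are right that the separator-to-treewidth step is not in \cite{Alon1990separator} itself and must be supplied; both the Dvo{\v r}{\'a}k--Norin black box and your direct recursion handle it. In fact, for a subgraph-closed class with separators of order $O(\sqrt{m})$ on $m$-vertex pieces, the direct recursion is the cleaner choice: the separator sizes along a root-to-leaf path form a geometric series (because piece sizes drop by a constant factor and the separator bound is sublinear), so the $O(\sqrt n)$ bag bound falls out without invoking the sharper ``$\tw = O(s)$ from uniform $s$-separators'' theorem, which is really needed only when $s$ does not shrink with the piece. You do need to keep the interface $W$ under control via simultaneous balancing, as you say; it is worth being explicit that one balances with respect to $\sqrt{n_i}$ for the current piece, not the global $\sqrt n$, to get the geometric decay of $|W|$. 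Second, your closing sketch of the AST separator proof (BFS layers, an ``annulus'' that is large and well-connected, contract to $K_h$) reads more like the Lipton--Tarjan planar argument than what Alon, Seymour, and Thomas actually do; their proof proceeds via a different recursive ``shrinking'' argument on a carefully chosen set and does not go through BFS annuli. Since you explicitly defer to the citation for that part, this does not affect the correctness of your derivation of \autoref{th:twsqrt}, but the sketch as written would not reconstruct their theorem.
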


\begin{proof}[Proof of \autoref{lem:hminfree}]
As a consequence of \autoref{th:kosto} and \autoref{th:wood}, $H$-minor-free graphs have a linear number of cliques (regardless of their clique number).
Hence they satisfy \Pref{p:cliques} with $\phi=0$. By \autoref{th:twsqrt} they also satisfy \Pref{p:tw} with $\alpha=1/2$ and $\gamma=0$.
\end{proof}

\section{Open problems}\label{sec:open}
As discussed in the introduction, our main result provides a generic way to obtain subexponential parameterized algorithms for $K_r$-\cov{}, which can in particular be applied to several graph classes for which such algorithms were know from previous work (for \trh{}, the special case $r=3$). Nevertheless there is still a gap between the running times of these different applications of our algorithm and the best time bounds for these specific classes. One can for instance compare our \autoref{th:pseudo} (resp.\ \autoref{th:minorfree}) with the previous results corresponding to items \ref{e:oh} and~\ref{e:contact} of \autoref{th:pos} (resp.\ \autoref{th:apexfree}). It would be nice to match these known bounds, or to improve them when possible. More generally we can ask about the infimum $\eps$ such that $K_r$-\cov{} can be solved in time $2^{O(k^\eps)}n^{O(1)}$ in the classes we considered. We recall that under ETH, $K_3$-\cov{} cannot be solved in time $2^{o(\sqrt{n})}$ (so $\eps\geq 1/2$) even for a very restricted subclass of string graphs (\autoref{th:neg}).

A second research direction is to understand for which graphs $H$ our results can extended to the $H$-\cov{} problem (where one wants to hit any subgraph isomorphic to $H$).
In disk graphs for example, it is already known \cite{lokSODA22} that there exists subexponential FPT algorithms for $P_\ell$-\cov{} when $\ell \le 5$.

Recall that in this paper we gave sufficient conditions for a hereditary graph class to admit a subexponential FPT algorithm for $K_r$-\cov{}. It remains an open problem to characterize such classes.

\bigskip
%\bibliography{clacli} \bibliographystyle{alpha}
\newcommand{\etalchar}[1]{$^{#1}$}

\end{document}